\definecolor{citegreen}{HTML}{208054}
\definecolor{citeblue}{HTML}{0055cc}
\theoremstyle{plain}
\newtheorem{theorem}{Theorem}[section]
\newtheorem{lemma}[theorem]{Lemma}
\newtheorem{corollary}[theorem]{Corollary}
\newtheorem{proposition}[theorem]{Proposition}
\theoremstyle{definition}
\newtheorem{definition}[theorem]{Definition}
\newtheorem{problem}[theorem]{Problem}
\newtheorem{remark}[theorem]{Remark}
\renewcommand{\O}{\mathcal{O}}
\newcommand{\Ot}{\widetilde{\mathcal{O}}}
\newcommand{\Omegat}{\widetilde{\Omega}}
\newcommand{\Thetat}{\widetilde{\Theta}}
\DeclareMathOperator{\poly}{poly}
\DeclareMathOperator{\polylog}{polylog}
\newcommand{\R}{\mathbb{R}}
\newcommand{\C}{\mathbb{C}}
\newcommand{\I}{\mathbb{I}}
\renewcommand{\Re}{\operatorname{Re}}
\renewcommand{\i}{\mathrm{i}}
\newcommand{\U}{\mathrm{U}}
\newcommand{\Cl}{\mathrm{Cl}}
\DeclareMathOperator{\supp}{supp}
\DeclareMathOperator*{\E}{\mathbb{E}}
\DeclareMathOperator{\V}{Var}
\DeclareMathOperator{\tr}{tr}
\newcommand{\op}[2]{\ket{#1}\!\bra{#2}}
\newcommand{\ip}[2]{\langle #1 | #2 \rangle}
\newcommand{\melem}[3]{\langle #1 | #2 | #3 \rangle}
\newcommand{\ev}[2]{\melem{#2}{#1}{#2}}
\newcommand{\ket}[1]{| #1 \rangle}
\newcommand{\bra}[1]{\langle #1 |}
\newcommand{\comp}{\mathsf{C}}
\let\oldl\l
\renewcommand{\l}[1]{\mathopen{}\left#1}
\renewcommand{\r}[1]{\right#1\mathclose{}}
\newcommand{\prep}{\mathsf{PREP}}
\newcommand{\sel}{\mathsf{SEL}}
\DeclareMathOperator{\alg}{\mathcal{A}}
\newcommand{\Pauli}[1]{\mathcal{P}_{#1}}
\newcommand{\pchoi}{\Psi}
\newcommand{\pchoiref}{\pchoi_{\mathrm{ref}}}
\newcommand{\norm}{\mathcal{N}}
\newcommand{\ctrl}[2]{\texttt{ctrl}_{#1}(#2)}
\newcommand{\ttotal}{t_{\mathrm{total}}}
\newcommand{\tmin}{t_{\mathrm{min}}}
\newcommand{\Nexp}{N_{\mathrm{exp}}}
\newcommand{\nanc}{n_{\mathrm{anc}}}
\newcommand{\email}[1]{\href{mailto:#1}{\color{black}{\texttt{#1}}}}
\title{%
Learning the structure of any Hamiltonian \\from minimal assumptions
}
\author{%
Andrew Zhao\\
\email{azhao@sandia.gov}\\
Sandia National Laboratories
}
\date{April 21, 2025}
\begin{document}

\maketitle

\begin{abstract}
    We study the problem of learning an unknown quantum many-body Hamiltonian $H$ from black-box queries to its time evolution $e^{-\mathrm{i} H t}$. Prior proposals for solving this task either impose some assumptions on $H$, such as its interaction structure or locality, or otherwise use an exponential amount of computational postprocessing. In this paper, we present algorithms to learn any $n$-qubit Hamiltonian, which do not need to know the Hamiltonian terms in advance, nor are they restricted to local interactions. Our algorithms are efficient as long as the number of terms $m$ is polynomially bounded in the system size $n$. We consider two models of control over the time evolution:~the first has access to time reversal ($t < 0$), enabling an algorithm that outputs an $\epsilon$-accurate classical description of $H$ after querying its dynamics for a total of $\widetilde{\mathcal{O}}(m/\epsilon)$ evolution time. The second access model is more conventional, allowing only forward-time evolutions;~our algorithm requires $\widetilde{\mathcal{O}}(\|H\|^3/\epsilon^4)$ evolution time in this setting. Central to our results is the recently introduced concept of a pseudo-Choi state of $H$. We extend the utility of this learning resource by showing how to use it to learn the Fourier spectrum of $H$, how to achieve nearly Heisenberg-limited scaling with it, and how to prepare it even under our more restricted access models.
\end{abstract}

\newpage

\tableofcontents

\newpage

\section{Introduction}\label{sec:intro}

Identifying an unknown Hamiltonian is a central task in quantum physics. As the Hamiltonian fully characterizes the interactions within any closed system, learning it provides a tractable description of a potentially highly complex quantum system. This is in contrast to the more general task of quantum process tomography, which learns the dynamical map itself but is exponentially expensive in the size of the system, even when the system is closed~\cite{baldwin2014quantum,bavaresco2022unitary,haah2023query}. Applications of Hamiltonian learning include quantum metrology and sensing~\cite{giovannetti2011advances,degen2017quantum}, quantum device engineering~\cite{shulman2014suppressing,innocenti2020supervised}, and certifying quantum computations~\cite{wiebe2014hamiltonian}. By developing more accurate and efficient methods for understanding quantum systems, we can enable the design and control of such systems with finer precision.

Historically, Hamiltonian learning draws from the subject of quantum metrology~\cite{ramsey1950molecular,caves1981quantum,bollinger1996optimal}, extending to heuristic protocols for learning many-body models with familiar types of interactions~\cite{granade2012robust,wiebe2014quantum,holzapfel2015scalable,wang2017experimental,bairey2019learning}. Recently, there has been a surge of interest in learning Hamiltonians of large many-body systems with more complicated interactions and rigorous theoretical guarantees~\cite{anshu2021sample,haah2024learning,gu2024practical,huang2023learning,caro2024learning,dutkiewicz2023advantage,bakshi2024learning,bakshi2024structure}. This body of work substantially broadens the applicability of this task, especially as experimental capabilities in engineering large-scale quantum devices continues to improve. From a theoretical perspective, such works explore fundamental questions in the learnability of quantum systems, such as:~what classes of Hamiltonians can be learned? How efficiently and accurately can this task be carried out? And what level of control or access to the system suffices? This paper aims to shed some light on these foundational questions.

Broadly speaking, an unknown Hamiltonian is accessed either through its static or dynamical behavior. In the former, we are tasked with deducing the Hamiltonian when the system is at equilibrium~\cite{alhambra2023quantum}. This is typically given through its Gibbs state at some temperature~\cite{anshu2021sample,haah2024learning,gu2024practical,bakshi2024learning}, although learning from eigenstates has also been considered~\cite{qi2019determining}. In the latter model, we are given black-box access to the time evolution operator $e^{-\i H t}$ of the unknown Hamiltonian $H$, for some controllable amount of time $t$. Again, we are tasked to output a classical description of $H$, this time given the ability to prepare some initial probe states, evolving them under the dynamics of $H$, and measuring in some bases. Both versions of Hamiltonian learning have different merits and challenges, and are appropriate depending on the experimental scenario at hand. For example, the static model is more experimentally friendly, arising in many natural settings where one has limited control of the system. On the other hand, dynamical access requires finer experimental control but can serve as a more powerful learning resource~\cite{haah2024learning,huang2023learning}. Learning from the time evolution operator also has close connections to applications in quantum simulation and algorithms. For the remainder of this paper, we will focus exclusively on the dynamical paradigm.

In this context, the primary cost of a Hamiltonian learning algorithm is the total evolution time, $t_\mathrm{total}$. This is the total amount of time under which we evolve by $e^{-\i H t}$. We can understand this cost in relation to the fractional- or continuous-query model~\cite{berry2014exponential}, wherein queries to the time evolution operator cost less for smaller $t$. Note that this is part of, but not the same as, the total time complexity of the algorithm, which includes the cost of additional quantum and classical computation (before, during, and after querying the system). In this work, we will not closely inspect the details of the computational complexity;~we will only be concerned whether an algorithm is computationally efficient (polynomially scaling in all parameters) or not. Other figures of merit include the time resolution $\tmin$, which dictates how finely an experimenter must be able to control the time evolution;~$\Nexp$, the total number of prepare--evolve--measure experiments to run;~and $\nanc$, the number of ancilla qubits used either to perform supporting quantum computation/control or to coherently store queries in quantum memory.

Most Hamiltonian learning results operate under the assumption that the structure of the unknown Hamiltonian is fully or at least partially known. The structure of a Hamiltonian $H = \sum_{a=1}^m \lambda_a E_a$ is the set of $m$ terms $E_a$, typically chosen from some operator basis, for which the coefficients $\lambda_a$ are nonzero. For some algorithms, the learning guarantees can be lost in the worst case when $H$ has even a single unidentified term~\cite{haah2024learning,huang2023learning}. For others, the algorithm may succeed at learning the parameters of a partially known structure, but there is no guarantee for efficiently learning the unidentified terms~\cite{dasilva2011practical,bairey2019learning,zubida2021optimal}. To address this deficiency, Bakshi, Liu, Moitra, and Tang~\cite{bakshi2024structure} formalized the problem of Hamiltonian \emph{structure} learning, and they provided an efficient algorithm to solve it. Specifically, when $H$ is a constant-local $n$-qubit Hamiltonian (with no restriction on the range of its local interactions) they showed that it is possible to identify all local terms $E_a$ with $|\lambda_a| > \epsilon$, and furthermore estimate the values of all such $\lambda_a$ to additive error $\epsilon$. In terms of evolution time, their algorithm costs $\ttotal = \O(r \log(n) / \epsilon)$, where $r$ is an effective sparsity parameter of the qubits' interaction graph.

With nearly optimal scaling, this breakthrough result essentially completely solves the problem for local spin Hamiltonians. However, it remained an open question as to how one performs structure learning for even broader classes of Hamiltonians. For example, suppose that we assume $H$ is $k$-local, but there are in fact a few $k'$-local terms, where $k' > k$. This can occur, for instance, by (undesired) spin-squeezing effects~\cite{katz2022body}. In this setting, the algorithm of \cite{bakshi2024structure} can learn all the $k$-local terms, but not the $k'$-local terms. Furthermore, even if one were given the value of $k'$, if say $k' = \omega(1)$ then the complexity of their algorithm would be superpolynomial.

More generally, we pose the scenario where $H$ is completely arbitrary, with no locality restriction on its terms. Even if $m \leq \poly(n)$, it is unclear how to identify these terms from a pool of $2^{\O(n)}$ candidate terms within polynomial time. \cite{bakshi2024structure} refers to this problem as the ``sparse, nonlocal'' setting and suggests that their algorithm could be applied here as well;~however it is unclear how to avoid an exponential amount of computation using either their Goldreich--Levin-like algorithm~\cite{goldreich1989hardcore} or general shadow tomography ideas~\cite{aaronson2020shadow,buadescu2021improved}. To our knowledge, there are only a handful of proposals in the literature which can operate in this setting that at least avoid an exponential number of experiments~\cite{yu2023robust,caro2024learning,castaneda2023hamiltonian}. However, the algorithms of Caro~\cite{caro2024learning} and Castaneda and Wiebe~\cite{castaneda2023hamiltonian} would use an exponential amount of classical computation to solve this task, essentially brute-force checking over all possible terms. Meanwhile, the analysis by Yu, Sun, Han, and Yuan~\cite{yu2023robust} assumes that the $m$ terms are distributed uniformly among the set of Pauli operators, and therefore only guarantees that their algorithm succeeds with high probability (when $1/m$ is sufficiently small) over the draw of $H$ from this ensemble of Hamiltonians. Their numerics suggest good heuristic performance but does not rigorously guarantee the ability to structure-learn any given sparse, nonlocal Hamiltonian in the worst case.

We therefore ask:~can efficient structure learning be accomplished in this setting, and if so, what is the minimal amount of prior information required? Because we require an efficient algorithm, we only consider Hamiltonians for which $m \leq \poly(n)$, as otherwise merely writing down the description of $H$ is intractable. As an example of prior information, consider again the algorithm of \cite{bakshi2024learning}. Implicit in their problem input is a bound on the effective sparsity and a ``local one-norm'' of $H$ (i.e., the sum of strengths of all terms involving any given qubit). Bounds on these quantities are necessary in order to choose appropriate values for $\tmin$ and $\Nexp$. On the other hand, this aspect is trivial in non-structure-learning settings because knowing the structure is already more than sufficient to establish such bounds. We answer both questions by describing structure learning algorithms which essentially only require a bound on $m$ to run correctly. As long as $m \leq \poly(n)$, our algorithms are efficient in all metrics.

\subsection{Main results}

In this paper, we propose two algorithms for learning the structure of any unknown $n$-qubit Hamiltonian. Let us begin by providing a technical definition for what we mean by the structure of a Hamiltonian.

\begin{definition}
    An $n$-qubit \emph{Hamiltonian} is a Hermitian $2^n \times 2^n$ matrix
    \begin{equation}
        H = \sum_{a=1}^{4^n} \lambda_a E_a,
    \end{equation}
    where $E_a \in \Pauli{n} = \{\I, X, Y, Z\}^{\otimes n}$ are the unique \emph{Pauli terms} of $H$, and $\lambda_a \in \R$ are the \emph{parameters} of $H$. The labeling $a \in [4^n]$ is arbitrary;~defining the \emph{Pauli structure} of $H$ as the set $S = \{(E_a, \lambda_a) : E_a \neq \I \text{ and } \lambda_a \neq 0\}$, we take convention that the labels $a = 1, \ldots, |S|$ correspond to the elements of $S$. As a slight abuse of notation, we refer to $\lambda \in \R^{|S|}$ as the vector of nonzero parameters, which we can pad with zeros as necessary.
\end{definition}

Our definition of structure is understood with respect to an operator basis. In this paper we consider the Pauli basis $\Pauli{n}$, as it is arguably the most natural choice for studying many-qubit systems. Other choices of basis are also possible in principle, as long as they admit an efficient classical description;~such a choice should be thought of as part of the definition to the learning problem. In this paper, we will exclusively consider the Pauli structure, which is a quantum analogue to the Fourier spectrum of a classical Boolean function. We will be primarily concerned with \emph{sparse} ($|S| \leq \poly(n)$), \emph{nonlocal} ($\supp(E_a) \leq n$) Hamiltonians in this basis.

Note that our definition of $S$ explicitly excludes the identity component, as it merely constitutes an energy shift. This shift contributes to the time evolution $e^{-\i H t}$ as a global phase, and is therefore fundamentally unobservable within our black-box access models. However, let us point out that we will not be restricted to traceless Hamiltonians as inputs;~our techniques will essentially extract the traceless part of any given Hamiltonian ``for free.''

With these considerations in mind, let us define the learning problem as follows.

\begin{problem}\label{prob:1}
    Let $H = \sum_{a} \lambda_a E_a$ be an unknown $n$-qubit Hamiltonian such that $\|\lambda\|_\infty \leq 1$, $S$ its Pauli structure, and $\epsilon \in (0, 1)$ a desired accuracy parameter. In the \emph{Hamiltonian structure learning problem}, we are given as input:~query access to applications of $e^{-\i H t}$ for tunable $t$ and an integer $m \geq |S|$. We are tasked with outputting a description $\widehat{S} = \{(E_a, \widehat{\lambda}_a) : a = 1, 2, \ldots\}$, where no $E_a = \I$, that obeys the following properties with probability at least $2/3$:
    \begin{enumerate}
        \item if $|\lambda_a| > \epsilon$, then $(E_a, \widehat{\lambda}_a) \in \widehat{S}$;
        \item if $(E_a, \widehat{\lambda}_a) \in \widehat{S}$, then $|\lambda_a| > \epsilon/2$;
    \end{enumerate}
    with each $\widehat{\lambda}_a$ satisfying $|\widehat{\lambda}_a - \lambda_a| \leq \epsilon$. 
\end{problem}

The $2/3$ success probability is conventional;~by a standard median-boosting argument, any algorithm that solves this problem can be repeated $\O(\log(1/\delta))$ times to amplify the probability above $1 - \delta$~\cite{jerrum1986random}. For simplicity we have assumed that $m$ is a bound on the number of precisely nonzero terms in $H$;~the arguments in this paper will hold with minor modifications if we relax $m \to m_\epsilon$, the number of terms with $|\lambda_a| > \epsilon$, provided the additional promise that $\|\lambda\|_1 \leq \poly(n)$. Finally, we remark that the normalization guarantee $\|\lambda\|_\infty \leq 1$ is a promise on the largest Pauli coefficient in $H$, which essentially allows the experimentalist to choose the appropriate units of time and energy (c.f.~the expression $e^{-\i Ht}$).

The particular type of access to $e^{-\i H t}$ that we are given can significantly affect the learning guarantees for solving \cref{prob:1}. At a minimum, we will suppose discrete control over the Hamiltonian in the sense of \cite{dutkiewicz2023advantage}. This is the standard type of quantum control assumed in most settings.

\begin{definition}
    \emph{Discrete quantum control} is the ability to run the circuit
    \begin{equation}
        V_L (e^{-\i H t_L} \otimes \I) V_{L-1} \cdots V_1 (e^{-\i H t_1} \otimes \I) V_0,
    \end{equation}
    where $V_0, V_1, \ldots, V_L$ are discrete control operations (quantum gates) acting jointly on the $n$-qubit system and some $\nanc$ ancilla qubits, and the $t_j \geq 0$ are some sequence of tunable evolution times.
\end{definition}

Without some form of quantum control, \cite{dutkiewicz2023advantage} showed that any Hamiltonian learning algorithm cannot learn with precision beyond the standard quantum limit $\Omega(1/\epsilon^2)$.\footnote{Up to some technical considerations about $H$.} Along with discrete control, one of our algorithms uses an additional form of control to beat the standard limit:~the ability to implement time reversal.

\begin{definition}
    Access to \emph{time-reversal dynamics} is the ability to apply $e^{-\i H t}$ for both $t \geq 0$ and $t < 0$.
\end{definition}

For example, systems possessing a chiral symmetry $\Gamma$ obey $\Gamma H \Gamma^\dagger = -H$, which immediately gives rise to time-reversal dynamics. Alternatively, in the context of algorithmic certification, one typically has direct control over $t$ as a tunable, real parameter. The power of time reversal in learning was previously considered in \cite{schuster2023learning,cotler2023information,schuster2024random}, where exponential separations for certain unitary learning tasks were established between protocols with access to time reversal and those without. However, these results do not necessarily imply such a separation in our setting (i.e., learning Hamiltonians with sparse structures).

With access to discrete quantum control and time-reversal dynamics, we prove the following theorem.

\begin{theorem}[Structure learning with time reversal]\label{thm:tr_learning}
    There exists a quantum algorithm that solves \cref{prob:1} under the discrete control and time-reversal access model. The algorithm queries $e^{-\i H t}$ for a total evolution time of $\ttotal = \Ot(m/\epsilon)$, requires a time resolution of $|t| \geq \tmin = \Omegat(\epsilon/m^3)$, runs $\Nexp = \Ot(m^2 \log(1/\epsilon))$ experiments, and uses $\poly(n, m, 1/\epsilon)$ quantum and classical computation. The discrete control operations act on an ancillary computational register of $\nanc = n + \O(\log m)$ qubits.
\end{theorem}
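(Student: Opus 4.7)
My plan is to leverage the \emph{pseudo-Choi state} $\ket{\Psi_H}$ of $H$ --- a normalized pure state whose Bell-basis amplitudes are proportional to the Pauli coefficients, $\ip{\Phi_{E_a}}{\Psi_H} = \lambda_a / \|\lambda\|_2$ --- as the central learning resource, and to solve \cref{prob:1} in three stages: (i) prepare $\ket{\Psi_H}$ from time-reversed evolutions; (ii) Bell-sample $\ket{\Psi_H}$ to identify every Pauli with $|\lambda_a| > \epsilon$; and (iii) amplitude-estimate each surviving coefficient at Heisenberg scaling on the same state.

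Stage (i) is the technical core. I would first construct an approximate block encoding of $H$ from the time-reversal oracle: a Hadamard-controlled superposition of $e^{-\i H t}$ and $e^{+\i H t}$ realizes a unitary whose off-diagonal block is $\i \sin(Ht)$, giving a block encoding of $\sin(Ht)$ at subnormalization $1$. For $t \ll 1/\|H\|$ one has $\sin(Ht)/t = H + \O(\|H\|^3 t^2)$; choosing $t = \Thetat(\epsilon/m^3)$ drives this error to $\epsilon/m$ and accounts for the stated $\tmin$ (using $\|H\| \leq m$, which follows from $\|\lambda\|_\infty \leq 1$). Feeding this block encoding into a pseudo-Choi preparation subroutine --- applied to a maximally entangled state on the $n$-qubit system and an $n$-qubit ancillary register, followed by amplitude amplification against the success flag --- produces copies of $\ket{\Psi_H}$ whose per-copy cost is controlled by $\|H\|/\|\lambda\|_2 \leq m$.

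Stages (ii) and (iii) are then more routine. A Bell-basis measurement on $\ket{\Psi_H}$ returns outcome $E_a$ with probability $p_a = \lambda_a^2/\|\lambda\|_2^2 \geq \epsilon^2/m$ for every Pauli with $|\lambda_a| > \epsilon$, using $\|\lambda\|_2^2 \leq m$. A coupon-collector bound then shows that $\Ot(m^2 \log(1/\epsilon))$ Bell samples suffice to observe every such term at least once, matching the claimed $\Nexp$; spurious low-frequency outcomes are pruned in postprocessing. For (iii), each surviving candidate $E_a$ is assigned an estimate $\widehat{\lambda}_a$ by amplitude-estimating the projector $\op{\Phi_{E_a}}{\Phi_{E_a}}$ on copies of $\ket{\Psi_H}$, achieving additive error $\epsilon$ with $\Ot(1/\epsilon)$ uses of the preparation circuit. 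Summed across the $\Ot(m)$ candidates and combined with the per-copy evolution-time budget from Stage (i), this composes to $\ttotal = \Ot(m/\epsilon)$.

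The main obstacle is carrying out Stage (i) without an overhead polynomial in $\|H\|$ or $2^n$: the naive construction $\ket{\Psi_H} \propto (H \otimes \I)\ket{\Omega}$ applied to the maximally entangled state $\ket{\Omega}$ has an exponentially small post-selection probability, and amplitude amplification must be budgeted carefully so that the per-copy evolution time, multiplied by the total copy count across Stages (ii) and (iii), still sums to $\Ot(m/\epsilon)$ while every individual query respects $|t| \geq \tmin$. Equivalently, one must show that the weaker accuracy demanded by Bell sampling and the tighter accuracy demanded by amplitude estimation can each be met by appropriately scaled subroutines. Extending the pseudo-Choi resource to the time-reversal access model, and to Heisenberg-limited precision for the coefficients themselves, is where the genuine novelty lies.
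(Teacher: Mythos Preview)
Your Stage~(ii) arithmetic does not close. With $p_a = \lambda_a^2/\|\lambda\|_2^2 \geq \epsilon^2/m$, the coupon-collector bound gives $\Ot(\|\lambda\|_2^2\log(m)/\epsilon^2) = \Ot(m/\epsilon^2)$ Bell samples, not $\Ot(m^2\log(1/\epsilon))$. Since each copy of $\ket{\Psi_H}$ costs at least $\Omega(1/\|\lambda\|_2)$ evolution time under any preparation scheme (amplitude-amplified or not), Stage~(ii) alone burns $\Omegat(\sqrt{m}/\epsilon^2)$ evolution time in the favourable regime and far more in others. Your amplitude-estimation trick in Stage~(iii) does not rescue this: identifying the support is a separate $1/\epsilon^2$ bottleneck that term-by-term estimation never touches.

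The paper removes this $1/\epsilon^2$ via a mechanism absent from your proposal: an \emph{iterative bootstrap on residual Hamiltonians}. One maintains a running estimate $\widehat{H}_j$ with $\|\lambda-\widehat{\lambda}^{(j)}\|_\infty \leq 2^{-j}$, block-encodes the residual $R_j = H - \widehat{H}_j$ (as a difference of the $\arcsin$-QSVT block encoding of $H/\Delta$ and the known LCU for $\widehat{H}_j/\Delta$), and then applies \emph{uniform spectral amplification} to boost $R_j \mapsto R_j/\eta_j$ at a cost of $\Ot(1/\eta_j)$ queries. Both structure identification and parameter estimation are then run on the amplified residual at only \emph{constant} precision $\epsilon_0 = 1/2$, costing $\Ot(m^2)$ experiments and $\Ot(m/\eta_j)$ evolution time per round. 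Summing over the $\lfloor\log_2(1/\epsilon)\rfloor$ rounds yields $\ttotal = \Ot(m/\epsilon)$ and $\Nexp = \Ot(m^2\log(1/\epsilon))$; Heisenberg scaling comes from spectral amplification, not from amplitude estimation on individual coefficients.

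Two smaller points. First, your small-$t$ Taylor block encoding $\sin(Ht)\approx Ht$ is much weaker than the paper's $\arcsin$-QSVT construction, which achieves spectral error $\varepsilon$ with $\O(\log(1/\varepsilon))$ queries at the fixed resolution $t = 1/\Delta$; your choice $t = \Thetat(\epsilon/m^3)$ makes the encoded operator exponentially more subnormalized and is not the source of $\tmin$ in the theorem. Second, the discrete-control model does \emph{not} hand you $\ctrl{}{e^{\pm\i Ht}}$: the ``Hadamard-controlled superposition'' you invoke, and the Grover reflections inside amplitude estimation, both require controlled evolutions. The paper synthesizes these via a qDRIFT-based controlization scheme, and it is the resulting product-formula error---not the Taylor remainder---that forces $\tmin = \Omegat(\epsilon/m^3)$.
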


This algorithm nearly achieves the Heisenberg limit of quantum metrology, up to a factor of $\log(1/\epsilon)$ hidden in the $\Ot(\cdot)$ notation.

In more restricted experimental settings, it is unrealistic to assume the ability to perform time reversal. Therefore, we consider a second access model that is more conventional: only having discrete quantum control and forward-time evolutions.

\begin{theorem}[Structure learning without time reversal]\label{thm:tf_learning}
    There exists a quantum algorithm that solves \cref{prob:1} under the discrete control access model alone. The algorithm queries $e^{-\i H t}$ for a total evolution time of $\ttotal = \Ot(\|H\|^3/\epsilon^4)$, requires a time resolution of $t \geq \tmin = \Omegat(\epsilon^4/\|H\|^5)$, runs $\Nexp = \Ot(\|H\|^4/\epsilon^4)$ experiments, and uses $\poly(n, m, 1/\epsilon)$ quantum and classical computation. The discrete control operations act on an ancillary computational register of $\nanc = n + \O(\log\log(\|H\|/\epsilon))$ qubits.
\end{theorem}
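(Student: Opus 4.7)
The plan is to reuse the algorithmic skeleton behind \Cref{thm:tr_learning}---prepare the pseudo-Choi state $\pchoi$ of $H$, then apply the Bell-sampling-style structure identification and coefficient-estimation measurements on it---while substituting the one subroutine that genuinely requires time-reversed dynamics: the preparation of $\pchoi$ itself. If one can produce $\widetilde\pchoi$ that is $\delta$-close to $\pchoi$ using only forward-time evolutions $e^{-\i H t}$ with $t \geq 0$, then the downstream guarantees degrade by only $\O(\delta)$ in the resulting measurement statistics, so the task reduces to controlling $\delta$ and bounding the resources needed to prepare $\widetilde\pchoi$.

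For this forward-only preparation, I would exploit the first-order expansion $e^{-\i H \tau} = \I - \i H \tau + \O(\tau^2\|H\|^2)$ at short times $\tau$ to build an approximate block-encoding of $H$. A Hadamard-test-style gadget on a control ancilla implementing a controlled $e^{-\i H \tau}$, with the identity component subtracted off in a companion branch without evolution, produces a block-encoding whose top-left block equals $H$ up to spectral error $\O(\tau\|H\|^2)$ at subnormalization $\alpha = \O(1/\tau)$. Applying this block-encoding to one half of a maximally entangled state on the system and a fresh $n$-qubit reference register, then performing amplitude amplification on the block-encoding's success flag, yields a state whose success branch approximates $\pchoi \propto (H \otimes \I)\kket{\I}$ to trace distance $\O(\tau\|H\|^2)$ (up to the amplification tolerance). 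The per-copy evolution cost is the product of $\tau$ and the number of amplification rounds, the latter of order $\alpha \cdot 2^{n/2}/\|H\|_F$.

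The main obstacle is balancing three error sources simultaneously. The block-encoding bias $\O(\tau\|H\|^2)$ must be driven well below $\epsilon$ before being magnified by amplitude amplification, forcing $\tau$ to be polynomially small in both $\epsilon$ and $1/\|H\|$. The sampling variance of the Bell-sampling-based coefficient estimators has to be suppressed to additive accuracy $\epsilon$, setting $\Nexp$. And the Goldreich--Levin-style identification subroutine (inherited from the proof of \Cref{thm:tr_learning}) has to certify which Pauli terms have $|\lambda_a| > \epsilon$ from samples of $\widetilde\pchoi$ with high probability. Tuning $\tau$ to balance the bias--variance tradeoff and accounting for the total evolution time across all experiments gives the stated $\ttotal = \Ot(\|H\|^3/\epsilon^4)$, $\tmin = \Omegat(\epsilon^4/\|H\|^5)$, and $\Nexp = \Ot(\|H\|^4/\epsilon^4)$. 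The ancilla overhead $\nanc = n + \O(\log\log(\|H\|/\epsilon))$ comes from the $n$ reference qubits plus the control registers for the Hadamard-test gadget and the amplitude-amplification loop.
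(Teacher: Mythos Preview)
Your proposal has a genuine gap at its central step: amplitude amplification requires the inverse of the state-preparation unitary. To amplify the success branch of your block encoding, you would need to apply both the block-encoding circuit and its adjoint, and the adjoint contains $e^{+\i H\tau}$---precisely the time-reversal query the access model forbids. The paper confronts this obstruction explicitly: after constructing its forward-only block encoding via the truncated matrix-logarithm series $L_K(U)=\sum_{j=0}^K c_j U^j$, it observes that without $U^\dagger$ one cannot perform spectral (or amplitude) amplification to boost the subnormalization, and instead accepts the low success probability, compensating with more preparation attempts. That is the actual source of the $\epsilon^{-4}$ scaling: the LCU normalization $\Lambda=\O(\Delta/\epsilon)$ inflates the shadow-tomography copy complexity by $\Lambda^2$, giving $\Ot(\Lambda^2\Delta^2/\epsilon^2)=\Ot(\Delta^4/\epsilon^4)$ queries.

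Two secondary issues reinforce the gap. First, your Hadamard-test gadget presumes a controlled $e^{-\i H\tau}$, which is also outside the access model; the paper devotes \cref{sec:controlization} to approximating $\ctrl{}{e^{-\i H t}}$ via qDRIFT-based Pauli twirling from uncontrolled forward evolutions, and this controlization is what fixes $\tmin$. Second, your accounting for $\nanc=n+\O(\log\log(\|H\|/\epsilon))$ does not match your construction: a Hadamard test plus amplitude amplification needs $\O(1)$ ancillas, not $\O(\log\log)$. In the paper that doubly-logarithmic term arises from the $\lceil\log_2(K+1)\rceil$ control qubits for the LCU $\sel$ operator, with $K=\Theta(\log(\Delta/\epsilon))$ terms in the logarithm series---a structure your first-order Taylor approach does not have.
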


Above, we have phrased the complexity in terms of the operator norm $\|H\|$, mirroring related results in the literature~\cite{caro2024learning,castaneda2023hamiltonian}. Without knowledge of $\|H\|$, we may simply apply the bound $\|H\| \leq m$ instead (which holds due to our normalization convention $\|\lambda\|_\infty \leq 1$). We remark that it is not possible to make the reverse replacement in \cref{thm:tr_learning} with our current analysis.

The proof of \cref{thm:tf_learning} can in fact be strengthened to almost recover the conventional $1/\epsilon^2$ scaling in the precision. Although a strict asymptotic improvement in every metric, the resulting algorithm becomes ``galactic'':~for example, to get a scaling of $1/\epsilon^{2.01}$, constants on the order of $10^{60}$ appear.\footnote{On the other hand, smaller speedups feature more reasonable constants. For instance, achieving $\ttotal = \Ot(\|H\|^2/\epsilon^3)$ only incurs a factor of $4$ relative to the hidden constants of \cref{thm:tf_learning}.} We therefore treat this result as a curious observation, and we will defer any further discussion and analysis to \cref{sec:galactic_algorithm}.

\begin{corollary}[Galactic structure learning without time reversal]\label{cor:galactic_algorithm}
    There exists a quantum algorithm that solves \cref{prob:1} under the discrete control access model alone. For any real constant $p \geq 1$, the algorithm queries $e^{-\i H t}$ for a total evolution time of $\ttotal = \Ot((2^p \|H\|)^{1+2/p}/\epsilon^{2+2/p})$, requires a time resolution of $t \geq \tmin = \Omegat(\epsilon^4/(2^p \|H\|)^5)$, runs $\Nexp = \Ot((2^p \|H\|)^{2+2/p}/\epsilon^{2+2/p})$ experiments, and uses $\poly(n, m, 1/\epsilon)$ quantum and classical computation. The discrete control operations act on an ancillary computational register of $\nanc = n + \O(\log\log(\|H\|/\epsilon))$ qubits.
\end{corollary}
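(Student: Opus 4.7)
The plan is to revisit the proof of \cref{thm:tf_learning} and expose the polynomial-approximation degree that, in the main text, is effectively fixed. Without time-reversal access, one cannot directly produce $e^{+\i H t}$, so the pseudo-Choi preparation of $H$ must synthesize an odd polynomial of $H$ (morally $\sin(Ht)/t$) from queries to $e^{-\i H (kt)}$ at nonnegative integer multiples $k = 0, 1, \ldots, p$. The approximation implicit in \cref{thm:tf_learning} corresponds essentially to $p = 1$; the corollary allows the user to tune $p$, trading a steeper leading constant against improved asymptotic exponents in both $\|H\|$ and $\epsilon$.

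First, I would construct a degree-$p$ trigonometric polynomial $P_p(\theta) = \sum_{k=0}^{p} c_k e^{-\i k \theta}$ satisfying $P_p(\theta) = -\i \theta + \O(\theta^{p+1})$ near the origin, for instance via Richardson extrapolation or Newton forward-difference coefficients. Standard bounds give $\|c\|_1 = \O(2^p)$, so a linear combination of unitaries implements a block encoding of $H/(2^p \|H\|)$ with polynomial truncation error $\O((\|H\| t)^{p+1})$, using $\lceil \log_2(p+1)\rceil = \O(1)$ additional ancilla qubits beyond what \cref{thm:tf_learning} already requires. Feeding this block encoding into the downstream pseudo-Choi preparation and estimation routines of \cref{thm:tf_learning}, I would then balance the polynomial truncation error against the shot noise, optimizing over $t$. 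The LCU subnormalization $2^p \|H\|$ enters every bound in the role previously played by $\|H\|$, while the polynomial truncation error $(\|H\| t)^{p+1}$ relaxes the constraint on $t$ by a $1/(p+1)$-power of $\epsilon$. Combining these two effects is what produces the exponents $1 + 2/p$ on the base $2^p \|H\|$ and $2 + 2/p$ on $1/\epsilon$. The resolution $\tmin$ inherits the same $\epsilon^4/\|H\|^5$ dependence as in \cref{thm:tf_learning}, with $\|H\|$ replaced by $2^p \|H\|$.

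The main obstacle will be quantifying how the LCU subnormalization $2^p \|H\|$ couples into the structure-recovery step without degrading the $\epsilon$-scaling further. Because the pseudo-Choi amplitudes are proportional to $\lambda_a$, any extra damping from the block encoding directly inflates the number of shots needed to resolve a parameter above threshold $\epsilon$, and tracking this interaction carefully is what forces the $2 + 2/p$ exponent on $1/\epsilon$ rather than something larger. A secondary subtlety is to place the LCU's oracle queries at integer multiples of a single base step $t$, so that $\tmin$ does not pick up an extra factor of $p$; combined with the observation that the additional $\O(\log p)$ ancilla are absorbed into the $\nanc = n + \O(\log\log(\|H\|/\epsilon))$ budget whenever $p$ is a constant, this preserves the remaining parameters of \cref{thm:tf_learning}.
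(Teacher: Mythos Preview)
The paper's proof is much simpler than your plan and takes a different route. It keeps the full truncated matrix-log series $L_K(e^{-\i H/\Delta})$ from the proof of \cref{thm:tf_learning} intact and changes only one parameter: the normalization $\Delta$, from $2\|H\|$ to $2^p\|H\|$. With $\|e^{-\i H/\Delta}-\I\|\le 2^{-p}$ the series now converges at rate $2^{-p}$ per term, so the truncation level needed for error $\gamma=\Theta(\epsilon/\Delta)$ drops to $K\sim (1/p)\log(\Delta/\epsilon)$, hence $\Lambda=O(2^K)=O((\Delta/\epsilon)^{1/p})$. The stated exponents then fall straight out of $N_{\exp}=\Lambda^2\Delta^2/\epsilon^2$ and $\ttotal\sim N_{\exp}/\Delta$, and $\tmin$ is read off from \cref{thm:controlization_time_resolution} with the new $\Delta$.

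Your proposal is a genuinely different algorithm: freeze the LCU degree at $p$ and tune the time step $t$. Two things go wrong in the writeup. First, you describe \cref{thm:tf_learning} as ``essentially $p=1$,'' i.e.\ a degree-$1$ approximation; it is not---that theorem already uses degree $K=\Theta(\log(\Delta/\epsilon))$, and the $p=1$ case of the corollary refers to $\Delta=2^1\|H\|$, not to the polynomial order. Second, your LCU block-encodes $Ht/2^p$, so the effective normalization on $H$ is $2^p/t$, not the $2^p\|H\|$ you assert (those coincide only at $t=1/\|H\|$, where the truncation error is $O(1)$). Balancing $\|H\|^{p+1}t^{p}\lesssim\epsilon$ gives $t\sim\epsilon^{1/p}/\|H\|^{1+1/p}$, and carrying this through yields $\ttotal=\Ot(4^p\,\|H\|^{1+1/p}/\epsilon^{2+1/p})$, with exponents $1+1/p$ and $2+1/p$ rather than the $1+2/p$ and $2+2/p$ of the corollary. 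So your route does not recover the statement as written; it may yield a related (in places tighter) bound, but that would be a different result requiring its own $\tmin$ analysis rather than the inherited one you claim.
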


\subsection{Related work}

In this section we give an overview of prior works on Hamiltonian learning from real-time evolution, as well as some related topics. The literature on Hamiltonian learning is vast and so this discussion will not be comprehensive. \cref{tab:results} catalogs various prior results and shows how they compare to our algorithms. In the table, we say an algorithm can achieve structure learning (SL) depending on the assumed class of Hamiltonians. This is a subtle but important point:~any algorithm under the ``sparse, nonlocal'' category (regardless of whether it is labeled as SL-capable or not) can learn the structure of any constant-local Hamiltonian, simply by searching over the entire $n^{\O(1)}$-sized pool of candidate terms.\\

\begin{table}
    \centering
    \makebox[\linewidth][c]{%
    \begin{tabular}{l l l l c l l l}
        \toprule
        Reference & Hamiltonian class & SL? & Control? & $n_\mathrm{anc}$ & $\ttotal$ & $\tmin$ & $\Nexp$\\
        \hline
        \cite{zubida2021optimal}$^*$ & Geometrically local & Yes & None & 0 & $1/\epsilon^3$ & $\epsilon$ & $1/\epsilon^4$\\
        \cite{stilck2024efficient} & ------ & No & None & $0$ & $1/\epsilon^{2}$ & $\frac{1}{\polylog(1/\epsilon)}$ & $1/\epsilon^2$\\
        \cite{haah2024learning} & Low intersection & No & None & $0$ & $1/\epsilon^{2}$ & $1$ & $1/\epsilon^2$\\
        \cite{huang2023learning} & ------ & No & Discrete & $0$ & $1/\epsilon$ & $\sqrt{\epsilon}$ & $\polylog(1/\epsilon)$\\
        \cite{dutkiewicz2023advantage} & ------ & No & Continuous & $0$ & $1/\epsilon$ & $1$ & $\log(1/\epsilon)$\\
        \cite{bakshi2024structure} & \begin{tabular}{@{}l@{}}
            Effectively sparse, \\bounded strength
        \end{tabular} & Yes & Discrete & $0$ & $1/\epsilon$ & $1$ & $\log(1/\epsilon)$\\
        \cite{yu2023robust} & Sparse, nonlocal & Yes$^\dagger$ & Discrete & $0$ & $n\|H\|^3/\epsilon^{4}$ & $1/\|H\|$ & $n\|H\|^4/\epsilon^4$\\
        \cite{caro2024learning} & ------ & No$^\ddagger$ & Discrete & $n + \|H\|^2/\epsilon^2$ & $\|H\|^3/\epsilon^{4}$ & $1/\|H\|$ & $\|H\|^4/\epsilon^{4}$\\
        \cite{odake2024higher} & ------ & No & Discrete & $1$ & $m/\epsilon$ & $\epsilon^2/\|H\|^2$ & $m \log(1/\epsilon)$\\
        \cite{castaneda2023hamiltonian}$^\S$ & ------ & No$^\ddagger$ & $\ctrl{}{e^{\pm\i H t}}$ & $n$ & $\|H\|/\epsilon^{2}$ & $1/\|H\|$ & $\|H\|^2/\epsilon^{2}$\\
        \cite{castaneda2023hamiltonian}$^\S$ & ------ & No & $\ctrl{}{e^{\pm\i H t}}$ & $n + m$ & $\sqrt{m} /\epsilon$ & $1/\|H\|$ & $\|H\| \sqrt{m}/\epsilon$\\
        
        \hline\hline
        
        Here & Sparse, nonlocal & Yes & $\ctrl{}{e^{\pm \i H t}}$ & $n + \log m$ & $m/\epsilon$ & $1/m$ & $m^2 \log(1/\epsilon)$\\
        Here & ------ & Yes & Discrete, $\pm t$ & $n + \log m$ & $m/\epsilon$ & $\epsilon/m^3$ & $m^2 \log(1/\epsilon)$\\
        Here$^\P$ & ------ & Yes & Discrete & $n + \log\log(\|H\|/\epsilon)$ & $\|H\|^3/\epsilon^4$ & $\epsilon^4/\|H\|^5$ & $\|H\|^4/\epsilon^4$\\
        \bottomrule
    \end{tabular}
    }
    \caption{Nonexhaustive list of algorithms for learning an $n$-qubit Hamiltonian from time evolution. If the operator norm is not known, it can be replaced with $m \geq \|H\|$. Each Hamiltonian class in the table is subsumed by the one below it. Subdominant constants and polylogarithmic factors are suppressed.\\
    $^*$The structure-learning analysis was demonstrated posthoc in \cite{bakshi2024structure}.\\
    $^\dagger$Their analysis assumes that the $m$ unknown terms are distributed uniformly among all Pauli operators.\\
    $^\ddagger$SL can be achieved if given exponential classical resources (also incurs a factor of $n$ in $t_\mathrm{total}$ and $N_\mathrm{exp}$).\\
    $^\S$Their results were originally stated with $\ell_2$-error, which we have converted to $\ell_\infty$-error here.\\
    $^\P$Can be improved to $\ttotal = \|H\|^{1+o(1)}/\epsilon^{2+o(1)}$ and $\Nexp = \|H\|^{2+o(1)}/\epsilon^{2+o(1)}$, with potentially enormous constants.}
    \label{tab:results}
\end{table}

\textbf{Algorithms capable of (local) structure learning.} While \cite{bakshi2024structure} first put forth the explicit problem of Hamiltonian structure learning, they also recognized that prior results could be reanalyzed within this context. At the core of many of these results, including \cite{bakshi2024structure} itself, is the so-called time-derivative estimator~\cite{shabani2011estimation,dasilva2011practical,zubida2021optimal,stilck2024efficient}. This technique is based on the relation $\tr(P_a e^{-\i H t} \rho_a e^{\i H t}) \approx 2 \lambda_a t$ for small $t$, where $\rho_a = \frac{1}{2^n}(\I + \i E_a P_a)$ is a mixed state defined by any Pauli operator $P_a$ that anticommutes with the term $E_a$ to be learned. This relation does not depend on the locality of the terms, and therefore even a naive term-by-term approach can estimate any sparse, nonlocal Hamiltonian in polynomial time. However, achieving \emph{structure} learning by this approach requires that the size of the pool of candidate terms be polynomially bounded, which we explicitly do not assume in our setting. Furthermore, parallelizing the estimation of many $E_a$'s per experiment relies on a classical-shadows-like argument that only works for local terms~\cite{huang2020predicting}.

The algorithms of Caro~\cite{caro2024learning} and Castaneda and Wiebe~\cite{castaneda2023hamiltonian} experience similar advantages to, as well as limitations of, the aforementioned strategies. However, one particular benefit they enjoy over the time-derivative approaches is their use of shadow tomography (the methods of \cite{chen2022exponential} and \cite{huang2020predicting}, respectively) which improves the parallel estimation of many, potentially nonlocal, observables. These algorithms require more sophisticated quantum resources however, needing at least $n$ additional qubits to prepare the (pseudo-)Choi states central to their strategies. Notably, \cite{caro2024learning} requires a quantum memory of size $\Ot(\|H\|^2/\epsilon^2)$ on top of the $n$ Choi qubits, which is inherited from the gentle measurement step of \cite{chen2022exponential}. Meanwhile, \cite{castaneda2023hamiltonian} proposes two approaches to parameter estimation, the first being the classical shadows of Huang, Kueng, and Preskill~\cite{huang2020predicting}. Their second proposal uses the gradient-estimation-based algorithm of \cite{huggins2022nearly}, which achieves better scaling $\ttotal = \Ot(\sqrt{m}/\epsilon)$ but requires a large ancilla register of size $\Ot(m)$. Furthermore, both approaches of \cite{castaneda2023hamiltonian} require the ability to apply $\ctrl{}{e^{\pm \i H t}}$. Either way, both \cite{caro2024learning} and \cite{castaneda2023hamiltonian} are also limited in their structure-learning capabilities, again only giving efficient algorithms if the pool of candidate terms is promised to be polynomially bounded.

Other algorithms of note include that of Yu, Sun, Han, and Yuan~\cite{yu2023robust}, which is based on sparse Pauli channel learning~\cite{flammia2020efficient}. To our knowledge, this is the only prior algorithm which can specifically target sparse, nonlocal Hamiltonians while maintaining efficient scaling. However, their rigorous analysis requires them to assume that the $m$ Pauli terms are uniformly distributed over all possible $\binom{4^n}{m}$ combinations. Their results therefore only hold with high probability over the draw of $H$. The algorithm of Odake, Kristj\'{a}nsson, Soeda, and Murao~\cite{odake2024higher} applies a linear transformation to the Hamiltonian to isolate each term $E_a$, thereby allowing them to learn the parameter $\lambda_a$ via robust phase estimation~\cite{kimmel2015robust}. Iterating this procedure over all known terms furnishes the Heisenberg-limited estimation of $H$. One particular downside to their approach is that their $\tmin = \Omega(\epsilon^2/\|H\|^2)$, where one factor is incurred to approximate the linear transformation, and then another factor is taken to get appropriate guarantees on the robust phase estimation.

Finally, Guti\'{e}rrez~\cite{gutierrez2024simple} describes an algorithm to learn a $k$-local Hamiltonian with some unique properties. First, it is strongly limited to $k = \O(1)$ because of the use of a noncommutative Bohnenblust--Hille inequality~\cite{huang2023learningprocesses,volberg2024noncommutative}. However, this inequality is also what allows it to achieve $\ell_2$-error without explicit dependence on $m$. In contrast, all other attempts to get $\ell_2$-error pay the standard factor of $\sqrt{m}$ for converting from the $\ell_\infty$-norm. Assuming $\|H\| \leq 1$, his algorithm has query complexity $\exp(\O(k^2 + k \log(1/\epsilon)))$. Unfortunately this features rather large constants in the exponents:~unraveling the analysis for unnormalized $\|H\|$ shows that $\ttotal = \O((\|H\|/\epsilon)^{15k + 17})$ and $\tmin = \Omega((\epsilon/\|H\|)^{k + 1})$. The learning algorithm also requires applications of $\ctrl{}{e^{-\i H t}}$, inherited from the use of the quantum Goldreich--Levin algorithm of \cite{montanaro2010quantum} as a subroutine.\\

\textbf{Learning Pauli spectra.} Prior works have also studied the task of learning an operator's Pauli structure, sometimes referred to as its Pauli or Fourier spectrum, given query access to that operator~\cite{atici2007quantum,montanaro2010quantum,chen2023testing,angrisani2023learning,gutierrez2024simple,bakshi2024structure}. These can be understood as quantum analogues to the Goldreich--Levin algorithm, which allows one to efficiently estimate any Fourier coefficient of a Boolean function from queries to that function~\cite{goldreich1989hardcore}. Most works study the Pauli spectrum of a unitary operator, developing algorithms to test and learn unitaries acting on only a constant subsystem of qubits (or more generally, possessing a constant amount of influence over the system)~\cite{atici2007quantum,montanaro2010quantum,chen2023testing,angrisani2023learning}.

Most relevant to the problem of Hamiltonian learning are the algorithms of Bakshi, Liu, Moitra, and Tang~\cite{bakshi2024structure} and Guit\'{e}rrez~\cite{gutierrez2024simple}, which approximate the spectrum of $H$ from that of $e^{-\i H t}$ using series approximations at small $t$. Both results are limited to local Hamiltonians, featuring explicit exponential scalings with respect to the locality of the Pauli terms being learned. In contrast, our algorithm works with an object called the pseudo-Choi state of $H$~\cite{castaneda2023hamiltonian}, which gives direct access to the Pauli spectrum of $H$ rather than deducing it from its time evolution unitary. As such, our approach is not hindered by any locality considerations. At its core, our algorithm is efficient as long as $\|H\| \leq \poly(n)$, and in fact can be applied to learn the Pauli spectrum of \emph{any} operator, not necessarily Hermitian or unitary, as long as one can efficiently block encode it.\\

\textbf{Time reversal in learning problems.} Time reversal is a tool that has been utilized with success in various modern quantum experiments~\cite{garttner2017measuring,mi2021information,braumuller2022probing,colombo2022time,li2023improving}. In learning tasks, the intuitive power of time reversal is its ability to ``refocus'' the evolution of a physical system, enhancing the signal of important many-body correlations that might only manifest at late time scales~\cite{schuster2023learning}. Theoretically, superpolynomial separations have been established between learning protocols with and without access to time-reversal dynamics. Cotler, Schuster, and Mohseni~\cite{cotler2023information} considered the task of distinguishing whether a Haar-random unitary was sampled from $\U(2^n)$ or $\U(2^{n/2}) \otimes \U(2^{n/2})$. Meanwhile, Schuster, Haferkamp, and Huang~\cite{schuster2024random} gave the task of distinguishing between two different classes of pseudorandom quantum circuits. In both cases, it was proven that no time-forward experiment can solve these tasks in polynomial time, but protocols using time-reversed dynamics to probe out-of-time-ordered correlators (OTOCs) can solve them very efficiently.

In our work, we do not use time reversal to probe OTOCs, but rather to implement quantum simulation techniques, namely the quantum singular value transformation (QSVT)~\cite{gilyen2019quantum}. In a similar spirit to the works discussed above, we observe that our time-reversed protocol can nearly reach the Heisenberg limit, while our time-forward algorithm does not surpass the standard quantum limit. However, as we do not consider lower bounds in this paper, we do not establish analogous separations for the sparse Hamiltonian learning problem.\\

\textbf{Indistinguishable particles.} We have restricted attention to many-qubit systems, although we note that there has been some progress on learning bosonic~\cite{li2024heisenberg} and fermionic~\cite{ni2024quantum,mirani2024learning} Hamiltonians. However, these algorithms only consider Hubbard models, and they cannot perform structure learning due to their use of the Hamiltonian reshaping strategy of \cite{huang2023learning}. Conversely, it is interesting to point out that our results can be applied to these systems under an appropriate mapping to qubits. For instance, we immediately get an efficient learning algorithm for any fermionic Hamiltonian with at most polynomially many interaction terms. This is true even if one uses the nonlocal Jordan--Wigner transformation~\cite{jordanwigner} to represent the fermions with qubits, as our techniques have no dependence on the locality of the Pauli terms.

\subsection{Technical overview}

In this section, we will describe the high-level techniques used to build our algorithms and prove our results. As per \cref{prob:1}, we are given an unknown $n$-qubit Hamiltonian $H$ and an integer $m$ with the promise that $H$ is supported on at most $m$ Pauli operators. Therefore, we write $H = \sum_{a=1}^m \lambda_a E_a$, with the understanding that $\lambda \in [-1, 1]^m$ may potentially be padded with zeros. At times, it is natural to consider the size of the Hamiltonian in terms of its operator norm $\|H\|$. Whenever the learner does not have access to this quantity, we keep in mind that the bound $\|H\| \leq m$ can always be used in its stead.

\subsubsection*{Learning with pseudo-Choi states}

The key technique enabling our learning algorithms is the pseudo-Choi state, a concept introduced by Castaneda and Wiebe~\cite{castaneda2023hamiltonian}. They define a version with a reference control qubit, but it is straightforward to consider a referenceless version as well. We will use both variants in our algorithms.

\begin{definition}
    Let $\ket{\Omega} = \frac{1}{\sqrt{2^n}} \sum_{b \in \{0,1\}^n} \ket{b} \ket{b}$. The \emph{pseudo-Choi states} of an $n$-qubit operator $H$, with and without reference, are
    \begin{align}
        \ket{\pchoiref(H)} &= \frac{(H \otimes \I_n) \ket{\Omega} \ket{0} + \ket{\Omega} \ket{1}}{\norm_{\mathrm{ref}}},\\
        \ket{\pchoi(H)} &= \frac{(H \otimes \I_n) \ket{\Omega}}{\norm},\label{eq:refless_pchoi_intro}
    \end{align}
    respectively. $\norm_{\mathrm{ref}}$ and $\norm$ are normalization constants.
\end{definition}

This state draws analogies to the usual Choi state $(\mathcal{E} \otimes \I)(\op{\Omega}{\Omega})$ of a quantum channel $\mathcal{E}$. However, since $H$ does not necessarily induce a quantum channel, its pseudo-Choi state does not exhibit channel--state duality in the sense of the Choi--Jamio{\oldl}kowski isomorphism. Nonetheless, it serves as a powerful learning resource:~\cite{castaneda2023hamiltonian} shows that Clifford classical shadows~\cite{huang2020predicting} can be used to estimate all $m$ parameters of $H$ with $\O(\Delta^2\log(m)/\epsilon^2)$ copies of $\ket{\pchoiref(H/\Delta)}$. Here, $\Delta \geq 2\|H\|$ is a normalization factor necessary for the preparation of this state. This measurement scheme is therefore able to simultaneously learn a large number of observables with no restriction on their locality. We review this protocol in more detail in \cref{sec:pseudo-choi_learning}. Note that the reference control qubit is necessary to learn $\lambda_a$ unambiguously, as the referenceless state is unique only up to an arbitrary nonzero scalar on $H$. (We will explain the utility of this referenceless version later.)

To prepare such states, \cite{castaneda2023hamiltonian} calls upon a subroutine to block encode the Hamiltonian from its time evolution~\cite{low2017hamiltonian,gilyen2019quantum}. This involves block encoding the operator $\sin(H/\Delta) = \frac{1}{2\i}  (e^{\i H/\Delta} - e^{-\i H/\Delta})$ as a linear combination of unitaries (LCU)~\cite{childs2012hamiltonian}, followed by invoking the QSVT~\cite{gilyen2019quantum} to apply a polynomial approximation of the $\arcsin$ function onto that block. The normalization $\|H/\Delta\| \leq 1/2$ ensures that the spectrum being transformed lies in the appropriate domain of the $\arcsin$ approximation. This technique is very accurate:~the polynomial only needs degree $d = \O(\log(1/\varepsilon_{\mathrm{arcsin}}))$ to get spectral error $\varepsilon_{\mathrm{arcsin}}$, and choosing $\varepsilon_{\mathrm{arcsin}} = \Theta(\epsilon/\Delta)$ suffices for the learning problem. Then from $\O(d)$ queries to the controlled time evolution with resolution $|t| = \Theta(1/\Delta)$, one can construct the block encoding of $H/\Delta$. Finally, the pseudo-Choi state $\ket{\pchoiref(H/\Delta)}$ is prepared probabilistically from the block encoding by measuring some ancilla flag qubits;~this succeeds with probability at least $1/2$, so a constant number of repetitions per copy suffices.

The limitations of this approach are two-fold. First, it does not supply an efficient way to learn the terms if they are \emph{a priori} unknown. Second, the state-preparation procedure requires controlled forms of the time evolution, $\ctrl{}{e^{-\i H t}}$ and $\ctrl{}{e^{+\i H t}}$. Access to these resources is unnatural in most learning frameworks, and there are no-go conditions for controlling black-box unitaries~\cite{araujo2014quantum,gavorova2024topological}. Furthermore, in the time-forward access model one cannot directly query the (controlled) inverses either. In order to make the pseudo-Choi state method applicable to more conventional learning scenarios, we seek a way to lift these limitations.

\subsubsection*{Preparing pseudo-Choi states without controlled evolutions}

To get our main results \cref{thm:tr_learning,thm:tf_learning}, we show how to prepare pseudo-Choi states without either controlled time evolutions or (in the second access model) time reversal. The guarantees on our state-preparation algorithms are as follows.

\begin{theorem}[Pseudo-Choi state preparation with uncontrolled time-reversal queries]\label{thm:tr_state_prep}
    Let $\Delta \geq 2\|H\|$, $\epsilon > 0$. From access to time-reversal dynamics, we can prepare $\O(\epsilon/\Delta)$-close states (in trace norm) to $\ket{\pchoiref(H/\Delta)}$ and $\ket{\pchoi(H)}$ with success probabilities at least $1/2$ and $\Omega(\|\lambda/\Delta\|^2)$, respectively. For either state, each preparation attempt uses $\O(\frac{1}{\Delta} \log(\Delta/\epsilon))$ evolution time to $e^{-\i H t}$ with time resolution $|t| \geq \tmin = \Omegat(\epsilon/\Delta^2)$. The number of ancilla qubits required is at most $\nanc = n + 3$.
\end{theorem}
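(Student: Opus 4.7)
The plan is to build an $\O(\epsilon/\Delta)$-accurate block encoding $U$ of $H/\Delta$ from uncontrolled $e^{\pm \i H\tau}$ queries, then apply $U$ to carefully chosen inputs and postselect on the block-encoding flag to obtain each pseudo-Choi state.

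For the block encoding, I would follow the LCU+QSVT pipeline recalled in the excerpt: first encode $\sin(H/\Delta) = \frac{1}{2\i}(e^{\i H/\Delta} - e^{-\i H/\Delta})$ via an LCU using queries of duration $\tau = 1/\Delta$, then apply a degree-$d$ QSVT approximating $\arcsin$ on $[-1/2,1/2]$, with $d = \O(\log(\Delta/\epsilon))$ to achieve spectral error $\varepsilon = \O(\epsilon/\Delta)$. The original pipeline consumes $\ctrl{}{e^{\pm \i H\tau}}$, which our access model does not provide. My workaround uses the observation that the QSVT/QSP sequence for an odd function of $H$ consists of strictly alternating forward and backward evolutions interleaved with single-qubit ancilla rotations $R_j$: if the $R_j$ are replaced by identities, the $e^{-\i H/\Delta}$ and $e^{+\i H/\Delta}$ factors telescope pairwise to $\I$. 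Thus the entire $U$ can be made controlled by an additional register $r$ by conditioning only the $R_j$ on $r$ (a single-qubit controlled gate), while the $e^{\pm \i H/\Delta}$ queries remain unconditional; when $r=1$ the evolutions cancel and $U$ acts as $\I$. Time-reversal access is precisely what enables this cancellation. If an atomic $e^{\pm \i H/\Delta}$ pulse is not directly available, I would compile it from shorter pulses of duration $\tau \geq \tmin = \Omegat(\epsilon/\Delta^2)$ and absorb the resulting error into the $\varepsilon$ budget; the total evolution time is $d \cdot (1/\Delta) = \O(\log(\Delta/\epsilon)/\Delta)$.

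Given $U$, the preparation step is immediate. For $\ket{\pchoi(H)}$, apply $U$ to $\ket{\Omega} \otimes \ket{0}_f$ and postselect on $\ket{0}_f$; the accepted branch is proportional to $(H/\Delta \otimes \I_n)\ket{\Omega}$, which normalizes to $\ket{\pchoi(H)}$, and occurs with probability $\|(H/\Delta \otimes \I_n)\ket{\Omega}\|^2 = \tr(H^2)/(\Delta^2 2^n) = \|\lambda/\Delta\|^2$. For $\ket{\pchoiref(H/\Delta)}$, add a reference qubit $r$ in $(\ket{0}_r + \ket{1}_r)/\sqrt{2}$, apply the conditional $U$ from the previous paragraph with $r$ as the external control, and again postselect on $\ket{0}_f$; the accepted branch matches the definition of $\ket{\pchoiref(H/\Delta)}$ up to $\norm_{\mathrm{ref}}$, with probability $\tfrac{1}{2}(1 + \|\lambda/\Delta\|^2) \geq 1/2$. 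Counting ancillas: $n$ for the partner of $\ket{\Omega}$, one for the QSVT flag, one for the QSP signal, and one for $r$---giving $\nanc \leq n + 3$.

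The main obstacle is making the free-controlization argument both rigorous (the specific QSVT circuit for $\arcsin$ must genuinely telescope when the phase rotations are identified, which requires using an alternating-call QSP convention such as Dong--Lin--Tong's) and quantitatively tight, so that the chosen $\varepsilon$ together with any Trotter compilation error propagates to at most $\O(\epsilon/\Delta)$ trace-distance error on each output state. Once the telescoping is verified, setting $\varepsilon = \Theta(\epsilon/\Delta)$ and applying the standard block-encoding-to-state error bound closes the argument.
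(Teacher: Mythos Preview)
Your telescoping argument targets the wrong layer of control. The $\arcsin$-via-QSVT pipeline you invoke does \emph{not} consist of bare $e^{\pm\i H/\Delta}$ calls interleaved with ancilla rotations: its innermost step is the block encoding of $\sin(H/\Delta)$, realized as $\ctrl{}{U}(Y\otimes\I_n)\ctrl{}{U^\dagger}$ (equivalently, the two-term LCU you write), and that SELECT already demands controlled $e^{\pm\i H/\Delta}$. QSVT then alternates \emph{this} block encoding and its adjoint with projector-rotations; setting the phases to zero telescopes the $\sin$-block-encoding calls, not the time evolutions themselves. Consequently you have not explained how to build the block encoding $B$ from uncontrolled queries even once---so even the referenceless state $\ket{\pchoi(H)}$, which needs no outer control, is out of reach by your route. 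Your free-controlization only supplies the reference-qubit control on $B$ after $B$ already exists. (The Dong--Lin--Tong conventions you cite do not change this: their symmetric QSP still acts on a signal oracle that here is the $\sin$ block encoding, and their QETU variant explicitly uses controlled $e^{-\i H\tau}$.)

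The paper fills this gap by a different mechanism: approximate controlization via Pauli twirling and qDRIFT. Conjugating $\I\otimes H$ by $\ctrl{\overline b}{P}$ and averaging over $P\in\Pauli{n}$ yields the effective Hamiltonian $\op{b}{b}\otimes H_0$ up to a scalar, so a random product $\prod_{j=1}^N \ctrl{\overline b}{P_j}\,(\I\otimes e^{-\i H t/N})\,\ctrl{\overline b}{P_j}$ with uniform $P_j$ defines a mixed-unitary channel $\gamma$-close in diamond norm to $\ctrl{b}{e^{-\i H_0 t}}$ when $N=\O(t^2\|H\|^2/\gamma)$. Substituting this for each of the $Q_1=\O(\log(\Delta/\epsilon))$ controlled-evolution slots in the standard construction, and choosing $\gamma=\Theta(\epsilon/(Q_1\Delta))$ so the accumulated diamond error stays $\O(\epsilon/\Delta)$, leaves the total evolution time at $\O(\frac{1}{\Delta}\log(\Delta/\epsilon))$ while forcing the queries down to duration $t/N$; this is exactly where the stated resolution $\tmin=\Omegat(\epsilon/\Delta^2)$ comes from, not from Trotter-compiling a single pulse as you suggest. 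The success probabilities and the ancilla count $n+3$ then follow from the standard block-encoding postselection (two ancillas for the $(\tfrac{\pi}{2},2,\varepsilon)$-BE, one for the reference, $n$ for the maximally entangled partner).
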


\begin{theorem}[Pseudo-Choi state preparation with uncontrolled time-forward queries]\label{thm:tf_state_prep}
    Let $\Delta \geq 2\|H\|$, $\epsilon > 0$. From access to only time-forward dynamics, we can prepare $\O(\epsilon/\Delta)$-close states (in trace norm) to $\ket{\pchoiref(H/(\Lambda\Delta))}$ and $\ket{\pchoi(H)}$ with success probabilities at least $1/2$ and $\Omega(\epsilon^2 \|\lambda/\Delta^2\|^2)$, respectively. The parameter $\Lambda$ is known analytically and obeys $\Lambda = \O(\Delta/\epsilon)$. For either state, each preparation attempt uses $\O(\frac{1}{\Delta} \log^2(\Delta/\epsilon))$ evolution time to $e^{-\i H t}$ with time resolution $t \geq \tmin = \Omegat(\epsilon/\Delta^2)$. The number of ancilla qubits required is $\nanc = n + \O(\log\log(\Delta/\epsilon))$.
\end{theorem}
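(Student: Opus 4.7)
I would mirror the pipeline of \cref{thm:tr_state_prep}: build a block encoding of $H/(\Lambda\Delta)$ at sub-normalization $\Lambda\Delta$, apply it to the maximally-entangled state $\ket{\Omega}$, and postselect (adding a reference qubit and a single round of oblivious amplitude amplification in the referenced case to reach probability $1/2$). The only structural change from \cref{thm:tr_state_prep} is the block-encoding step itself, since the symmetric LCU $\sin(H/\Delta) = \tfrac{1}{2\i}(e^{\i H/\Delta} - e^{-\i H/\Delta})$ that drove the time-reversal proof is unavailable in the forward-only setting.

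For the forward-only block encoding, I would exploit the identity $\I - e^{-\i H \tau} = \i \tau H + \O(\tau^2 \|H\|^2)$ with $\tau = \Thetat(1/\Delta)$. A single application is too inaccurate, so I would take a linear combination $B := \sum_{k=1}^d c_k (\I - e^{-\i H k \tau})$ in which the $c_k$ are chosen by Richardson extrapolation to annihilate the quadratic through $d$th-order Taylor corrections, leaving residual error $\O((\tau\|H\|)^{d+1}) \leq \epsilon/\Delta$ at degree $d = \O(\log(\Delta/\epsilon))$. Equivalently, this realizes a polynomial $p$ in $U = e^{-\i H \tau}$ approximating $-\i\log(\cdot)$ on the arc $\{e^{-\i\theta} : |\theta| \leq 1/2\}$ of the unit circle; standard Bernstein-type bounds on polynomial approximation of analytic functions supply the degree $\O(\log(\Delta/\epsilon))$. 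Assembling this LCU coherently via the $\prep/\sel/\prep^\dagger$ pattern uses a control register of size $\O(\log d) = \O(\log\log(\Delta/\epsilon))$ ancillas on top of the $n$ needed for $\ket{\Omega}$, totalling $\nanc = n + \O(\log\log(\Delta/\epsilon))$, and consumes total evolution time $\sum_k k\tau = \O(d^2 \tau) = \O(\log^2(\Delta/\epsilon)/\Delta)$ per block-encoding call. The time resolution $\tmin = \Omegat(\epsilon/\Delta^2)$ is inherited from the downstream postselection/amplification subroutines, the same as in \cref{thm:tr_state_prep}. The resulting sub-normalization is $\Lambda \Delta = \O(\|p\|_\infty/\tau) = \O(\Delta/\epsilon)$, so $\Lambda = \O(\Delta/\epsilon)$ as advertised.

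A helpful subtlety is that although $B$ is non-Hermitian, this does not cause the usual loss of sign information that would arise in generic singular-value processing: we apply $B$ directly to $\ket{\Omega}$, and the output $B\ket{\Omega} = \i \tau H \ket{\Omega} + \O(\epsilon/\Delta)$ preserves the sign of $H$'s eigenvalues because $H\ket{\Omega}$ appears literally as a state vector up to the global phase $\i$. Postselecting the LCU flag qubit thus produces a state $\O(\epsilon/\Delta)$-close to $\ket{\pchoi(H)}$ with success probability $\Omega\l(\|H\ket{\Omega}/(\Lambda\Delta)\|^2\r) = \Omega(\|\lambda/(\Lambda\Delta)\|^2) = \Omega(\epsilon^2 \|\lambda/\Delta^2\|^2)$. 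The referenced pseudo-Choi state $\ket{\pchoiref(H/(\Lambda\Delta))}$ is then built by the same two-branch LCU pattern as in \cref{thm:tr_state_prep}---with the forward-only block encoding plugged in---and boosted to success probability $1/2$ by a constant number of oblivious amplitude amplification rounds, already absorbed in the $\log^2$ evolution-time factor.

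The main obstacle will be bounding the LCU coefficient norm $\sum_k |c_k|$---equivalently, $\|p\|_\infty$ on the unit circle---by $\polylog(\Delta/\epsilon)$: a naive Vandermonde inversion on the integer nodes $k = 1, \ldots, d$ can produce Richardson coefficients that grow exponentially in $d$, which would inflate $\Lambda$ and spoil both the success probability and the evolution-time scaling. The cleanest remedy is to design $p$ directly as a truncated Fourier expansion of $-\i\log$ restricted to the arc---analyticity in a neighborhood of the arc yields geometric convergence with uniformly bounded coefficients---and then read off the $c_k$ as the LCU expansion of $p(U)$ in the monomials $U^k$. Placing the branch cut of $\log$ outside the arc is enabled by the normalization $\Delta \geq 2\|H\|$ together with $\tau = \Thetat(1/\Delta)$, which pins the spectrum of $U$ into a proper arc of the unit circle with positive distance from the branch point.
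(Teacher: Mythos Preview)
Your high-level plan---approximate $-\i\log U$ by a one-sided polynomial $\sum_{j\ge 0} c_j U^j$ and block-encode it as an LCU---is exactly the paper's route. The paper takes the truncated series $L_K(U)=\sum_{k=1}^K (-1)^{k+1}(U-\I)^k/k$, expands $(U-\I)^k$ binomially, and obtains closed-form coefficients $c_j=(-1)^{j+1}\binom{K}{j}/j$ with $\ell_1$-norm $\Lambda=\Theta(2^K)$ (\cref{lem:matrix_log_LCU}). With $K=\lceil\log_2(\Delta/\epsilon)\rceil$ this gives $\Lambda=\O(\Delta/\epsilon)$, which is precisely the bound stated in the theorem. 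So your ``main obstacle'' is not an obstacle at all: exponential-in-$d$ growth of the Richardson (or any other one-sided) coefficients is expected and sufficient here. Your Fourier remedy aiming for $\sum_k|c_k|=\polylog(\Delta/\epsilon)$ would prove something strictly stronger than claimed, and there is no reason to expect it succeeds with only non-negative powers---a genuine Fourier/Laurent expansion of $\log$ on the arc uses $U^{-k}$ terms, which are unavailable. Separately, the LCU subnormalization is $\Lambda=\sum_k|c_k|$, not $\|p\|_\infty$; on $|z|=1$ only $\|p\|_\infty\le\sum_k|c_k|$ holds, so your identification of the two is incorrect.

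There is one genuine error: oblivious amplitude amplification requires the \emph{inverse} of the block encoding, hence $U^\dagger$, i.e.\ time reversal---precisely what this theorem forbids. Fortunately no amplification is needed: once the reference qubit is appended, the success probability is $\tfrac12\big(\tfrac{1}{2^n}\|H/(\Lambda\Delta)\|_F^2+1\big)\ge\tfrac12$ automatically. Finally, the resolution $\tmin=\Omegat(\epsilon/\Delta^2)$ arises from controlizing the $\ctrl{}{U^j}$ calls inside the LCU (\cref{thm:controlization_time_resolution}), not from any amplification step.
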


Observe that the success probability for $\ket{\pchoiref(H/\Delta)}$ is always at least $1/2$. Meanwhile for $\ket{\pchoi(H)}$, the dependence of $\|\lambda\|^2$ will cancel nicely with its use case (which we describe later). The demand for $\O(\epsilon/\Delta)$ accuracy is a consequence of the fact that $\ket{\pchoiref(H/\Delta)}$ encodes the normalized parameters $\lambda_a/\Delta$. In order to get the desired in accuracy in learning $\lambda_a$, we need the state-preparation error to be $\leq c\epsilon/\Delta$ for a small constant $c \ll 1$. Below, we describe the techniques that make these constructions possible.

\subsubsection*{Controlization to construct block encodings}

We remove the need for controlled time evolutions through a technique called controlization. While various approaches for controlization have been proposed~\cite{janzing2002quantum,zhou2011adding,friis2014implementing,nakayama2015quantum,dong2019controlled,chowdhury2024controlization}, we employ the one described by Odake, Kristj\'{a}nsson, Taranto, and Murao~\cite{odake2023universal} due to its simplicity and universal applicability. Let $H_0 = H - \tr(H)\I_n/2^n$ be the traceless part of $H$. The subroutine generates a mixed-unitary channel $\mathcal{E}(t)$ which approximates $\ctrl{}{e^{-\i H_0 t}}$ to any target diamond-norm distance $\gamma$. This approach to controlization also has the nice feature that it exhibits zero overhead in terms of total evolution time:~in order to approximate $\ctrl{}{e^{-\i H_0 t}}$, we query $e^{-\i H \delta t}$ for a total of $N = t/\delta t$ times. The downside is that we require a small time resolution of $\tmin = t/N$ and $\O(n)$ quantum gates per query, so our main task is bounding what $N$ suffices to achieve the total target error.

The controlization technique of \cite{odake2023universal} is based on Pauli twirling and randomized product formulas. It is well know that averaging over the Pauli group gives the identity
\begin{equation}
    \frac{1}{4^n} \sum_{P \in \Pauli{n}} P^\dagger H P = \frac{\tr H}{2^n} \I_n.
\end{equation}
This can be extended by appending a control qubit and replacing each Pauli gate by $\ctrl{0}{P}$. In this case the identity becomes
\begin{equation}
     \frac{1}{4^n} \sum_{P \in \Pauli{n}} \ctrl{0}{P^\dagger} (\I_1 \otimes H) \ctrl{0}{P} = \op{1}{1} \otimes H_0 + \frac{\tr H}{2^n} \I_{n+1}.
\end{equation}
Exponentiating this twirled Hamiltonian gives, up to some global phase, the unitary $\ctrl{1}{e^{-\i H_0 t}}$. We then straightforwardly extend this idea to synthesize multi-controlled unitaries $\ctrl{b}{e^{-\i H_0 t}}$, where $b \in \{0, 1\}^k$ is any $k$-qubit control string. Note that each such controlled Pauli gate can be implemented with $\O(k + n)$ single- and two-qubit quantum gates.

In order to evolve by the twirled Hamiltonian, Trotter's product formula tells us that $e^{-\i \sum_j h_j} \approx \prod_j e^{-\i h_j}$. It is enticing to simply concatenate the unitaries $e^{-\i h_j} = \ctrl{0}{P^\dagger} e^{-\i H t / 4^n} \ctrl{0}{P}$, but clearly this is inefficient;~even ignoring the Trotter error, there are exponentially many terms. Fortunately, Campbell's qDRIFT protocol~\cite{campbell2019random} was designed with precisely this concern in mind. Rather than include each term in the simulation, qDRIFT constructs random product formulas wherein terms are selected with probability proportional to their norm. In our context, all terms have the same weight, so we just sample from $\Pauli{n}$ uniformly. By treating this ensemble of random product formulas as a mixed-unitary channel (wherein we draw a new formula for each instance), the analysis of \cite{campbell2019random} (and tightened in \cite{chen2021concentration}) shows that qDRIFT sequences of length $N = \O(t^2 \|H\|^2/\gamma)$ suffice to get diamond-norm distance at most $\gamma$. Since we only need applications running for time $t = \O(1/\Delta) = \O(1/\|H\|)$, this simplifies to $N = \O(1/\gamma)$.

To determine what $\gamma$ we need to maintain our learning guarantees, note that our approximation prepares a mixed state from which we probabilistically prepare the pseudo-Choi states. A careful error analysis shows that $\O(\epsilon/\Delta)$ error in trace distance of the mixed states (hence in diamond-norm distance of the channels) suffices. Each block-encoding circuit makes $Q_1 = \O(\log(\Delta/\epsilon))$ queries to $\ctrl{}{e^{\pm \i H/\Delta}}$, so choosing $\gamma = \O(\epsilon/(Q_1 \Delta)) = \Ot(\epsilon/\Delta)$ nets the claim of \cref{thm:tr_state_prep}. (For the success probabilities, observe that $\Pr(\text{Preparing } \ket{\pchoiref(H/\Delta)}) \geq 1/2$ always holds because of the reference term, while $\Pr(\text{Preparing } \ket{\pchoi(H)})$ is proportional to the norm of the block-encoded matrix.)

\subsubsection*{Block encoding the matrix logarithm without time reversal}

The construction above removes the need for $\ctrl{}{e^{\pm \i H t}}$, but still requires negative-time queries in order to implement the QSVT (and even the LCU for $\sin(H/\Delta)$). To develop a completely time-reversal-free approach, we construct an entirely different quantum circuit for block encoding $H/\Delta$. The basic idea is simple:~we appeal to the well-known power series for the (matrix) logarithm~\cite{hall2015lie}.

For any square complex matrix $X$ such that $\|X\| < \log 2$, the following identity holds:
\begin{equation}\label{eq:log_eX}
    X = \sum_{k=1}^\infty (-1)^{k+1} \frac{(e^X - \I)^k}{k}.
\end{equation}
Letting $X = -\i H/\Delta$, this formula gives us a relationship between the Hamiltonian and a linear combination of its time evolution operators which only run forward in time. Block encoding the LCU still requires controlled unitaries, but by our prior discussion we now have a technique to approximate them. Note that our normalization requirement is slightly relaxed compared to the QSVT approach, but for sake of clarity let us impose $\Delta \geq 2\|H\|$ as before.\footnote{We can actually relax the QSVT normalization further:~one can allow $\|H/\Delta\| \leq 1 - \delta$ for any $\delta \in (0, 1)$ by increasing the polynomial degree to $d = \O(\log(\Delta/\epsilon)/\delta)$~\cite[Lemma 70]{gilyen2019quantum}.}

Technically, \cref{eq:log_eX} is not written in the form of an LCU. It is also a series with an infinite number of terms. Our analysis is then to determine where we should truncate the series to maintain sufficient accuracy, and how to express the truncated sum as a proper LCU. The former is straightforward:~since $\|X\| \leq \frac{1}{2}$, we have $\|e^X - \I\| \leq \frac{1}{2}$. Let $L_K$ be the partial sum of \cref{eq:log_eX} up to $k \leq K$ terms. An application of the triangle inequality gets $\|X - L_K\| \leq 2^{-(K+1)}$, which means that $K = \lceil \log_2(\frac{1}{2\gamma}) \rceil$ ensures a truncation error of at most $\gamma$.

To express $L_K$ as an LCU, we expand the terms $(e^{-\i H/\Delta} - \I)^k$ using the binomial theorem. With the help of combinatorial identities, we can derive analytical expressions for coefficients $c_j$ such that $L_K = \sum_{j=0}^K c_j e^{-\i (H/\Delta) j}$. Techniques for implementing LCUs are well established~\cite{childs2012hamiltonian,gilyen2019quantum}, requiring controlled instances of the unitaries on only $\lceil \log_2(K+1) \rceil = \O(\log\log(1/\gamma))$ control qubits. The bottleneck of any LCU implementation is the size of its coefficient $\ell_1$-norm, $\Lambda = \sum_j |c_j|$. Using the analytical form of the $c_j$'s, we can compute an essentially tight bound of $\Lambda = \O(2^K)$. Recalling that we need to take error below $\gamma = \O(\epsilon/\Delta)$, we immediately get  $\Lambda = \O(\Delta^2/\epsilon^2)$ and \cref{thm:tf_state_prep} follows. For learning the Hamiltonian parameters, our LCU now prepares the state $\ket{\pchoiref(H/(\Lambda\Delta))}$ (still with at least $1/2$ probability), which requires us to refine the learning precision by $1/\Lambda$. In turn, this raises the copy complexity to $\Lambda^2 \cdot \O(\Delta^2 \log(m)/\epsilon^2) = \Ot(\Delta^4 / \epsilon^4)$. Each copy costs $\Ot(1/\Delta)$ evolution time to produce, leading to the claimed result in \cref{thm:tf_learning}.

\subsubsection*{Identifying Pauli spectra using pseudo-Choi states}

So far we have discussed the utility of the pseudo-Choi state with reference, as introduced by \cite{castaneda2023hamiltonian}. Let us now turn to the referenceless version as in \cref{eq:refless_pchoi_intro}. We invoke it in order to accomplish the structure identification task. The reference was essential to learn values of the Hamiltonian coefficients $\lambda_a$, but it serves no purpose when we only care about detecting which coefficients are large relative to the others. By performing Bell measurements~\cite{montanaro2017learning} on copies of $\ket{\pchoi(H)}$, we can learn which terms in $H$ have weight greater than $\epsilon$.

\begin{theorem}[Bell sampling for structure learning]\label{thm:bell_sampling_alg_summary}
    With high probability, we can generate a set of Pauli labels containing $S_\epsilon(H) = \{E_a : |\lambda_a| > \epsilon\}$. This process consumes $\O(\|\lambda\|^2 \log(m)/\epsilon^2)$ copies of $\ket{\pchoi(H)}$, measuring each in the Bell basis.
\end{theorem}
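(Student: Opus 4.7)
The plan is to show that Bell-basis measurement on $\ket{\pchoi(H)}$ samples Pauli labels $E_a$ with probability proportional to $|\lambda_a|^2$, and then argue that polynomially many samples cover every heavy label. First, I would expand the pseudo-Choi state in the generalized Bell basis $\{\ket{\Phi_b} := (E_b \otimes \I)\ket{\Omega} : E_b \in \Pauli{n}\}$. This is an orthonormal basis on $2n$ qubits, because
\begin{equation}
\langle \Phi_a | \Phi_b\rangle = \langle \Omega |(E_a^\dagger E_b \otimes \I)|\Omega\rangle = \frac{\tr(E_a^\dagger E_b)}{2^n} = \delta_{ab}.
\end{equation}
Substituting $H = \sum_a \lambda_a E_a$ into \cref{eq:refless_pchoi_intro} and pulling the sum out gives
\begin{equation}
\ket{\pchoi(H)} = \frac{1}{\norm}\sum_{a=1}^m \lambda_a \ket{\Phi_a}, \qquad \norm = \|\lambda\|_2,
\end{equation}
so $\ket{\pchoi(H)}$ is literally the amplitude encoding of the Pauli spectrum of $H$ in the Bell basis.

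Second, the Bell basis can be measured with a depth-$1$ Clifford circuit (a transversal CNOT followed by Hadamards on the left register), yielding a classical label $a \in [4^n]$ identifying $E_a$. By the Born rule applied to the expansion above, the outcome distribution is
\begin{equation}
p_a = \frac{|\lambda_a|^2}{\|\lambda\|_2^2}.
\end{equation}
In particular, every $E_a \in S_\epsilon(H)$ (i.e., with $|\lambda_a| > \epsilon$) is output with probability $p_a > \epsilon^2/\|\lambda\|_2^2$.

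Third, I would take $N$ independent copies of $\ket{\pchoi(H)}$, measure each in the Bell basis, and return the set $T$ of distinct labels observed. For any fixed $E_a \in S_\epsilon(H)$, the probability it is missed after $N$ draws is
\begin{equation}
(1-p_a)^N \le \exp(-N\epsilon^2/\|\lambda\|_2^2).
\end{equation}
Since $|S_\epsilon(H)| \le m$, a union bound shows that choosing $N = \O(\|\lambda\|_2^2 \log(m)/\epsilon^2)$ drives the total miss probability below any constant. Hence with high probability $T \supseteq S_\epsilon(H)$, which is exactly what \cref{thm:bell_sampling_alg_summary} asserts.

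The proof has no real obstacle: everything reduces to the orthonormality of the generalized Bell basis and a standard coupon-collector union bound. The only point to highlight is conceptual, namely that the referenceless pseudo-Choi state is \emph{already} diagonal in a basis that can be measured with Clifford gates, so identifying the support of $H$ reduces directly to sampling from $\{|\lambda_a|^2/\|\lambda\|_2^2\}$. (I would also remark that the output $T$ may contain false positives corresponding to light terms; these are ruled out later by the parameter-estimation step, and are consistent with the theorem statement which only demands $T \supseteq S_\epsilon(H)$.)
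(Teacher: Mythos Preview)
Your proposal is correct and follows essentially the same approach as the paper: expand $\ket{\pchoi(H)}$ in the Bell basis to see that measurement samples $E_a$ with probability $|\lambda_a|^2/\|\lambda\|_2^2$, then apply a coupon-collector union bound over the at most $m$ heavy terms using $(1-p_a)^N \le e^{-Np_a}$. The paper packages the second step as a separate lemma and phrases it for the approximate state $\ket{\pchoi(\widetilde{H})}$, but the argument is identical.
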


Observe that while we need $N = \Ot(\|\lambda\|^2/\epsilon^2)$ copies, the probability of producing a single copy is $p = \Omega(\|\lambda\|^2/\Delta^2)$ (time-reversal) or $p = \Omega(\epsilon^2\|\lambda\|^2/\Delta^4)$ (time-forward) from $\Ot(1/\Delta)$ evolution time. It is a standard fact that $\O(N/p)$ Bernoulli trials are sufficient to get $N$ successes. Hence, $\Ot(\Delta/\epsilon^2)$ (time-reversal) or $\Ot(\Delta^3/\epsilon^4)$ (time-forward) total time evolution is used to learn the structure of $H$. Unlike prior approaches to estimating Pauli spectra of unitary operators~\cite{montanaro2010quantum,angrisani2023learning,gutierrez2024simple,bakshi2024structure}, our algorithm can in principle be applied to any $n$-qubit operator $A$, as long as one can consistently generate copies of its pseudo-Choi state $\ket{\pchoi(A)}$. We are also not restricted by any locality considerations, and the total number of queries to $e^{-\i H t}$ essentially only depends on $\epsilon/\|H\|$ (i.e., the target error in ``natural'' units of the Hamiltonian's norm).

\cref{alg:pauli_spectrum_learning} describes the core idea of our structure-identifying algorithm. For each copy of $\ket{\pchoi(H)}$, we measure in the $2n$-qubit Bell basis, which can be described as $\{(P \otimes \I_n) \ket{\Omega} : P \in \Pauli{n}\}$. In turn, this induces a distribution governed by $\Pr(E_a) = |\lambda_a|^2 / \|\lambda\|^2$. Sampling from this distribution allows us to learn the Pauli terms in $H$ that have relatively high weight.

\begin{algorithm}
    \caption{\label{alg:pauli_spectrum_learning}Identifying the support of the Pauli spectrum of a Hamiltonian.}
    \KwIn{$N$ copies of the referenceless pseudo-Choi state $\ket{\pchoi(H)}$.}

    \BlankLine

    Let $\widehat{S} \gets \varnothing$\;
    
    \For{$j = 1, \ldots, N$}{
        Measure $\ket{\pchoi(H)}$ in the Bell basis $\{\ket{P} = (P \otimes \I_n) \ket{\Omega} : P \in \Pauli{n}\}$\;
        Let $\ket{P^{(j)}}$ be the observed outcome\;
        $\widehat{S} \gets \widehat{S} \cup \{P^{(j)}\}$\;
    }
    \Return{$\widehat{S}$}
\end{algorithm}

To analyze the cost of this algorithm, we need to bound the number of measurements such that we observe at least all the terms with $|\lambda_a| > \epsilon$. This is a variant of the coupon collector problem, 
which is well studied~\cite{shioda2007some}. Our setup has two modifications from the classic version of the problem. First, we have a nonuniform distribution, which bottlenecks the sample complexity based on the least likely coupon. Second, we only care to collect low-rarity coupons (i.e., those such that $|\lambda_a| > \epsilon$). Hence the rarest coupon we need to collect has at least $\Pr(E_a) > \epsilon^2/\|\lambda\|^2$. This condition allows us to bound the tail of the cumulative distribution where, after $N$ i.i.d.~draws of the coupons (Pauli terms), we still have not observed at least one of each type. We show that this probability is at most $\delta$ whenever $N = \O(\|\lambda\|^2 \log(m/\delta)/\epsilon^2)$.

Besides this main argument, there are some smaller technical details to take care of. First is the fact that we do not prepare $\ket{\pchoi(H)}$ exactly, but rather an approximation $\ket{\pchoi(\widetilde{H})}$. Suppose that $\|H - \widetilde{H}\| \leq \varepsilon$ for some small $\varepsilon$, which we can tune by constructing a more precise block encoding. Then, we show that $S_\epsilon(H) \subseteq S_{\epsilon+\varepsilon}(\widetilde{H})$. While we may pick up terms with $|\lambda_a| \leq \epsilon$ (this is possible even if sampling from the exact pseudo-Choi state), it is fine to learn a superset. This is because the next stage predicts all $\lambda_a$ using classical shadows, which only has an explicit logarithmic cost in the number of parameters. Any parameters deemed too small can simply be discarded. Clearly since we only take a polynomial number of samples, this superset has bounded size. But we can prove something even stronger:~$S_{\epsilon+\varepsilon}(\widetilde{H}) \subseteq \{E_1, \ldots, E_m\}$. That is, even if the approximation $\widetilde{H}$ has support that leaks onto other Pauli terms, its $(\epsilon + \varepsilon)$-support never contains those erroneous terms.

Finally, we remark that the controlization approximation affects this algorithm more severely than the parameter estimation stage. We find that the Hamiltonian coefficients become perturbed on the order of $\Delta \sqrt{Q_1 \gamma}$, where $Q_1$ is the number of queries to the time evolution for one block encoding and $\gamma$ is the controlization error per query. In order to ensure the correctness of the Bell sampling procedure, we need to adjust the $\epsilon$ threshold of \cref{thm:bell_sampling_alg_summary} and choose a sufficiently small $\gamma$. The upshot of this analysis is a time resolution of $\tmin = \Omegat(\epsilon/\Delta^3)$ under the time-reversal access model, and $\tmin = \Omegat(\epsilon^4/\Delta^5)$ under the conventional access model, for this stage of the algorithm. As is inherent to controlization, $\ttotal$ and $\Nexp$ are unaffected.

\subsubsection*{Bootstrapping to (nearly) Heisenberg-limited scaling}

Having described an algorithm that learns at the standard quantum limit, we consider how to the enhance the procedure to reach the Heisenberg limit. Using the pseudo-Choi state approach, this is ultimately achieved by a technique called uniform spectral amplification~\cite{low2017hamiltonian}, a block-encoding generalization of amplitude amplification. As such, we will need access to time evolution inverses, so the following discussion is only viable under the time-reversal model.

To achieve Heisenberg-limited scaling, we use the precision bootstrapping framework, an idea that has enjoyed many fruitful successes in the dynamical learning paradigm~\cite{kimmel2015robust,haah2023query,dutkiewicz2023advantage,zhao2023learning,bakshi2024structure}. The first use of the strategy for many-body Hamiltonian learning is attributed to Dutkiewicz, O’Brien, and Schuster~\cite{dutkiewicz2023advantage}, inspired by the feedback control scheme of Kura and Ueda~\cite{kura2018finite} for qudit learning. Its broader applicability was subsequently appreciated by \cite{bakshi2024structure}. We briefly review the idea here.

Let $\alg : (H, \epsilon, \delta) \mapsto \widehat{\lambda} \in \R^m$ be some base learning algorithm with the guarantee that $\Pr(\|\lambda - \widehat{\lambda}\|_\infty \leq \epsilon) \geq 1 - \delta$. The idea is to bootstrap applications of this base algorithm, with constant precision $\epsilon_0 = 1/2$, to the Heisenberg limit $\ttotal \sim 1/\epsilon$ by adaptively adjusting the Hamiltonian being learned via quantum control. Consider a previously learned estimate $\widehat{H}$ and form a ``residual Hamiltonian'' $R = H - \widehat{H}$. Assuming that the current estimate has error $\leq \eta$, we can amplify $R$ by a factor of $1/\eta$ to boost those residual errors up to a constant. Then if we learn $R/\eta$ using $\alg$ with merely constant precision $\epsilon_0 = 1/2$, we can improve our estimate to $\eta/2$ error. Recursing on this process for multiple rounds $j = 0, 1, \ldots, T$ for some $T$, it is clear that each round produces an estimate with error $\eta_j = 2^{-j}$. Hence $T = \lfloor \log_2(1/\epsilon)\rfloor$ suffices to get the target accuracy $\epsilon$. Taking some care with the acceptable amount of failure probability establishes the correctness of the algorithm. This overall strategy is outlined in \cref{alg:DOS_bootstrap}.

\begin{algorithm}
    \caption{\label{alg:DOS_bootstrap}Bootstrapped learning algorithm with Heisenberg-limited scaling.}
    \KwIn{Error parameters $\epsilon, \delta \in (0, 1)$, base learning algorithm $\alg$, and black-box access to time evolution under the unknown Hamiltonian $H = \sum_{a=1}^m \lambda_a E_a$.}
    \KwOut{An estimate $\widehat{\lambda}$ such that $\Pr(\|\lambda - \widehat{\lambda}\|_\infty \leq \epsilon) \geq 1 - \delta$.}

    \BlankLine

    Let $\widehat{\lambda}^{(0)} \gets (0, \ldots, 0) \in \R^m$\;
    Let $T \gets \lfloor \log_2(1/\epsilon) \rfloor$\;
    
    \For{$j = 0, 1, \ldots, T$}{
        Let $\zeta_j \gets \delta / 2^{T+1-j}$\;
        Let $\eta_j \gets 1/2^j$\;
        Let $\widehat{H}_j \gets \sum_{a=1}^m \widehat{\lambda}^{(j)}_a E_a$\;
        Let $R_j \gets H - \widehat{H}_j$\;
        $\widehat{r}^{(j)} \gets \alg(R_j/\eta_j, 1/2, \zeta_j)$\;
        Let $\widehat{\lambda}^{(j+1)} \gets \widehat{\lambda}^{(j)} + \eta_j \widehat{r}^{(j)}$\;
    }
    \Return{$\widehat{\lambda} \gets \widehat{\lambda}^{(T+1)}$}
\end{algorithm}

To achieve Heisenberg-limited scaling, the process of boosting the residual by $1/\eta$ needs to take $\O(\eta)$ time, otherwise the advantage is lost. In the conventional time-evolution setting, this holds because applying $e^{-\i R/\eta}$ takes time $1/\eta$ by definition. In \cite{dutkiewicz2023advantage}, they consider \emph{continuous} quantum control, wherein the experimenter can implement $e^{-\i (H - \widehat{H}) t}$ directly, while \cite{bakshi2024structure} assumes only discrete control, requiring them to approximate the residual evolution via Trotterization. In our algorithm, we prepare pseudo-Choi states from a block encoding of the Hamiltonian. Thus the bootstrapping strategy requires us to block encode the residual Hamiltonian, which comes with unique advantages and challenges. The main advantage is that we can block encode $-\widehat{H}$ exactly as an LCU~\cite{childs2012hamiltonian}, thereby avoiding the complications of Trotter error in the discrete control model.

On the other hand, boosting the block-encoded residual by $1/\eta$ is more challenging. To do this, we invoke Low and Chuang's uniform spectral amplification technique~\cite{low2017hamiltonian},\footnote{A generalization for amplifying the singular values of a not-necessarily-Hermitian matrix~\cite[Theorem 30]{gilyen2019quantum} also accomplishes this task.} which essentially allows us to map $R \mapsto R/\eta$, up to some controllable error $\varepsilon_{\mathrm{amp}}$, as long as $\|R/\eta\| \leq 1$. This costs $\O(\log(1/\varepsilon_{\mathrm{amp}}) / \eta)$ queries to the block encoding of $R = H - \widehat{H}$. Meanwhile, recall that block encoding $H$ itself costs $\O(\log(1/\varepsilon_{\mathrm{arcsin}}))$ queries to $\ctrl{}{e^{\pm\i H/\Delta}}$. A careful error analysis shows that we need to take $\varepsilon_{\mathrm{arcsin}} = \Theta(\eta/\Delta)$ (on the other hand, $\varepsilon_{\mathrm{amp}} = \Theta(\epsilon_0/\Delta)$ suffices). Thus in terms of $\eta$, the cost of each round of the bootstrap scales as $\log(1/\eta)/\eta$, giving us nearly Heisenberg-limited scaling up to the factor of $\log(1/\eta) \leq \log(1/\epsilon)$.

Note that this argument applies to the structure identification stage of our algorithm as well. For intuition here, consider the first round $j = 0$, wherein we identify all the terms which have $|\lambda_a| > 1/2$ and estimate them, giving us $\widehat{H}_1$. In the next round $j = 1$, we synthesize $H - \widehat{H}$ and amplify by $2$. This splits the Hamiltonian terms into two sets:~those in $S_{1/2}(H)$, and those in the complement $S_{1/2}(H)^\comp$. We only care to identify the elements of $S_{1/2}(H)^\comp$;~after amplification, all the such terms larger than $1/4$ become larger than $1/2$, which again can be identified with just a constant $1/2$ precision. Iterating in this way, we can deduce the identity of all terms larger than $\epsilon$ with only $\lfloor \log_2(1/\epsilon) \rfloor$ rounds. Combining this with the parameter estimation procedure gives us the claim in \cref{thm:tr_learning}.

One final detail to point out is the normalization of the residual Hamiltonian. To guarantee that $\|R/\eta\| \leq 1$, we make the particular choice $\Delta = 2m$. This is because our error metric is with respect to the vector-norm of the coefficients, which typically has a loose conversion from the operator norm:
\begin{equation}
    \|H - \widehat{H}\| \leq \|\lambda - \widehat{\lambda}\|_1 \leq m \|\lambda - \widehat{\lambda}\|_\infty.
\end{equation}
That said, there exist Hamiltonians which saturate this inequality, so in the worst case we cannot expect to tighten it. While we use the $\ell_\infty$-norm as our metric, converting to different norms simply incurs factors of $m$ in the sample complexity.

\subsection{Discussion}

In this paper, we have described algorithms to efficiently learn sparse, nonlocal Hamiltonians from minimal prior information. This addresses an open direction posed in \cite{bakshi2024structure}. At the center of our results is the pseudo-Choi state introduced by \cite{castaneda2023hamiltonian}, which enables accurate and parallelized learning of many Hamiltonians terms without locality restrictions or a large quantum memory. The original proposal required controlled time evolutions and time reversal to prepare these states;~we have shown how to lift these resource requirements, thereby bringing the scope of the problem down to conventional black-box query scenarios.

For the time-reversal access model, our algorithm can achieve nearly Heisenberg-limited scaling with the pseudo-Choi method. For the time-forward access model, we prepare pseudo-Choi states using a strategy for block encoding the logarithm of a unitary operator without the QSVT. These constructions may be of independent interest more broadly.

To learn the structure of a sparse, nonlocal Hamiltonian, we established a protocol to estimate the Pauli spectrum of any Hamiltonian by sampling its (referenceless) pseudo-Choi state. Parallel to other works which establish quantum analogues of the Goldreich--Levin algorithm for unitaries~\cite{montanaro2010quantum,angrisani2023learning} and local Hamiltonians~\cite{gutierrez2024simple,bakshi2024structure}, our result establishes a version for nonlocal Hamiltonians. In fact, this algorithm is applicable not only to Hamiltonians but to any matrix that we can efficiently block encode. It would be interesting to find applications of this result in other contexts.

While our work affirms that it is possible to efficiently learn any Pauli-sparse Hamiltonian from queries to its time evolution, many important questions remain open. We list a few below.

\begin{enumerate}
    \item Is there an information-theoretic separation between Hamiltonian learning protocols with and without time reversal? While exponential separations have been established in unitary learning problems~\cite{cotler2023information,schuster2024random}, these results do not immediately translate to our sparse Hamiltonian learning problem. Since we have established that this problem can be solved in polynomial time, any separation can only be polynomially large.

    \item Can the time resolution of our algorithms be improved? Because the small resolution arises from controlization, an access model given $\ctrl{}{e^{-\i H t}}$ immediately bypasses this issue. Within the controlization perspective, an analysis of randomized block encodings (e.g., as in \cite{martyn2024halving}) might be useful.

    \item What is the asymptotically tightest lower bound for the evolution time needed to solve \cref{prob:1}? To our knowledge, the current best lower bound for $\ell_\infty$-error $\epsilon$ is $\Omega(1/\epsilon)$, obtained by Huang, Tong, Fang, and Su~\cite{huang2023learning} by reducing the $m$-term problem to a distinguishing task between two single-term Hamiltonians. For $\ell_1$-error $\epsilon_1$, Ma, Flammia, Preskill, and Tong~\cite{ma2024learning} proved that $\Omega(m/\epsilon_1)$ evolution time is necessary. Meanwhile for non-sparse Hamiltonians ($m \leq 4^n$), Bluhm, Caro, and Oufkir~\cite{bluhm2024hamiltonian} showed exponential lower bounds for protocols with ($\Omega(4^n/n)$) and without ($\Omega(4^n/\epsilon)$) ancillas and adaptivity. A related problem is learning arbitrary unitary channels, whose query lower bound of $\Omega(4^n/\epsilon)$ was given by Haah, Kothari, O'Donnell, and Tang~\cite{haah2023query}.

    \item To what extent is our algorithm robust to state-preparation and measurement errors?
\end{enumerate}

\section{Background}

\subsection{Notation}

We denote the imaginary unit by $\i \equiv \sqrt{-1}$. For any integer $k \geq 1$, define the set $[k] \coloneqq \{1, \ldots, k\}$. The natural logarithm is denoted by $\log$, and other bases will be specified when necessary. Asymptotic upper and lower bounds are denoted by $\O(\cdot)$ and $\Omega(\cdot)$, respectively. We write $f = \Theta(g)$ if $f = \O(g)$ and $f = \Omega(g)$. Tildes on big-O notation suppress polylogarithmic factors, e.g., $\Ot(f) \coloneqq \O(f \polylog(f))$. When we only care that a quantity is polynomially bounded, we write $\poly(f, g, \ldots) \coloneqq (fg \cdots)^{\O(1)}$. Generally speaking, for an object $z$, we will use $\widehat{z}$ to denote an estimate of it and $\widetilde{z}$ to denote an approximation of it.

\subsection{Linear algebra}

We collect some standard facts that we will make regular use of.

\begin{proposition}[Vector norm conversion]\label{prop:vector_norm_conversion}
    Let $\|x\|_p \coloneqq \l( \sum_{j=1}^d |x_j|^p \r{)^{1/p}}$ be the $\ell_p$-norm of any $x \in \C^d$, with $\|x\|_\infty \coloneqq \lim_{p\to\infty}\|x\|_p = \max_{j \in [d]} |x_j|$. For all $0 < r < p$, these vector norms obey
    \begin{equation}
        \|x\|_p \leq \|x\|_r \leq d^{\frac{1}{r} - \frac{1}{p}} \|x\|_p.
    \end{equation}
    These bounds also hold in the limit $p \to \infty$, in the sense that $\|x\|_\infty \leq \|x\|_r \leq d^{\frac{1}{r}} \|x\|_\infty$.
\end{proposition}

The $\ell_2$-norm is particularly important, and when the context is clear, we may simply write it as $\|x\|$.

\begin{definition}
    For any matrix $A \in \C^{d \times d}$, the \emph{Schatten $p$-norm} $\|A\|_p$ is the $\ell_p$-norm of its singular values.
\end{definition}

As such, Schatten norms also obey the bounds of \cref{prop:vector_norm_conversion}. Important matrix norms are the operator norm $\|A\| \equiv \|A\|_\infty$ and the Frobenius norm $\|A\|_F \equiv \|A\|_2$, which are related by $\|A\| \leq \|A\|_F \leq \sqrt{d} \|A\|$. Schatten norms also appear in a matrix version of H\"{o}lder's inequality.

\begin{proposition}[H\"{o}lder's inequality]
    Let $A, B \in \C^{d \times d}$. For any $1 \leq p, q \leq \infty$ such that $\frac{1}{p} + \frac{1}{q} = 1$,
    \begin{equation}
        |{\tr(A^\dagger B)}| \leq \|A\|_p \|B\|_q.
    \end{equation}
\end{proposition}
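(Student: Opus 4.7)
The plan is to prove the Schatten Hölder inequality by reducing the matrix statement to the classical scalar Hölder inequality via von Neumann's trace inequality. The key observation is that $|\tr(A^\dagger B)|$ can be controlled entirely by the singular values of $A$ and $B$, after which the matrix problem becomes a statement about nonnegative sequences.

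First, I would invoke von Neumann's trace inequality: if $\sigma_1(A) \geq \cdots \geq \sigma_d(A) \geq 0$ and $\sigma_1(B) \geq \cdots \geq \sigma_d(B) \geq 0$ are the singular values arranged in nonincreasing order, then $|\tr(A^\dagger B)| \leq \sum_{i=1}^d \sigma_i(A)\, \sigma_i(B)$. To prove this, I would write the SVDs $A = U_A \Sigma_A V_A^\dagger$ and $B = U_B \Sigma_B V_B^\dagger$, use the cyclic property of the trace to obtain $\tr(A^\dagger B) = \tr(\Sigma_A W \Sigma_B W')$ where $W = V_A^\dagger U_B$ and $W' = V_B^\dagger U_A$ are unitary, and then bound this expression by a rearrangement argument exploiting that $(|W_{ij}|^2)$ and $(|W'_{ij}|^2)$ are doubly stochastic matrices (so the trace is a doubly stochastic average of products $\sigma_i(A) \sigma_{\pi(i)}(B)$, which is maximized by the identity pairing).

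Second, I would apply the classical (Young-inequality-based) Hölder inequality to the nonnegative sequences $(\sigma_i(A))_i$ and $(\sigma_i(B))_i$ to conclude
\begin{equation*}
    \sum_{i=1}^d \sigma_i(A)\, \sigma_i(B) \leq \left( \sum_i \sigma_i(A)^p \right)^{1/p} \left( \sum_i \sigma_i(B)^q \right)^{1/q} = \|A\|_p \|B\|_q,
\end{equation*}
where the final equality is precisely the definition of the Schatten $p$- and $q$-norms. The endpoint cases $p \in \{1, \infty\}$ (with $\|{\cdot}\|_\infty$ interpreted as the largest singular value) follow by direct inspection.

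The main obstacle is von Neumann's trace inequality itself, since the scalar Hölder step is routine. An alternative clean strategy that avoids this detour would be to verify the inequality first at the endpoints $(p,q) = (\infty, 1)$ and $(1, \infty)$ using the variational characterizations $\|A\| = \sup_{\|x\|=1} \|Ax\|$ and $\|B\|_1 = \sup_{U \text{ unitary}} |\tr(UB)|$, and then extend to all intermediate $p, q$ with $1/p + 1/q = 1$ via Riesz--Thorin complex interpolation on the Schatten scale. Either route gives a short, self-contained argument.
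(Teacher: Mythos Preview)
Your proposal is correct and follows the standard route to the Schatten H\"older inequality: reduce to singular values via von Neumann's trace inequality, then apply the scalar H\"older inequality. The alternative interpolation argument you sketch would also work.

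However, the paper does not actually prove this proposition at all. It is stated in the background section as a standard fact and left unproven, as is common for well-known inequalities of this type. So there is no ``paper's own proof'' to compare against; your write-up simply supplies a proof where the paper chose to cite the result without one.
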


Next we consider a Hilbert space $(\C^2)^{\otimes n} \cong \C^{2^n}$ of $n$ qubits. Let $\I_n \in \C^{2^n \times 2^n}$ be the $n$-qubit identity operator. We will write $\I$ when the dimension is clear from context.

\begin{definition}
The \emph{Pauli matrices} are defined as
\begin{equation}
    \I = \begin{pmatrix}
        1 & 0\\
        0 & 1
    \end{pmatrix}, \quad
    X = \begin{pmatrix}
        0 & 1\\
        1 & 0
    \end{pmatrix}, \quad
    Y = \begin{pmatrix}
        0 & -\i\\
        \i & 0
    \end{pmatrix}, \quad
    Z = \begin{pmatrix}
        1 & 0\\
        0 & -1
    \end{pmatrix}.
\end{equation}
The set of $n$-qubit Pauli operators $\Pauli{n} \coloneqq \{\I, X, Y, Z\}^{\otimes n}$ is a complete orthogonal operator basis for $\C^{2^n}$, obeying the trace-orthogonality condition $\tr(P^\dagger Q) = 2^n \delta_{PQ}$ for all $P, Q \in \Pauli{n}$.
\end{definition}

As such, any $n$-qubit operator $A$ can be written as
\begin{equation}
    A = \sum_{P \in \Pauli{n}} x_P P
\end{equation}
where $x_P = \frac{1}{2^n} \tr(P^\dagger A) \in \C$ are sometimes referred to as the \emph{Fourier coefficients} or \emph{Pauli spectrum} of $A$. There is a close relation between the $\ell_2$-norm of these coefficients, the Frobenius norm, and the norm of Choi-like vectors of $A$.

\begin{proposition}[Fourier coefficient identities]
    Let $A = \sum_{P \in \Pauli{n}} x_P P$ be any $2^n \times 2^n$ complex matrix. Parseval's identity states that
    \begin{equation}
        \|x\|_2 = \frac{1}{\sqrt{2^n}} \|A\|_F.
    \end{equation}
    An important related identity is
    \begin{equation}
        \frac{1}{\sqrt{2^n}} \|A\|_F = \| (A \otimes \I_n) \ket{\Omega} \|,
    \end{equation}
    where $\ket{\Omega} = \frac{1}{\sqrt{2^n}} \sum_{b \in \{0, 1\}^n} \ket{b} \ket{b}$. This is a consequence of a more general result,
    \begin{equation}
        \ev{(A \otimes \I_n)}{\Omega} = \frac{1}{2^n} \tr A.
    \end{equation}
\end{proposition}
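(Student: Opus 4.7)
The plan is to prove the three identities in the order that makes them logically flow into each other: start with the general matrix element formula $\ev{(A \otimes \I_n)}{\Omega} = \frac{1}{2^n}\tr A$, use it to get the Frobenius-norm expression for $\|(A \otimes \I_n)\ket{\Omega}\|$, and finally deduce Parseval's identity from the Pauli trace-orthogonality relation $\tr(P^\dagger Q) = 2^n \delta_{PQ}$.

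First I would verify the general identity by directly expanding the maximally entangled state. Writing $\ket{\Omega} = 2^{-n/2} \sum_b \ket{b}\ket{b}$ and computing
\begin{equation}
    \ev{(A \otimes \I_n)}{\Omega} = \frac{1}{2^n} \sum_{b, b'} \bra{b} A \ket{b'} \ip{b}{b'} = \frac{1}{2^n} \sum_b \bra{b} A \ket{b} = \frac{1}{2^n} \tr A,
\end{equation}
where the identity on the ancilla forces $b = b'$. Then for the Choi-vector norm identity, I apply the general result to $A^\dagger A$:
\begin{equation}
    \|(A \otimes \I_n)\ket{\Omega}\|^2 = \ev{(A^\dagger A \otimes \I_n)}{\Omega} = \frac{1}{2^n} \tr(A^\dagger A) = \frac{1}{2^n} \|A\|_F^2.
\end{equation}
Taking the square root gives the claimed equality.

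For Parseval's identity, I would expand $\|A\|_F^2 = \tr(A^\dagger A)$ using the Pauli decomposition $A = \sum_P x_P P$ and invoke trace-orthogonality:
\begin{equation}
    \|A\|_F^2 = \tr\!\left( \sum_{P, Q \in \Pauli{n}} \overline{x_P} x_Q P^\dagger Q \right) = \sum_{P, Q} \overline{x_P} x_Q \tr(P^\dagger Q) = 2^n \sum_P |x_P|^2 = 2^n \|x\|_2^2.
\end{equation}
Rearranging yields $\|x\|_2 = 2^{-n/2}\|A\|_F$.

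Since each of the three claims follows from a one-line computation, there is no real obstacle here — the proof is a routine verification that chains together the definition of $\ket{\Omega}$, the cyclicity of the trace, and the orthogonality of the Pauli basis. The only thing worth being careful about is stating the computations in a way that makes the logical dependencies explicit, so that the Choi-norm identity is seen as a corollary of the matrix element identity rather than as an independent calculation.
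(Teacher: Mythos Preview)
Your proof is correct; each of the three computations is the standard one-line verification. The paper itself states this proposition as a background fact without supplying a proof, so there is nothing to compare against---your argument is exactly the routine derivation one would expect.
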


We also standardize our notation for controlled unitaries.

\begin{definition}
    For any bit string $b \in \{0, 1\}^k$, the $k$-qubit controlled variant of an $n$-qubit unitary operator $U$ is
    \begin{equation}
        \ctrl{b}{U} \coloneqq \op{b}{b} \otimes U + (\I_k - \op{b}{b}) \otimes \I_n.
    \end{equation}
    When it is unimportant to specify the control bits $b$, or it is otherwise clear from context, we will simply write $\ctrl{}{U}$.
\end{definition}

Finally, we also work with \emph{quantum channels}:~completely positive, trace-preserving maps on the space of operators. The natural norm to equip quantum channels with is the diamond norm, also known as the completely bounded trace norm.

\begin{definition}
    Let $\mathcal{E} : \C^{d \times d} \to \C^{d \times d}$ be a quantum channel. The \emph{diamond norm} of $\mathcal{E}$ is
    \begin{equation}
        \| \mathcal{E} \|_\diamond \coloneqq \sup\{ \| (\mathcal{E} \otimes \I)(\rho) \|_1 : \rho \in \C^{d^2 \times d^2}, \|\rho\|_1 \leq 1\}.
    \end{equation}
\end{definition}

\subsection{Block encodings}

Here we collect some standard results about block encodings, namely the ability to take linear combinations of unitaries. Our presentation follows that of \cite{gilyen2019quantum} for convenience.

\begin{definition}[{\cite[Definition 43]{gilyen2019quantum}}]
    Let $A$ be an $n$-qubit operator and $\alpha, \varepsilon \geq 0$. An $(n + a)$-qubit unitary operator $B$ is called an $(\alpha, a, \varepsilon)$-\emph{block encoding} of $A$ if
    \begin{equation}
        \| A - \alpha (\bra{0^a} \otimes \I_n) B (\ket{0^a} \otimes \I_n) \| \leq \varepsilon.
    \end{equation}
\end{definition}

For brevity, we will use the shorthand:~$(\alpha, a, \varepsilon)$-BE. One may write an $(\alpha, a, 0)$-BE as
\begin{equation}
    B = \begin{pmatrix}
        A/\alpha & \cdot\ \\
        \cdot & \cdot\ 
    \end{pmatrix}
\end{equation}
when the unspecified blocks are unimportant.

\begin{definition}[{\cite[Definition 51]{gilyen2019quantum}}]
    Let $y \in \C^K$ with $\|y\|_1 \leq \beta$, and $\varepsilon \geq 0$. The pair of $b$-qubit unitaries $(\prep_L, \prep_R)$ is called a $(\beta, b, \varepsilon)$-\emph{state-preparation pair} for $y$ if
    \begin{equation}
        \sum_{j=0}^{K-1} |\beta(c_j^* d_j) - y_j| \leq \varepsilon,
    \end{equation}
    where $c_j = \melem{j}{\prep_L}{0^b}$ and $d_j = \melem{j}{\prep_R}{0^b}$, and for all $j = K, \ldots, 2^b - 1$, $c_j^* d_j = 0$.
\end{definition}

\begin{proposition}[{Linear combination of block encodings~\cite[Lemma 52]{gilyen2019quantum}}]\label{prop:LCBE}
    Let $A = \sum_{j=0}^{K-1} y_j A_j$ be an $n$-qubit operator, expressed as a linear combination of $K$ operators $A_j$ with $y \in \C^K$. Let $\varepsilon_1, \varepsilon_2 \geq 0$. For each $A_j$, suppose that there is an $(\alpha, a, \varepsilon_2)$-block encoding $B_j$. Also, let $(\prep_L, \prep_R)$ be a $(\beta, b, \varepsilon_1)$-state-preparation pair for $y$. Define the $(n + a + b)$-qubit unitary
    \begin{equation}
        \sel = \sum_{j=0}^{K-1} \op{j}{j} \otimes B_j + \sum_{j=K}^{2^b-1} \op{j}{j} \otimes \I_{a + n}.
    \end{equation}
    Then $(\prep_L^\dagger \otimes \I_{a + n}) \sel (\prep_R \otimes \I_{a + n})$ is an $(\alpha\beta, a + b, \alpha\varepsilon_1 + \beta\varepsilon_2)$-block encoding of $A$.\footnote{Note that a typo in \cite{gilyen2019quantum} mistakenly carried a factor of $\alpha$ in the $\beta\varepsilon_2$ error term.}
\end{proposition}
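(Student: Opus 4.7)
The plan is to directly compute the block that results from the proposed circuit and show it matches $A/(\alpha\beta)$ up to the stated error. First I would expand the action of the circuit on $\ket{0^b}\ket{0^a}\otimes \I_n$. Writing $\prep_R\ket{0^b} = \sum_j d_j \ket{j}$, the $\sel$ unitary then routes each branch to its corresponding block encoding, giving $\sum_j d_j \ket{j}\otimes B_j(\ket{0^a}\otimes \I_n)$. Projecting from the left with $(\bra{0^b}\prep_L^\dagger)\otimes (\bra{0^a}\otimes \I_n)$ collapses the branches via $c_j^* = \melem{0^b}{\prep_L^\dagger}{j}$ (using that $c_j^* d_j = 0$ for $j \geq K$ so the ``padding'' terms in $\sel$ contribute nothing), yielding
\begin{equation}
    M \coloneqq (\bra{0^{a+b}}\otimes \I_n)\,(\prep_L^\dagger\otimes \I_{a+n})\,\sel\,(\prep_R\otimes \I_{a+n})\,(\ket{0^{a+b}}\otimes \I_n) = \sum_{j=0}^{K-1} c_j^* d_j \,\frac{\widetilde{A}_j}{\alpha},
\end{equation}
where $\widetilde{A}_j \coloneqq \alpha(\bra{0^a}\otimes \I_n) B_j (\ket{0^a}\otimes \I_n)$ satisfies $\|A_j - \widetilde{A}_j\| \leq \varepsilon_2$ by the block-encoding hypothesis, and $\|\widetilde{A}_j\| \leq \alpha$ because it is $\alpha$ times a sub-block of a unitary.

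Next I would bound $\|A - \alpha\beta M\|$ by inserting the intermediate object $\sum_j y_j \widetilde{A}_j$ and using the triangle inequality:
\begin{equation}
    \l\|\, A - \alpha\beta M \,\r\| \;\leq\; \l\|\, \sum_{j=0}^{K-1} y_j\l(A_j - \widetilde{A}_j\r) \,\r\| \;+\; \l\|\, \sum_{j=0}^{K-1} \l(y_j - \beta c_j^* d_j\r)\widetilde{A}_j \,\r\|.
\end{equation}
The first sum is at most $\sum_j |y_j|\,\|A_j - \widetilde{A}_j\| \leq \|y\|_1 \,\varepsilon_2 \leq \beta\varepsilon_2$, using only the block-encoding accuracy and the assumption $\|y\|_1 \leq \beta$. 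The second sum is at most $\alpha \sum_j |y_j - \beta c_j^* d_j| \leq \alpha\varepsilon_1$, by the state-preparation-pair guarantee and $\|\widetilde{A}_j\| \leq \alpha$. Adding these gives $\|A - \alpha\beta M\| \leq \alpha\varepsilon_1 + \beta\varepsilon_2$, which is the desired block-encoding error.

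The only remaining point is to check that the operator $(\prep_L^\dagger\otimes \I_{a+n})\sel(\prep_R\otimes \I_{a+n})$ is indeed unitary, but this is immediate because it is a product of unitaries: $\sel$ is unitary by construction (a direct-sum of the $B_j$'s and identities indexed by the $b$-qubit control register), while $\prep_L^\dagger$ and $\prep_R$ are unitary by assumption. Hence the constructed circuit is a genuine $(\alpha\beta,\,a+b,\,\alpha\varepsilon_1 + \beta\varepsilon_2)$-block encoding of $A$.

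I do not anticipate any real obstacles here: the proof is a bookkeeping exercise. The only subtle choice is the intermediate object used in the triangle inequality; inserting $\sum_j y_j \widetilde{A}_j$ (rather than, say, $\sum_j \beta c_j^* d_j A_j$) is what cleanly separates the two error sources so that the bound $\|\widetilde{A}_j\| \leq \alpha$ (rather than the weaker $\|A_j\| \leq \alpha + \varepsilon_2$) can be applied, avoiding a spurious $\varepsilon_1\varepsilon_2$ cross-term.
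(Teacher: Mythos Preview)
Your proposal is correct and follows the standard argument for this lemma. Note that the paper does not actually supply its own proof of this proposition---it is quoted as a cited result from \cite{gilyen2019quantum}---so there is nothing in the paper to compare against; your derivation is essentially the proof one finds in the original reference.
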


In this paper, we assume that we can produce state-preparation pairs with infinite precision. While some finite-precision machine error will be incurred in practice, this is vastly negligible to all other errors in our algorithm. Furthermore by the Solovay--Kitaev theorem, we only pay a polylogarithmic overhead for synthesizing the state-preparation pair (as well as any other quantum gates that we may need) from a finite gate set~\cite{dawson2006solovay}. If necessary, we trust the diligent reader to carry forth the machine precision throughout our exposition.

\section{Parameter estimation from pseudo-Choi states}\label{sec:pseudo-choi_learning}

As the base learning algorithm, we will adapt the algorithm of \cite{castaneda2023hamiltonian}, which learns the Hamiltonian from a unique resource called its pseudo-Choi state. Given access to such a resource state, their algorithm can efficiently learn any Hamiltonian supported on a polynomial number of terms, regardless of any constraints such as locality that typically hinder other approaches. In this section, we review and streamline the key components of this algorithm. Readers familiar with these results may skip ahead to \cref{sec:structure_learning}.

In order to prepare the pseudo-Choi state from queries to the time evolution operator, \cite{castaneda2023hamiltonian} assumes the ability to perform controlled evolutions $\ctrl{}{e^{\pm\i H t}}$. Controlizing black-box unitaries is in general an impossible task~\cite{araujo2014quantum,gavorova2024topological}. However, because we have fractional-query access to $U(t)$ through the real parameter $t$, we can approximate $\ctrl{}{e^{-\i H t}}$ to arbitrary accuracy with no overhead in the evolution time~\cite{odake2024higher}. Instead, controlization only affects the time resolution and number of additional quantum gates;~we carry out this analysis in \cref{sec:controlization}. For simplicity of the present exposition, we will work with the exact controlled unitary.

\subsection{State preparation from time-evolution queries}

Castaneda and Wiebe~\cite{castaneda2023hamiltonian} defined the pseudo-Choi state with a reference control qubit as follows.

\begin{definition}
    Let $H$ be an $n$-qubit Hamiltonian. The \emph{pseudo-Choi state of $H$ with reference}\footnote{Later we will introduce a referenceless version, which is a $2n$-qubit state.} is the $(2n + 1)$-qubit state
    \begin{equation}
        \ket{\pchoiref(H)} \coloneqq \frac{(H \otimes \I_n) \ket{\Omega} \ket{0} + \ket{\Omega} \ket{1}}{\norm(H)},
    \end{equation}
    where $\ket{\Omega} = \frac{1}{\sqrt{2^n}} \sum_{b \in \{0, 1\}^n} \ket{b} \ket{b}$ and $\norm(H) = \sqrt{\frac{1}{2^n} \|H\|_F^2 + 1}$ is a normalization factor.
\end{definition}

The reference here will be vital for deducing the Hamiltonian parameters from measurements on the state. Observe that since $\|H\|_F \leq \sqrt{2^n} \|H\|$, it follows that $\norm(H) \leq \|H\| + 1$.

The actual pseudo-Choi state that we consider will be
\begin{equation}\label{eq:exact_PCS}
    \ket{\pchoiref(H/\Delta)} = \frac{(H/\Delta \otimes \I_n) \ket{\Omega} \ket{0} + \ket{\Omega} \ket{1}}{\norm(H/\Delta)},
\end{equation}
where $\Delta \geq 2\|H\|$ is a normalization on the Hamiltonian. This normalization is invoked because the pseudo-Choi states will be prepared from a block encoding of the Hamiltonian. Block encoded matrices must have norm at most $1$ to ensure unitarity of the encoding operator;~the further factor of $2$ ensures that the QSVT used to construct this block encoding is also properly normalized. \cite{low2017hamiltonian,gilyen2019quantum} show how to block encode $H$ from forward- and reverse-time queries to its time evolution operator. We opt to restate the formulation from \cite{gilyen2019quantum} here.

\begin{proposition}[{\cite[Corollary 71]{gilyen2019quantum}}]\label{prop:arcsin_log_unitary}
    Let $\varepsilon \in (0, \frac{\pi}{4}]$. Suppose that $U = e^{-\i H /\Delta}$ for some $\Delta \geq 2\|H\|$. There exists a $(\frac{\pi}{2}, 2, \varepsilon)$-BE of $H/\Delta$, constructed from $\O(\log(1/\varepsilon))$ queries to \emph{$\ctrl{}{U}$ and $\ctrl{}{U^\dagger}$}, and $\O(\log(1/\varepsilon))$ additional quantum gates.
\end{proposition}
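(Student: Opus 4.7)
The plan is to construct the block encoding in two stages. First, I would build a block encoding of $\sin(H/\Delta)$ directly from $\ctrl{}{U}$ and $\ctrl{}{U^\dagger}$ via a short linear combination of unitaries. Second, I would apply QSVT with a polynomial approximation to the (appropriately scaled) $\arcsin$ function to ``invert'' the sine and recover a block encoding of $H/\Delta$. Because $\Delta \geq 2\|H\|$, every eigenvalue of $H/\Delta$ lies in $[-1/2, 1/2]$, and thus the eigenvalues of $\sin(H/\Delta)$ lie in $[-\sin(1/2), \sin(1/2)] \subset [-1/2, 1/2]$, strictly inside the domain on which $\arcsin$ is analytic; this is exactly what makes the second stage succeed.

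For the LCU stage, I would use the identity $\sin(H/\Delta) = \frac{1}{2\i}(U^\dagger - U)$, with the two coefficients $\pm \frac{1}{2\i}$ having $\ell_1$-norm equal to $1$. Taking a single-qubit state-preparation pair (a Hadamard composed with a single-qubit phase gate) together with a select unitary that applies $\ctrl{0}{U^\dagger}$ followed by $\ctrl{1}{U}$, \cref{prop:LCBE} yields a $(1, 1, 0)$-block encoding of $\sin(H/\Delta)$ using one query each to $\ctrl{}{U}$ and $\ctrl{}{U^\dagger}$ and $\O(1)$ auxiliary single-qubit gates.

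For the QSVT stage, I would invoke an odd polynomial $p$ of degree $d$ with $\|p\|_{\infty,[-1,1]} \leq 1$ satisfying $\sup_{x \in [-1/2,1/2]} |p(x) - \tfrac{2}{\pi}\arcsin(x)| \leq \tfrac{2\varepsilon}{\pi}$. Applied via QSVT to the inner $(1, 1, 0)$-BE of $\sin(H/\Delta)$, this produces a $(1, 2, \tfrac{2\varepsilon}{\pi})$-BE of $p(\sin(H/\Delta))$, which approximates $\tfrac{2}{\pi}\arcsin(\sin(H/\Delta)) = \tfrac{2}{\pi}(H/\Delta)$; rescaling the subnormalization gives exactly a $(\tfrac{\pi}{2}, 2, \varepsilon)$-BE of $H/\Delta$. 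QSVT adds one ancilla for the projector-controlled phase shifts, consumes $d$ queries to the inner block encoding (each costing one query to $\ctrl{}{U}$ or $\ctrl{}{U^\dagger}$), and uses $\O(d)$ additional single-qubit rotations for the QSP phase factors, so the overall costs scale as $\O(d)$.

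The key obstacle is justifying $d = \O(\log(1/\varepsilon))$. Since $\arcsin$ has branch-point singularities at $\pm 1$, its best polynomial approximation on $[-1, 1]$ converges only polynomially in the degree; the crucial point is that we only need the approximation on the compact subinterval $[-\sin(1/2), \sin(1/2)]$, on which $\arcsin$ is real-analytic with a uniform band of analyticity. Truncated Chebyshev expansions (or the explicit Low--Chuang polynomial construction) then converge exponentially in the degree, giving the stated logarithmic dependence. Some additional care is needed to preserve oddness of $p$ and to verify existence of the QSP phase factors realizing it, which are handled by the standard symmetrization arguments used in the proof of Corollary 71 of Gilyén--Su--Low--Wiebe. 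Assembling the LCU, the polynomial approximation, and the QSVT reduction then yields the claimed result.
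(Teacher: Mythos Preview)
Your proposal is correct and follows essentially the same two-stage route as the paper: first construct a $(1,1,0)$-block encoding of $\sin(H/\Delta)$ from one query each to $\ctrl{}{U}$ and $\ctrl{}{U^\dagger}$, then apply QSVT with a degree-$\O(\log(1/\varepsilon))$ odd polynomial approximating $\tfrac{2}{\pi}\arcsin$ on $[-\tfrac12,\tfrac12]$. The only cosmetic difference is that the paper realizes the $\sin$ block encoding by the conjugation $\ctrl{}{U}(Y\otimes\I_n)\ctrl{}{U^\dagger}$ rather than an explicit LCU, but these are equivalent constructions with identical cost.
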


\begin{proof}[Proof sketch.]
    We outline the proof idea for completeness and to clarify some details. We begin with the observation that
    \begin{equation}
     \ctrl{}{U}(Y \otimes \I_n)\ctrl{}{U^\dagger} = \begin{pmatrix}
         \sin(H/\Delta) & \cdot\ \\
         \cdot & \cdot\ 
     \end{pmatrix},
    \end{equation}
    where the matrix representation on the rhs is in the $\ket{\pm} = \frac{1}{\sqrt{2}}(\ket{0} \pm \ket{1})$ basis. This is a $(1, 1, 0)$-BE of $\sin(H/\Delta)$. Then, \cite[Lemma 70]{gilyen2019quantum} shows that there exists a polynomial $P(x)$ which approximates the $\arcsin(x)$ function on the domain $x \in [-\frac{1}{2}, \frac{1}{2}]$. Specifically, for any $\varepsilon' \in (0, \frac{1}{2}]$, this $P(x)$ is a polynomial of odd degree $d = \O(\log(1/\varepsilon'))$ obeying
    \begin{equation}
        \max_{-\frac{1}{2} \leq x \leq \frac{1}{2}} \l| P(x) - \frac{2}{\pi} \arcsin(x) \r| \leq \varepsilon'
    \end{equation}
    and $\max_{-\frac{1}{2} \leq x \leq \frac{1}{2}} |P(x)| \leq 1$. Hence, we can apply the QSVT on a single ancilla qubit to encode $P(\sin(H/\Delta))$. A polynomial transformation of degree $d$ requires $d$ queries to $U$ and $U^\dagger$ and $\O(d)$ additional quantum gates. This produces a $(\frac{\pi}{2}, 2, \frac{\pi}{2} \varepsilon')$-BE of $H/\Delta$;~letting $\varepsilon = \frac{\pi}{2} \varepsilon'$ furnishes the claim.
\end{proof}

From this block encoding, \cite{castaneda2023hamiltonian} shows that a pseudo-Choi state of an approximation to $H/\Delta$ can be prepared with probability greater than $\frac{1}{2}$.

\begin{proposition}[{\cite[Lemma 19]{castaneda2023hamiltonian}}]\label{prop:CW_pchoi_prep}
    Let $H$ be an $n$-qubit Hamiltonian and $\Delta \geq 2\|H\|$. Let $\varepsilon \in (0, \frac{\pi}{4}]$. With success probability at least $\frac{1}{2}$, we can prepare a copy of $\ket{\pchoiref(\frac{\widetilde{H}}{\Delta\pi/2})}$, where $\widetilde{H}$ is a Hermitian matrix obeying
    \begin{equation}
        \|H - \widetilde{H}\| \leq \varepsilon \Delta.
    \end{equation}
    This procedure uses $\O(\log(1/\varepsilon))$ queries to $U = e^{-\i H/\Delta}$, its inverse, and their controlled forms.
\end{proposition}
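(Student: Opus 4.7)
My plan is to build the preparation circuit directly from the block encoding supplied by \cref{prop:arcsin_log_unitary}. Let $B$ denote the $(\pi/2, 2, \varepsilon)$-BE of $H/\Delta$, and set
\begin{equation}
    \widetilde{H} \coloneqq \tfrac{\pi}{2}\Delta\,(\bra{00}\otimes \I_n)\,B\,(\ket{00}\otimes \I_n).
\end{equation}
The block encoding property immediately gives $\|H - \widetilde{H}\| \leq \varepsilon\Delta$, and $\widetilde{H}$ inherits Hermiticity from the fact that the QSVT construction underlying $B$ applies an odd real polynomial to the Hermitian block $\sin(H/\Delta)$.

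For the circuit I would use three registers: a single reference qubit, the two block-encoding ancillas, and a $2n$-qubit ``system-plus-copy'' register. The preparation proceeds in three steps. First, initialize the Choi register to $\ket{\Omega}$ (for instance via $n$ Bell pairs), the ancillas to $\ket{00}$, and the reference to $\ket{+}$ by a Hadamard. Second, apply $B$ to the ancillas and the system half of $\ket{\Omega}$, controlled on the reference being $\ket{0}$; in the paper's notation this is $\ctrl{0}{B}$. Third, measure the two block-encoding ancillas in the computational basis and post-select on outcome $00$, discarding them.

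Let $A \coloneqq \widetilde{H}/(\Delta\pi/2)$, so $\|A\|\leq 1$. Unwinding the circuit and projecting onto $\ket{00}$ on the ancilla register leaves the unnormalized state
\begin{equation}
    \tfrac{1}{\sqrt{2}}\bigl[(A \otimes \I_n)\ket{\Omega}\ket{0}_{\mathrm{ref}} + \ket{\Omega}\ket{1}_{\mathrm{ref}}\bigr],
\end{equation}
which is proportional to $\ket{\pchoiref(A)} = \ket{\pchoiref(\widetilde{H}/(\Delta\pi/2))}$ by definition. The squared norm of this unnormalized vector equals $\tfrac{1}{2}\bigl(\tfrac{1}{2^n}\|A\|_F^2 + 1\bigr)\geq \tfrac{1}{2}$, independent of $A$; the ``$+1$'' contributed by the reference branch is precisely what makes the success probability bound unconditional on the Hamiltonian's norm.

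The main obstacle is bookkeeping rather than conceptual: I must verify that $\ctrl{0}{B}$ can be implemented with the same asymptotic cost as $B$. This is straightforward because $B$ decomposes into applications of $\ctrl{}{U^{\pm 1}}$ and a constant-depth layer of fixed auxiliary gates (the signal-processing phases and the $Y$ flip), each of which admits an additional control qubit with only constant overhead. The query, gate, and ancilla counts then inherit the $\O(\log(1/\varepsilon))$ bound from \cref{prop:arcsin_log_unitary}, completing the claim.
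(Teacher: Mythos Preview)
Your proof is correct and matches the construction the paper relies on. The paper itself does not spell out a proof for this proposition (it simply cites \cite[Lemma 19]{castaneda2023hamiltonian}), but the same argument appears explicitly elsewhere in the paper: the proof of \cref{lem:refless_pchoi_prep} describes the identical block-encoding-then-postselect procedure for the referenceless variant, and the success-probability computation $\frac{1}{2}(\frac{1}{2^n}\|T'\|_F^2 + 1) \geq \frac{1}{2}$ that you give is reproduced verbatim in Section~5.2. Your handling of $\ctrl{0}{B}$ and the Hermiticity of $\widetilde{H}$ are both sound and consistent with how the paper treats these issues.
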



To produce $N$ copies of the state, $\O(N)$ queries to the block encoding suffice. This is a standard result derived from bounding the tail of a sequence of i.i.d.~Bernoulli trials. Following the argument given in \cite[Appendix B.6]{castaneda2023hamiltonian}, we prove the statement below in \cref{sec:bernoulli_bound} for completeness.

\begin{proposition}\label{prop:prob_state_prep}
    Let $Q \geq N \geq 1$ be integers. Let $\ket{\psi}$ be a state which is probabilistically prepared from a block encoding $B$, with success probability $p \in (0, 1]$. With high probability, we can prepare $N$ copies of $\ket{\psi}$ from
    \begin{equation}
        Q = \Theta(N/p)
    \end{equation}
    queries to $B$.
\end{proposition}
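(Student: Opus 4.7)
The plan is to model the $Q$ independent preparation attempts as a sequence of i.i.d.\ Bernoulli trials, each succeeding with probability $p$, and to show that choosing $Q$ as a sufficiently large constant times $N/p$ guarantees at least $N$ successes with high probability. Concretely, let $X = \sum_{i=1}^Q X_i \sim \mathrm{Bin}(Q, p)$ count the successful preparations. Then $\E[X] = Qp$, and the natural strategy is to choose $Q$ so that $Qp$ comfortably exceeds $N$, then invoke a concentration inequality to conclude that the lower tail $\Pr[X < N]$ is small.

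For the concrete estimate, I would apply the multiplicative Chernoff bound, which states that $\Pr[X \leq (1-\eta) \E[X]] \leq \exp(-\eta^2 \E[X]/2)$ for any $\eta \in (0,1)$. Taking $Q = cN/p$ for some constant $c > 1$ gives $\E[X] = cN$, and choosing $\eta = 1 - 1/c$ yields the bound
\begin{equation}
    \Pr[X < N] \leq \exp\!\left(-\tfrac{(c-1)^2}{2c}\, N\right).
\end{equation}
Thus for any desired failure probability $\delta$, one can fix a constant $c$ (e.g., $c = 2$) and conclude the claim once $N = \Omega(\log(1/\delta))$; more generally, the bound $Q = \Theta((N + \log(1/\delta))/p)$ handles the small-$N$ regime. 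For the stated ``high probability'' guarantee at constant $\delta$, a constant multiplicative factor in $Q/p$ suffices, matching $Q = \Theta(N/p)$.

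There is essentially no conceptual obstacle here; the argument is a textbook application of Chernoff bounds to independent trials, and independence across preparation attempts holds because each attempt runs the block encoding $B$ on freshly initialized ancillas. The only mild subtlety to be careful about is that after each attempt, one must reset (or discard) the ancilla register before the next attempt, so that the trials really are i.i.d.\ Bernoulli with success probability $p$ as dictated by the block-encoded amplitude. Once this is stated, the bound on $Q$ follows immediately.
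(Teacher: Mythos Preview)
Your proposal is correct and follows essentially the same approach as the paper: model the preparation attempts as i.i.d.\ Bernoulli trials and apply the multiplicative Chernoff bound to show that $Q = cN/p$ yields at least $N$ successes with failure probability $e^{-\Omega(N)}$. The paper additionally remarks that $Q = \Omega(N/p)$ is necessary (since $\E[X] = Qp$ must be at least roughly $N$) and treats the $N = 1$ case separately, but these are minor points you could add in one line each.
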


\subsection{Estimating with decoding operators}

Recall the Pauli decomposition of $H = \sum_{a=1}^m \lambda_a E_a$. In parallel, we define
\begin{equation}
    \widetilde{\lambda}_a = \frac{1}{2^n} \tr(E_a \widetilde{H}),
\end{equation}
which are the parameters in the block-encoded approximation. For each $a \in [m]$, \cite{castaneda2023hamiltonian} defines a set of so-called ``decoding operators'' on the $2n + 1$ qubits:
\begin{equation}\label{eq:decoding_ops}
\begin{split}
    O_a &= (E_a \otimes \I_n) \op{\Omega}{\Omega} \otimes \op{0}{1}, \quad a = 1, \ldots, m,\\
    O_\norm &= \op{\Omega}{\Omega} \otimes \op{1}{1}.
\end{split}
\end{equation}
These operators are defined such that their mean values with respect to $\ket{\pchoiref(\frac{\widetilde{H}}{\Delta\pi/2})}$ recover $\widetilde{\lambda}_a$, up to normalizations:\footnote{Note that the operators $O_a$ are not Hermitian, despite being constructed to have real-valued means. The postprocessing formulas used in classical shadows nonetheless easily accommodate complex-valued estimates.}
\begin{equation}
    \langle O_a \rangle = \frac{\widetilde{\lambda}_a}{\norm\l(\frac{\widetilde{H}}{\Delta\pi/2}\r{)^2} \Delta\pi/2}.
\end{equation}
While $\Delta$ is chosen by the learner, $\norm(\frac{\widetilde{H}}{\Delta\pi/2})$ is unknown. To access the value of this normalization factor, $O_\norm$ is used:
\begin{equation}
    \langle O_\norm \rangle = \norm\l(\frac{\widetilde{H}}{\Delta\pi/2}\r{)^{-2}}.
\end{equation}
Note that if we had not prepared the pseudo-Choi state with the reference control qubit, this normalization factor would have been challenging to infer.

The classical shadows protocol~\cite{huang2020predicting} over the Clifford group $\Cl(2^{2n+1})$ can be used to estimate all of these mean values with very efficient copy complexity. In brief, the shadow tomography protocol involves applying a random Clifford circuit to each copy of the state and then measuring in the standard basis. After collecting a sufficient number of samples, the measurement data is then classically postprocessed by a linear-inversion estimator which converges to the mean with strong theoretical guarantees. A major selling point of the technique is that, in terms of the explicit dependence on the number of observables, the copy complexity only scales logarithmically.

Let $\widehat{o}_j$ be the classical-shadows estimator for $\langle O_j \rangle$. \cite{castaneda2023hamiltonian} showed that their variances under Clifford shadow tomography are bounded by a constant:
\begin{equation}
    \max_{j \in [m] \cup \{\norm\}} \V[\widehat{o}_j] \leq 6.
\end{equation}
However, the estimate $\widehat{\lambda}_a$ of $\widetilde{\lambda}_a$ is a ratio of two point estimators,
\begin{equation}\label{eq:pchoi_shadow_estimator}
    \widehat{\lambda}_a \coloneqq \frac{\pi \Delta}{2} \frac{\Re[\widehat{o}_a]}{\Re[\widehat{o}_\norm]},
\end{equation}
which requires more than just bounding the variance to control its error. The analysis of \cite{castaneda2023hamiltonian} shows that learning each $\langle O_j \rangle$ up to precision $\varepsilon_{\mathrm{shadow}} = \Omega(\gamma/\Delta)$ suffices to achieve precision $\gamma$ in $\widehat{\lambda}_a$. Plugging this into the sample complexity for classical shadows~\cite{huang2020predicting}, they find that
\begin{equation}\label{eq:CW_sample_complexity}
    \O\l( \frac{\max_j \V[\widehat{o}_j] \log(m/\delta)}{\varepsilon_{\mathrm{shadow}}^2} \r) = \O\l( \frac{\Delta^2 \log(m/\delta)}{\gamma^2} \r)
\end{equation}
copies of the pseudo-Choi state suffice to produce an estimate $\widehat{\lambda} \in \R^m$ such that
\begin{equation}
    \Pr(\|\widetilde{\lambda} - \widehat{\lambda}\|_\infty \leq \gamma) \geq 1 - \delta.
\end{equation}
Note that the original analysis~\cite[Proposition 15]{castaneda2023hamiltonian} uses a propagation of uncertainty argument under a linear approximation;~a more rigorous analysis by the Taylor expansion of the random variable $\widehat{o}_a/\widehat{o}_\norm$ leads to the same conclusion (up to constant factors).

Finally, one must translate this estimate of $\widetilde{H}$ to an estimate of $H$. By H\"{o}lder's inequality, each parameter has systematic error at most
\begin{equation}\label{eq:block_error_coeff}
\begin{split}
    |\lambda_a - \widetilde{\lambda}_a| &= \frac{1}{2^n} |{\tr(E_a (H - \widetilde{H}))}|\\
    &\leq \frac{1}{2^n} \|E_a\|_1 \|H - \widetilde{H}\|\\
    &= \|H - \widetilde{H}\|.
\end{split}
\end{equation}
Recall from \cref{prop:CW_pchoi_prep} that $\|H - \widetilde{H}\| \leq \varepsilon\Delta$. Meanwhile, the statistical error is controlled by the shadow tomography analysis, which guarantees $\|\widetilde{\lambda} - \widehat{\lambda}\|_\infty \leq \gamma$ with a sufficient number of measurements. Altogether, by a triangle inequality the total error is
\begin{equation}\label{eq:CW_pchoi_total_error}
\begin{split}
    \|\lambda - \widehat{\lambda}\|_\infty &\leq \|\lambda - \widetilde{\lambda}\|_\infty + \|\widetilde{\lambda} - \widehat{\lambda}\|_\infty\\
    &\leq \O(\varepsilon\Delta) + \gamma.
\end{split}
\end{equation}

To bound this by $\epsilon$, we take $\O(\log(\Delta/\epsilon))$ queries to the controlled time evolution to generate a sufficiently accurate copy of the pseudo-Choi state. Because the probability of successful preparation is $\geq \frac{1}{2}$, only a constant number of repetitions are required to generate a copy from the block encoding (\cref{prop:prob_state_prep}). This bounds the first term of \cref{eq:CW_pchoi_total_error} by, say, $\epsilon/2$. Consuming $\Ot(\Delta^2/\epsilon^2)$ such copies via Clifford shadows bounds the second term, also by $\epsilon/2$. The total number of queries is therefore $\Ot(\Delta^2/\epsilon^2)$, and since each query runs for time $1/\Delta$, the total evolution time is $\Ot(\Delta/\epsilon^2)$. This is one of the main results of \cite{castaneda2023hamiltonian}.

\begin{proposition}[{\cite[Theorem 26, modified for $\ell_\infty$-error]{castaneda2023hamiltonian}}]\label{prop:CW_learning_algorithm}
    Let $H = \sum_{a=1}^m \lambda_a E_a$ be an unknown Hamiltonian and let $\Delta \geq 2 \|H\|$. Assuming access to $U(t) = e^{-\i H t}$, $U(t)^\dagger = U(-t)$, and their multi-controlled forms, there exists an algorithm which can learn each $\lambda_a$ to additive error $\epsilon$. The total number of queries to the time evolution operator is $\Ot(\Delta^2/\epsilon^2)$, the total evolution time is $\Ot(\Delta/\epsilon^2)$, and the minimum time resolution is $\Omega(1/\Delta)$. The amount of classical and quantum computation scales at most polynomially in all parameters.
\end{proposition}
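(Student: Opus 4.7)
The proof plan is to assemble the building blocks already developed in this section with an appropriate choice of precision parameters, then check that the resulting query and time complexities match the claim. The statement is essentially the end-to-end guarantee obtained by chaining \cref{prop:arcsin_log_unitary} (block encoding $H/\Delta$), \cref{prop:CW_pchoi_prep} (preparing the referenced pseudo-Choi state), \cref{prop:prob_state_prep} (amortizing the $\geq \tfrac{1}{2}$ success probability), and the Clifford classical shadows estimator \eqref{eq:pchoi_shadow_estimator} whose complexity is given by \eqref{eq:CW_sample_complexity}. So the proof is really a budgeting exercise.

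First, I would set the block-encoding precision to $\varepsilon = c\,\epsilon/\Delta$ for a small enough constant $c$, so that by \eqref{eq:block_error_coeff} the systematic error satisfies $\|\lambda - \widetilde{\lambda}\|_\infty \leq \varepsilon\Delta \leq \epsilon/2$. By \cref{prop:arcsin_log_unitary} this costs $Q_1 = \O(\log(\Delta/\epsilon))$ queries to $\ctrl{}{U^{\pm 1}}$ per block-encoding application, each of which is an evolution of duration $1/\Delta$. Next, I would set the shadow precision target to $\gamma = \epsilon/2$, so that \eqref{eq:CW_sample_complexity} prescribes $N = \Ot(\Delta^2/\epsilon^2)$ copies of $\ket{\pchoiref(\tfrac{\widetilde{H}}{\Delta\pi/2})}$ in order to guarantee $\|\widetilde{\lambda} - \widehat{\lambda}\|_\infty \leq \epsilon/2$ with probability $\geq 1 - \delta$ (taking $\delta$ constant for now, since median boosting handles amplification). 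By \cref{prop:CW_pchoi_prep} each attempted preparation succeeds with probability $\geq \tfrac{1}{2}$, so by \cref{prop:prob_state_prep} a total of $\O(N) = \Ot(\Delta^2/\epsilon^2)$ block-encoding invocations suffices. Combining the two error sources via the triangle inequality as in \eqref{eq:CW_pchoi_total_error} yields $\|\lambda - \widehat{\lambda}\|_\infty \leq \epsilon$, which is the desired learning guarantee.

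The cost accounting then follows immediately. The total number of queries to $U^{\pm 1}$ and their controlled forms is $\Ot(\Delta^2/\epsilon^2) \cdot Q_1 = \Ot(\Delta^2/\epsilon^2)$, where the polylog factor is absorbed by the tilde. Each such query is an evolution of duration $1/\Delta$, so the total evolution time is $\Ot(\Delta/\epsilon^2)$. The minimum time resolution is simply $1/\Delta$ since every primitive query is $e^{\pm \i H/\Delta}$ or a controlled version thereof. Computational efficiency holds because the QSVT circuit of \cref{prop:arcsin_log_unitary} has size polylogarithmic in $1/\varepsilon$, random Clifford sampling and inversion are classically poly-time, and the decoding operators in \eqref{eq:decoding_ops} have tractable Clifford-shadow predictors.

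The main technical delicacy, and the only point that is not purely mechanical, is the ratio estimator \eqref{eq:pchoi_shadow_estimator}. The Clifford-shadow variance bound $\max_j \V[\widehat{o}_j] \leq 6$ controls the numerator and denominator individually, but $\widehat{\lambda}_a$ is their quotient, so naive propagation is not enough. Here I would follow the route indicated in the text: a union bound puts both $\Re[\widehat{o}_a]$ and $\Re[\widehat{o}_\norm]$ within $\varepsilon_{\mathrm{shadow}} = \Omega(\gamma/\Delta)$ of their means, after which a first-order Taylor expansion of $x \mapsto x/y$ about $(\E\widehat{o}_a, \E\widehat{o}_\norm)$, together with the lower bound $\E[\widehat{o}_\norm] = \norm(\widetilde{H}/(\Delta\pi/2))^{-2} \geq 1/(1 + \|\widetilde{H}/(\Delta\pi/2)\|_F^2/2^n)$ on the denominator, shows that the induced error in $\widehat{\lambda}_a$ is $\O(\gamma)$. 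Plugging $\varepsilon_{\mathrm{shadow}} = \Omega(\gamma/\Delta)$ into the standard Clifford-shadow sample complexity then reproduces \eqref{eq:CW_sample_complexity} and closes the argument.
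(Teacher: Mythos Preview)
Your proposal is correct and follows essentially the same approach as the paper: the proof given in the text preceding the proposition already chains \cref{prop:arcsin_log_unitary}, \cref{prop:CW_pchoi_prep}, \cref{prop:prob_state_prep}, and the shadow-tomography bound \eqref{eq:CW_sample_complexity} with the same choices $\varepsilon = \Theta(\epsilon/\Delta)$ and $\gamma = \epsilon/2$, then applies the triangle inequality \eqref{eq:CW_pchoi_total_error} and tallies the costs exactly as you do. Your added remark on the ratio estimator mirrors the paper's comment that a rigorous Taylor-expansion analysis of $\widehat{o}_a/\widehat{o}_\norm$ yields the same conclusion as the propagation-of-uncertainty argument.
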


\begin{remark}
    This algorithm can perform structure learning of arbitrary Hamiltonians, with some important caveats. First, suppose the pool of possible Hamiltonian terms is restricted to some polynomially sized set. In that case, shadow tomography gives sufficient samples to estimate all the possible terms, and the structure can be learned by selecting only those which have large magnitude. For example, the set of $k$-local terms has size $\O(n^k)$, and this is the setting originally considered in \cite{castaneda2023hamiltonian}. Second, if one allows an exponential amount of classical computation, then the structure can be learned by brute-force processing the shadows over all $4^n - 1$ Pauli operators. Although computationally inefficient, the query complexity is only increased by a factor of $\O(\log m) = \O(\log 4^n) = \O(n)$ for the shadow tomography step. \cite{caro2024learning} uses a similar idea that is also computationally inefficient in this setting for the same reasons.
\end{remark}

\section{Identifying Pauli spectra with pseudo-Choi states}\label{sec:structure_learning}

In this section, we show how to identify the structure $\{E_a : |\lambda_a| \geq \epsilon\}$ of $H = \sum_{a=1}^m \lambda_a E_a$, using a total evolution time of $\Ot(\Delta/\epsilon^2)$. In the literature, this problem is typically encountered as learning where the Pauli spectrum of a \emph{unitary} operator has large mass~\cite{montanaro2010quantum,angrisani2023learning}. Here, we instead want to identify the Pauli spectrum of a non-unitary operator, $H$. The pseudo-Choi state provides us a convenient resource to accomplish this task.

To deduce the identity of the unknown terms, the key idea is to sample from the referenceless pseudo-Choi state of $H$. This state can be prepared by essentially the same techniques as for the pseudo-Choi state with reference, described in \cref{sec:pseudo-choi_learning}. The only changes are that we do not need to control the block encoding, and that we have different guarantees on the success probability.

First, let us define these states.

\begin{definition}
    Let $H$ be an $n$-qubit Hamiltonian. The \emph{referenceless pseudo-Choi state of $H$} is the $2n$-qubit state
    \begin{equation}
        \ket{\pchoi(H)} \coloneqq \frac{(H \otimes \I_n) \ket{\Omega}}{\|(H \otimes \I_n) \ket{\Omega}\|},
    \end{equation}
    where $\ket{\Omega} = \frac{1}{\sqrt{2^n}} \sum_{b \in \{0, 1\}^n} \ket{b} \ket{b}$ and $\|(H \otimes \I_n) \ket{\Omega}\| = \frac{1}{\sqrt{2^n}} \|H\|_F$.
\end{definition}

Unlike the pseudo-Choi state with reference, this referenceless variant is invariant under rescaling $H \mapsto sH$ for any $s \in \C\setminus\{0\}$. Let us briefly outline a method for preparing this state if one has access to both positive- and negative-time evolutions. Without negative-time dynamics, a different approach must be taken to block encode the Hamiltonian. This results in an entirely different query complexity;~see \cref{sec:time_reversal_free} for that setting.

\begin{lemma}\label{lem:refless_pchoi_prep}
    Let $H = \sum_{a=1}^m \lambda_a E_a$ be an $n$-qubit Hamiltonian, $\Delta \geq 2 \|H\|$, and $\varepsilon > 0$. With success probability $\Theta(\|\widetilde{\lambda}/\Delta\|^2)$, we can prepare a copy of $\ket{\pchoi(\widetilde{H})}$, where $\widetilde{H}$ is a Hermitian matrix obeying
    \begin{equation}
        \|H - \widetilde{H}\| \leq \varepsilon
    \end{equation}
    and $\widetilde{\lambda}_a = \frac{1}{2^n} \tr(E_a \widetilde{H})$. This procedure uses $\O(\log(\Delta/\varepsilon))$ queries to $U = e^{-\i H/\Delta}$, its inverse, and their controlled forms.
\end{lemma}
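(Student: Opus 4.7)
The plan is to reuse the block-encoding construction from \cref{prop:arcsin_log_unitary}, but instead of probabilistically preparing the reference variant (as in \cref{prop:CW_pchoi_prep}) we simply apply the block encoding to one half of the maximally entangled state $\ket{\Omega}$ and postselect on the block flag. First, invoke \cref{prop:arcsin_log_unitary} with precision $\varepsilon' = \varepsilon/\Delta$ to obtain a $(\pi/2, 2, \varepsilon/\Delta)$-BE $B$ of $H/\Delta$, which costs $\O(\log(\Delta/\varepsilon))$ queries to $U, U^\dagger$ and their controlled forms. Defining $\widetilde{H}$ through
\begin{equation}
    (\bra{0^2} \otimes \I_n) B (\ket{0^2} \otimes \I_n) = \frac{\widetilde{H}}{\pi\Delta/2},
\end{equation}
the BE guarantee immediately yields $\|H - \widetilde{H}\| \leq \varepsilon$. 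Hermiticity of $\widetilde{H}$ is inherited from the fact that the QSVT construction in \cref{prop:arcsin_log_unitary} applies an odd real polynomial to a Hermitian block ($\sin(H/\Delta)$), so the resulting block is Hermitian.

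Next, prepare the $(2 + 2n)$-qubit state $\ket{0^2}_A \otimes \ket{\Omega}_{SR}$, apply $B$ to the ancilla register $A$ and the system register $S$ (leaving the reference register $R$ untouched), and measure $A$ in the computational basis. Upon the outcome $\ket{0^2}$, the post-measurement state on $SR$ is
\begin{equation}
    \frac{1}{\|(\widetilde{H}\otimes\I_n)\ket{\Omega}\|} (\widetilde{H} \otimes \I_n)\ket{\Omega} = \ket{\pchoi(\widetilde{H})},
\end{equation}
where the cancellation of the $\pi\Delta/2$ factor uses the invariance of the referenceless pseudo-Choi state under positive rescaling of its argument. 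By the Fourier coefficient identities in the preliminaries, the success probability of this measurement is
\begin{equation}
    \frac{\|(\widetilde{H}\otimes\I_n)\ket{\Omega}\|^2}{(\pi\Delta/2)^2} = \frac{\|\widetilde{H}\|_F^2}{2^n (\pi\Delta/2)^2} = \frac{\|\widetilde{\lambda}\|^2}{(\pi\Delta/2)^2} = \Theta\l(\|\widetilde{\lambda}/\Delta\|^2\r),
\end{equation}
which matches the stated bound.

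The main technical subtleties are (i) verifying that the polynomial used in \cref{prop:arcsin_log_unitary} indeed yields a Hermitian $\widetilde{H}$ on the block (so that the prepared state is a genuine pseudo-Choi state of a Hermitian operator), and (ii) tracking the subnormalization factor $\pi/2$ when converting between the BE parameter, the Frobenius norm, and the Pauli coefficient vector. Neither step is obstructive: the former is standard QSVT bookkeeping, and the latter amounts to a direct application of Parseval's identity.
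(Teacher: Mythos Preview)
Your proposal is correct and follows essentially the same approach as the paper: invoke the $(\pi/2,2,\varepsilon/\Delta)$-block encoding of $H/\Delta$ from \cref{prop:arcsin_log_unitary}, apply it (tensored with identity) to $\ket{0^2}\ket{\Omega}$, postselect on the ancilla, and compute the success probability via Parseval's identity. The paper's proof is slightly terser and omits the explicit remarks on Hermiticity of $\widetilde{H}$ and the $\pi/2$ bookkeeping, but the argument is the same.
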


\begin{proof}
    The preparation of this state is similar to that of the variant with reference, see \cref{prop:CW_pchoi_prep}. The main difference here is that we do not need a controlled form of the block encoding. Let $B$ be the $(\frac{\pi}{2}, 2, 0)$-BE of $\widetilde{H}/\Delta$ produced from $\O(\log(\Delta/\varepsilon))$ queries, where $\|H - \widetilde{H}\| \leq \varepsilon$. Applying $B \otimes \I_n$ to $\ket{00}\ket{\Omega}$ and measuring the first two qubits, if we obtain $\ket{00}$ in that register then we have successfully prepared $\ket{\pchoi(\widetilde{H})}$ in the remaining $2n$ qubits. This occurs with probability
    \begin{equation}
    \begin{split}
        \Pr(00) &= \l\| \l(\frac{\widetilde{H}}{\Delta \pi/2} \otimes \I_n\r) \ket{\Omega} \r{\|^2}\\
        &= \l( \frac{2}{\pi} \r{)^2} \frac{\|\widetilde{H}/\Delta\|^2_F}{2^n}\\
        &= \l( \frac{2}{\pi} \r{)^2} \| \widetilde{\lambda}/\Delta \|^2.
    \end{split}
    \end{equation}
\end{proof}

Let us remark briefly about the applicability of this result. In the context of the bootstrap framework, we will apply \cref{lem:refless_pchoi_prep} in the first step where $\widehat{H}_0 = 0$. Once we have a nontrivial estimate $\widehat{H}_j \neq 0$, we synthesize the residual Hamiltonian $H - \widehat{H}_j$ and learn from that instead. In that case, the proof argument is the same but the constant $(2/\pi)^2$ becomes $1/4$.

Returning to the present setting, note that the block-encoded approximation $\widetilde{H} = \sum_{a=1}^{m'} \widetilde{\lambda}_a E_a$ may have $(m' - m)$ potentially superfluous error terms. For some $\gamma \in (0, 1)$, define the set
\begin{equation}
    S_\gamma(\widetilde{H}) \coloneqq \{ E_a : |\widetilde{\lambda}_a| > \gamma \}.
\end{equation}
This labels the terms in $\widetilde{H}$ that have weight more than $\gamma$. It suffices only to learn the terms in $S_\gamma(\widetilde{H})$ because, if $E_a \notin S_\gamma(\widetilde{H})$, then $\widehat{\lambda}_a = 0$ is a $\gamma$-accurate estimate of $\widetilde{\lambda}_a$. For small enough $\gamma$, such terms can safely be ignored.

In order to learn this set, we will appeal to the technique of Bell sampling. In our case, we sample from the pseudo-Choi state (instead of two copies of an $n$-qubit state, as was introduced in \cite{montanaro2017learning}).

\begin{lemma}
    Bell measurements across the bipartition of the $2n$-qubit state $\ket{\pchoi(\widetilde{H})}$ sample from the distribution
    \begin{equation}
        \mathcal{D} = \l\{\Pr(E_a) = \frac{|\widetilde{\lambda}_a|^2}{\ell^2} : a \in [m'] \r\},
    \end{equation}
    where $\ell = \|\widetilde{\lambda}\|$.
\end{lemma}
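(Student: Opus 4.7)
The plan is to directly compute the overlap of $\ket{\pchoi(\widetilde{H})}$ with each Bell-basis element and identify the resulting distribution. The Bell basis on $2n$ qubits can be indexed by Pauli operators as $\{\ket{P} = (P \otimes \I_n)\ket{\Omega} : P \in \Pauli{n}\}$, which forms an orthonormal basis by the trace-orthogonality of Paulis combined with the identity $\vev{A}{\I_{2n}}{\Omega} = 2^{-n}\tr A$. So the probability of observing outcome $P$ when measuring in this basis is $|\ip{P}{\pchoi(\widetilde{H})}|^2$.

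I would first expand the inner product using the definition of the referenceless pseudo-Choi state together with $\|(\widetilde{H}\otimes\I_n)\ket{\Omega}\| = \frac{1}{\sqrt{2^n}}\|\widetilde{H}\|_F$. Then
\begin{equation}
    \ip{P}{\pchoi(\widetilde{H})} = \frac{\ev{(P^\dagger \widetilde{H} \otimes \I_n)}{\Omega}}{\|(\widetilde{H}\otimes\I_n)\ket{\Omega}\|} = \frac{\tr(P^\dagger \widetilde{H})/2^n}{\frac{1}{\sqrt{2^n}}\|\widetilde{H}\|_F},
\end{equation}
using the Fourier-coefficient identity $\ev{(A\otimes\I_n)}{\Omega} = \frac{1}{2^n}\tr A$ with $A = P^\dagger \widetilde{H}$.

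Next I would identify the numerator: writing $\widetilde{H} = \sum_{a=1}^{m'} \widetilde{\lambda}_a E_a$ and invoking Pauli trace-orthogonality $\tr(P^\dagger E_a) = 2^n \delta_{P, E_a}$, we get $\tr(P^\dagger \widetilde{H})/2^n = \widetilde{\lambda}_P$ whenever $P = E_a$ for some $a$, and $0$ otherwise. For the denominator, Parseval's identity yields $\frac{1}{\sqrt{2^n}}\|\widetilde{H}\|_F = \|\widetilde{\lambda}\|_2 = \ell$. Combining,
\begin{equation}
    \ip{E_a}{\pchoi(\widetilde{H})} = \frac{\widetilde{\lambda}_a}{\ell},
\end{equation}
and all other Bell outcomes have zero amplitude. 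Squaring gives $\Pr(E_a) = |\widetilde{\lambda}_a|^2/\ell^2$, and normalization $\sum_a |\widetilde{\lambda}_a|^2/\ell^2 = 1$ follows automatically from the definition of $\ell$.

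There is no real obstacle here; the proof is essentially an application of the Fourier-coefficient identities stated in the background section. The only small care required is to note that the Bell basis is genuinely an orthonormal basis of $(\C^2)^{\otimes 2n}$ so that the Born rule yields a bona fide probability distribution, but this is standard.
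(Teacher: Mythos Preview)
Your proof is correct and follows essentially the same approach as the paper: both compute the Bell-basis overlap $|\ip{P}{\pchoi(\widetilde{H})}|^2$ via the identity $\ev{(A\otimes\I_n)}{\Omega}=\tfrac{1}{2^n}\tr A$, identify the numerator as $\widetilde{\lambda}_a$ by Pauli trace-orthogonality, and recognize the denominator as $\ell^2$. The only cosmetic difference is that you compute the amplitude first and then square, explicitly invoking Parseval for the normalization, whereas the paper works with the squared magnitude directly.
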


\begin{proof}
    Bell measurements, implemented by a depth-$2$ Clifford circuit, sample from the basis $\{ \ket{P} = (P \otimes \I_n) \ket{\Omega} : P \in \Pauli{n} \}$. On the state $\ket{\pchoi(\widetilde{H})}$, this induces the probability distribution
    \begin{equation}
    \begin{split}
        |\ip{P}{\pchoi(\widetilde{H})}|^2 &= \frac{|\ev{(P^\dagger \otimes \widetilde{H})}{\Omega}|^2}{\|(\widetilde{H} \otimes \I_n) \ket{\Omega}\|^2}\\
        &= \frac{| \frac{1}{2^n} \tr(P^\dagger \widetilde{H}) |^2}{\ell^2}.
    \end{split}
    \end{equation}
    We have $| \frac{1}{2^n} \tr(P^\dagger \widetilde{H}) |^2 = |\widetilde{\lambda}_a|^2$ if $P = E_a$, and $0$ otherwise.
\end{proof}

Our goal then is to draw samples from $\mathcal{D}$ until we see all elements of $S_\gamma(\widetilde{H})$. The following lemma bounds the number of Bell measurements sufficient to accomplish this task with high probability.

\begin{lemma}\label{lem:coupon_collector}
    Let $\gamma, \delta \in (0, 1)$. Let $\widetilde{H}$ be a Hamiltonian and suppose we have access to copies of its referenceless pseudo-Choi state. With probability at least $1 - \delta$, taking $\O(\|\widetilde{\lambda}\|^2 \log(|S_\gamma(\widetilde{H})|/\delta) / \gamma^2)$ Bell measurements of $\ket{\pchoi(\widetilde{H})}$ suffices to observe all elements of $S_\gamma(\widetilde{H})$.
\end{lemma}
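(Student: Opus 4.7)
The plan is to treat this as a nonuniform coupon collector problem and argue via a simple union bound. The crucial observation is that each target coupon $E_a \in S_\gamma(\widetilde{H})$ has the relatively large draw probability
\[
    \Pr(E_a) = \frac{|\widetilde{\lambda}_a|^2}{\|\widetilde{\lambda}\|^2} > \frac{\gamma^2}{\|\widetilde{\lambda}\|^2},
\]
since $|\widetilde{\lambda}_a| > \gamma$ by definition of $S_\gamma(\widetilde{H})$. This bound holds uniformly for every element of $S_\gamma(\widetilde{H})$ and is the only property of $\mathcal{D}$ we will need.

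Next, I would bound the probability that a given target $E_a$ is missed after $N$ i.i.d.\ draws from $\mathcal{D}$ using the standard inequality $(1 - p)^N \leq e^{-Np}$:
\[
    \Pr(E_a \text{ not observed in } N \text{ draws}) = (1 - \Pr(E_a))^N \leq \exp\!\l(-\frac{N\gamma^2}{\|\widetilde{\lambda}\|^2}\r).
\]
A union bound over the at most $|S_\gamma(\widetilde{H})|$ such terms then yields
\[
    \Pr(\text{some } E_a \in S_\gamma(\widetilde{H}) \text{ not observed}) \leq |S_\gamma(\widetilde{H})| \exp\!\l(-\frac{N\gamma^2}{\|\widetilde{\lambda}\|^2}\r).
\]

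Setting the right-hand side at most $\delta$ and solving for $N$ gives the sufficient condition
\[
    N \geq \frac{\|\widetilde{\lambda}\|^2}{\gamma^2} \log\!\l(\frac{|S_\gamma(\widetilde{H})|}{\delta}\r),
\]
which matches the claimed $\O\l( \|\widetilde{\lambda}\|^2 \log(|S_\gamma(\widetilde{H})|/\delta)/\gamma^2 \r)$ scaling. There is no real obstacle here; the argument is purely elementary probability, and the only thing to be mindful of is that the union bound is over the (unknown but bounded) target set $S_\gamma(\widetilde{H})$ rather than the full support of $\mathcal{D}$, which is what allows the $\log|S_\gamma(\widetilde{H})|$ factor instead of $\log m'$ in the final count.
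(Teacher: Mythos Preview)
Your proposal is correct and matches the paper's proof essentially step for step: the paper likewise frames it as a relaxed coupon collector problem, bounds $\Pr(E_a) > \gamma^2/\|\widetilde{\lambda}\|^2$ for each $E_a \in S_\gamma(\widetilde{H})$, applies $(1-x)\le e^{-x}$ together with a union bound over $S_\gamma(\widetilde{H})$, and solves for $N$. Your closing remark about why the union bound is taken over the target set rather than the full support of $\mathcal{D}$ is exactly the point the paper is making when it says ``we relax the task by not aiming to collect the rare coupons.''
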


\begin{proof}
    This problem is a variant of the classic coupon collector problem:~given a distribution $\mathcal{D}$ over $m'$ ``coupons,'' how many draws does it take until we have collected at least one of each coupon? Our bound follows from a modification of \cite[Lemma 12]{shioda2007some}. In particular, we relax the task by not aiming to collect the rare coupons (i.e., those outside of $S_\gamma(\widetilde{H})$).
    
    Let $X_\gamma$ be the random variable defined as the number of draws until all coupons from $S_\gamma(\widetilde{H})$ are obtained. For each integer $N > 0$, define $A_a(N)$ as the event that, after drawing $N$ coupons, none of them are $E_a$. The total probability that after $N$ draws, we have yet to obtain all coupons (from $S_\gamma(\widetilde{H})$), can be estimated by a union bound:
    \begin{equation}
        \Pr(X_\gamma > N) = \Pr\l(\bigcup_{a \in S_\gamma(\widetilde{H})} A_a(N) \r) \leq \sum_{a \in S_\gamma(\widetilde{H})} \Pr(A_a(N)) = \sum_{a \in S_\gamma(\widetilde{H})} [1 - \Pr(E_a)]^N.
    \end{equation}
    Let $\ell = \|\widetilde{\lambda}\|^2$. For each $E_a \in S_\gamma(\widetilde{H})$, we have that $\Pr(E_a) = |\widetilde{\lambda}_a|^2/\ell^2 > \gamma^2/\ell^2$. The inequality $1 - x \leq e^{-x}$ then gets
    \begin{equation}
        \Pr(X_\gamma > N) < |S_\gamma(\widetilde{H})| \exp\l( -\frac{N \gamma^2}{\ell^2} \r).
    \end{equation}
    Setting this bound as the maximum failure probability $\delta$ and solving for $N$ proves the claim.
\end{proof}

\begin{remark}
    The Bell sampling algorithm can also discover elements $E_{a'} \notin S_\gamma(\widetilde{H})$. At this current stage in the algorithm, we do not have enough information to determine inclusion or exclusion, so we store all elements observed. As long as we have at least a superset of $S_\gamma(\widetilde{H})$, we say that the structure learning algorithm has succeeded. In the next stage, shadow tomography can easily estimate an overcomplete set of terms with only a logarithmic sampling overhead, per \cref{eq:CW_sample_complexity}. Any terms that are too small are then discarded;~more precisely, if we set the shadow tomography error to $\epsilon/2$, then discarding all $|\widehat{\lambda}_a| \leq \epsilon$ ensures property 2 of \cref{prob:1}. And because we only ever make a polynomial number of queries, we never identify more than a polynomial number of candidate terms.
\end{remark}

The last piece we need is a conversion from learning $S_{\gamma}(\widetilde{H})$ to learning $S_{\epsilon}(H)$, given that $\|H - \widetilde{H}\| \leq \varepsilon$. 

\begin{lemma}\label{lem:big_set_conversion}
    Let $\varepsilon, \epsilon \in (0, 1)$. Let $H, \widetilde{H}$ be two Hamiltonians such that
    \begin{align}
        H &= \sum_{a=1}^m \lambda_a E_a,\\
        \widetilde{H} &= \sum_{a=1}^m \widetilde{\lambda}_a E_a + \sum_{a'=m+1}^{m'} \widetilde{\lambda}_{a'} E_{a'},
    \end{align}
    where $m' \geq m$. Suppose $\|H - \widetilde{H}\| \leq \varepsilon$. Then $S_\epsilon(H) \subseteq S_{\epsilon + \varepsilon}(\widetilde{H})$, and furthermore, $E_{a'} \notin S_{\epsilon + \varepsilon}(\widetilde{H})$ for all $a' = m + 1, m + 2, \ldots, m'$.
\end{lemma}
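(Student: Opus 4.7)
The whole lemma factors through a coefficient-wise perturbation bound that was already recorded earlier in the paper at equation~\eqref{eq:block_error_coeff}. Because the Pauli operators are trace-orthogonal, every Fourier coefficient of a matrix can be isolated by a single trace, and H\"{o}lder's inequality then yields
\[
|\lambda_a - \widetilde{\lambda}_a| = \tfrac{1}{2^n}\lvert\tr(E_a(H-\widetilde{H}))\rvert \leq \|H - \widetilde{H}\| \leq \varepsilon
\]
for every Pauli index, where I adopt the convention $\lambda_{a'} = 0$ for $a' > m$ (since $E_{a'}$ does not appear in the decomposition of $H$). This one inequality uniformly handles both the ``true'' indices $a \in [m]$ and the ``spurious'' indices $a' \in \{m+1, \ldots, m'\}$, so both conclusions of the lemma reduce to triangle-inequality bookkeeping.

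The plan is: first, record the displayed inequality and note that it holds for every $a \in [m']$ by orthonormality of the Pauli basis; no operator-theoretic input beyond H\"{o}lder is needed. Second, on the indices $a \leq m$, apply the reverse triangle inequality $\bigl||\lambda_a| - |\widetilde{\lambda}_a|\bigr| \leq \varepsilon$ to transfer any threshold statement about $|\lambda_a|$ to the corresponding statement about $|\widetilde{\lambda}_a|$, which produces the claimed inclusion between $S_\epsilon(H)$ and the threshold set for $\widetilde{H}$ (up to the appropriate $\pm\varepsilon$ shift in the cutoff). Third, on the spurious indices $a' > m$, the bound collapses to the one-sided estimate $|\widetilde{\lambda}_{a'}| \leq \varepsilon$; since $\epsilon > 0$, this is strictly less than $\epsilon + \varepsilon$, so no such $E_{a'}$ can lie in $S_{\epsilon+\varepsilon}(\widetilde{H})$.

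I do not expect any real obstacle: the lemma is a soft consequence of H\"{o}lder's inequality together with the trace-orthogonality of the Pauli basis, and each conclusion falls out of a single-line triangle-inequality estimate once the coefficient-wise bound is in hand. The only thing that demands care is the bookkeeping of strict versus non-strict inequalities at the thresholds---in particular, making sure that the strict inequality $|\lambda_a| > \epsilon$ defining $S_\epsilon(\cdot)$ is preserved through the perturbation---and correctly tracking the direction of the $\pm \varepsilon$ shift between the two threshold sets.
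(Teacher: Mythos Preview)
Your approach is exactly the paper's: extract the coefficient-wise bound $|\lambda_a-\widetilde{\lambda}_a|\le\varepsilon$ via H\"older (this is precisely \eqref{eq:block_error_coeff}), then push it through the triangle inequality on each index. For the spurious indices $a'>m$ your argument is identical to the paper's second paragraph.

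One caution on the first claim, where you hedge about ``the appropriate $\pm\varepsilon$ shift'': if you actually carry out the reverse triangle inequality, you will find that $|\lambda_a|>\epsilon$ only forces $|\widetilde{\lambda}_a|>\epsilon-\varepsilon$, not $|\widetilde{\lambda}_a|>\epsilon+\varepsilon$. In other words, the inclusion $S_\epsilon(H)\subseteq S_{\epsilon+\varepsilon}(\widetilde{H})$ as literally stated is false (take $\lambda_a=\epsilon+\varepsilon/2$, $\widetilde{\lambda}_a=\epsilon-\varepsilon/2$). The paper's own proof in fact argues the other way around: it starts from $E_a\in S_{\epsilon+\varepsilon}(\widetilde{H})$ and deduces $|\lambda_a|>\epsilon$, i.e.\ it establishes $S_{\epsilon+\varepsilon}(\widetilde{H})\subseteq S_\epsilon(H)$. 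So your instinct to be careful about the sign of the shift is well placed; when you write it out, you will reproduce what the paper's proof actually shows rather than what the lemma literally asserts.
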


\begin{proof}
    For all $E_a \in S_{\epsilon + \varepsilon}(\widetilde{H})$ we have $|\widetilde{\lambda}_a| > \epsilon + \varepsilon$. By a triangle inequality and H\"{o}lder's inequality (\cref{eq:block_error_coeff}), it holds that $|\widetilde{\lambda}_a| \leq |\lambda_a| + \varepsilon$. Therefore all such $a$ also correspond to $|\lambda_a| > \epsilon$, which characterizes the set $S_\epsilon(H)$. This demonstrates the first claim.
    
    For the second claim, consider any $a' \in \{m + 1, m + 2, \ldots, m'\}$. Because $H$ has no support on such terms, $\lambda_{a'} = 0$ and so $|\widetilde{\lambda}_{a'}| \leq \varepsilon$. But since $\epsilon > 0$, this magnitude is strictly less than $\varepsilon + \epsilon$. Hence all such $\widetilde{\lambda}_{a'}$ are too small to lie in $S_{\epsilon + \varepsilon}(\widetilde{H})$.
\end{proof}

Combining \cref{lem:refless_pchoi_prep,lem:coupon_collector,lem:big_set_conversion}, we arrive at the main result of this section.

\begin{theorem}\label{thm:structure_learning_base}
    Let $H = \sum_{a=1}^m \lambda_a E_a$ be an unknown $n$-qubit Hamiltonian, $\Delta \geq 2 \|H\|$, and $\epsilon > 0$. Define the set $S_\epsilon(H) = \{E_a : |\lambda_a| > \epsilon\}$. There exists an algorithm which learns, with high probability, at least all the elements of $S_\epsilon(H)$. The algorithm makes
    \begin{equation}
        Q = \O\l( \frac{\Delta^2 \log(m) \log(\Delta/\epsilon)}{\epsilon^2} \r).
    \end{equation}
    queries to \emph{$\ctrl{}{e^{\pm \i H/\Delta}}$} and uses $\poly(n, m, 1/\epsilon)$ quantum and classical computation.
\end{theorem}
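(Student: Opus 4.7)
The proof plan is to run Algorithm~\ref{alg:pauli_spectrum_learning} on copies of the referenceless pseudo-Choi state $\ket{\pchoi(\widetilde{H})}$ of a block-encoded approximation $\widetilde{H}$ to $H$, and then chain together the three preceding lemmas. First, set the block-encoding accuracy $\varepsilon = \epsilon/2$ when invoking \cref{lem:refless_pchoi_prep}, and let $\gamma = \epsilon + \varepsilon = \tfrac{3}{2}\epsilon$. \Cref{lem:big_set_conversion} then guarantees both that $S_\epsilon(H) \subseteq S_\gamma(\widetilde{H})$ and, crucially, that no spurious block-encoding term lands in $S_\gamma(\widetilde{H})$; in particular $|S_\gamma(\widetilde{H})| \leq m$. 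So it suffices to observe every element of $S_\gamma(\widetilde{H})$ by Bell sampling, and the output of \cref{alg:pauli_spectrum_learning} will be a set that contains $S_\epsilon(H)$.

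Next, by \cref{lem:coupon_collector}, taking $N = O(\|\widetilde{\lambda}\|^2 \log(m) / \epsilon^2)$ Bell measurements of $\ket{\pchoi(\widetilde{H})}$ suffices to observe every element of $S_\gamma(\widetilde{H})$ with constant success probability (which can be boosted to arbitrary $1 - \delta$ by the standard median-of-means repetition). Each preparation attempt for $\ket{\pchoi(\widetilde{H})}$ consumes $Q_1 = O(\log(\Delta/\epsilon))$ queries to $\ctrl{}{e^{\pm \i H/\Delta}}$ and succeeds with probability $p = \Theta(\|\widetilde{\lambda}/\Delta\|^2)$, per \cref{lem:refless_pchoi_prep}. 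Combining these via \cref{prop:prob_state_prep}, the total number of queries is
\begin{equation*}
    Q = O\!\l(\frac{N}{p} \cdot Q_1 \r) = O\!\l(\frac{\|\widetilde{\lambda}\|^2 \log(m)/\epsilon^2}{\|\widetilde{\lambda}/\Delta\|^2} \cdot \log(\Delta/\epsilon)\r) = O\!\l(\frac{\Delta^2 \log(m) \log(\Delta/\epsilon)}{\epsilon^2}\r).
\end{equation*}

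The conceptual heart of the argument — and the step I would highlight as the key observation — is that the unknown factor $\|\widetilde{\lambda}\|^2$ appears both in how many samples we need and in how hard each sample is to prepare, and these two dependencies cancel exactly, leaving a bound purely in terms of the known parameters $\Delta$, $m$, and $\epsilon$. The only delicate point is that the Bell-sampling stage could in principle be confounded by spurious Pauli terms that appear in $\widetilde{H}$ but not in $H$; this is precisely what the second clause of \cref{lem:big_set_conversion} rules out, since any such $\widetilde{\lambda}_{a'}$ has magnitude at most $\varepsilon < \gamma$ and so cannot belong to $S_\gamma(\widetilde{H})$. Finally, the postprocessing (a depth-$2$ Clifford circuit for each Bell measurement, plus bookkeeping of observed Pauli labels) is manifestly $\poly(n, m, 1/\epsilon)$, which completes the claim.
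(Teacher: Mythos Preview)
Your proof is correct and follows essentially the same route as the paper's own argument: both chain together \cref{lem:refless_pchoi_prep}, \cref{lem:coupon_collector}, \cref{lem:big_set_conversion}, and \cref{prop:prob_state_prep}, with the key cancellation of $\|\widetilde{\lambda}\|^2$ between the Bell-sampling copy complexity and the preparation success probability. Your explicit choice $\varepsilon = \epsilon/2$ and the emphasis on the second clause of \cref{lem:big_set_conversion} bounding $|S_\gamma(\widetilde{H})| \leq m$ mirror the paper's reasoning exactly; the only (harmless) extra is your parenthetical remark about boosting, which is not strictly needed since \cref{lem:coupon_collector} already carries a $\delta$ parameter.
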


\begin{proof}
    \cref{lem:coupon_collector} tell us that, with high probability,
    \begin{equation}
        N = \O\l(\frac{\|\widetilde{\lambda}\|^2 \log |S_\gamma(\widetilde{H})|}{\gamma^2}\r)
    \end{equation}
    copies of $\ket{\pchoi(\widetilde{H})}$ suffice to learn $S_\gamma(\widetilde{H})$ via Bell measurements. Meanwhile, \cref{lem:refless_pchoi_prep} tells us that the probability of successfully generating a single copy is
    \begin{equation}
        p = \Theta\l(\frac{\|\widetilde{\lambda}\|^2}{\Delta^2}\r).
    \end{equation}
    By \cref{prop:prob_state_prep}, we need to take $Q' = \O(N/p)$ trials to produce the $N$ copies. Therefore from
    \begin{equation}
        Q' = \O\l(\frac{\Delta^2 \log |S_\gamma(\widetilde{H})|}{\gamma^2}\r)
    \end{equation}
    queries to the block encoding of $\widetilde{H}/\Delta$, or equivalently $Q = \O(Q' \log(\Delta/\varepsilon))$ queries to the time evolution operator, we learn the set $S_\gamma(\widetilde{H})$ with high probability. Finally by \cref{lem:big_set_conversion}, setting $\gamma = \epsilon + \varepsilon$ ensures that the set $S_{\epsilon + \varepsilon}(\widetilde{H})$ that we have learned contains $S_\epsilon(H)$ as a subset. \cref{lem:big_set_conversion} also assures us that $|S_{\epsilon + \varepsilon}(\widetilde{H})| \leq m$ since it cannot contain any superfluous terms. Taking $\varepsilon = \Theta(\epsilon)$ sufficiently small concludes the proof.
\end{proof}

Comparing this result to \cref{prop:CW_learning_algorithm}, we find that the query and evolution-time complexity of both parameter learning and structure learning from pseudo-Choi states are identical. Therefore even without invoking the precision bootstrap, we have demonstrated an algorithm that can perform structure learning on a completely arbitrary $n$-qubit Hamiltonian, using $\Ot(\|H\|/\epsilon^2)$ total evolution time.

To conclude this section, we make a remark regarding the identity component in $H$.

\begin{remark}[On traceful Hamiltonians]
    If $H$ is not traceless, then the identity coefficient (which we are typically not interested in learning) may wash out the distribution. Asymptotically, this only slows down the learning procedure by a constant factor. However, we have a nicer guarantee using the approximate controlization described in \cref{rem:ctrl_traceless}. Those techniques, which we use to prepare the psuedo-Choi states, will automatically extract the traceless part of $H$. Hence moving forward we may suppose $\tr H = 0$ without loss of generality.
\end{remark}

\section{Learning with the residual Hamiltonian}\label{sec:residual}

Let $\widehat{H} = \sum_a \widehat{\lambda}_a E_a$ be the current best estimate of $H$, satisfying $\|\lambda - \widehat{\lambda}\|_\infty \leq \eta$. The key to the precision bootstrap is the ability to learn from $R/\eta$, where
\begin{equation}
    R \coloneqq H - \widehat{H}
\end{equation}
is the ``residual'' Hamiltonian. By scaling up the residual by a factor of $1/\eta$, it suffices to learn $R/\eta$ with merely constant precision $\epsilon_0 \leq \frac{1}{2}$. In this section, we will show how to block encode $R/\eta$ and prepare its pseudo-Choi states. This will give us the majority of the proof of \cref{thm:tr_learning}.

\subsection{Block encoding by spectral amplification}

Recall that \cref{prop:arcsin_log_unitary} gave us a $(\frac{\pi}{2}, 2, \varepsilon)$-BE of $H/\Delta$ from queries to its time evolution, provided that $\Delta \geq 2\|H\|$. To make a concrete choice, since $\|H\| \leq m$ we shall fix
\begin{equation}
    \Delta = 2m
\end{equation}
here. This will guarantee that all the operators we aim to block encode have are sufficiently normalized.

Next, since our description of $\widehat{H}/\Delta$ is already an LCU, we can synthesize its block encoding with standard techniques (\cref{prop:LCBE}). However, a technicality we need to address is to match normalizations between the two block encodings before we take their difference.\footnote{Alternatively we could make the appropriate choice of nonuniform coefficients in the linear combination;~we opt for the approach as presented to simplify the exposition.}

\begin{lemma}\label{lem:rescaled_est_H_norm}
    Let $\widehat{H} = \sum_{a=1}^m \widehat{\lambda}_a E_a$ and $\Delta = 2m$. There exists a $(\frac{\pi}{2}, \lceil \log_2 m \rceil + 1, 0)$-BE of $\widehat{H}/\Delta$.
\end{lemma}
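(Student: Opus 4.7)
The goal is to realize $\widehat{H}/\Delta$ as a block encoding with normalization exactly $\pi/2$, so that it matches the normalization of the time-evolution-based block encoding of $H/\Delta$ produced by \cref{prop:arcsin_log_unitary}. Since $\widehat{H} = \sum_{a=1}^m \widehat{\lambda}_a E_a$ is already presented as an LCU in the Pauli basis, the natural approach is to directly invoke the linear-combination-of-block-encodings machinery in \cref{prop:LCBE}.

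The plan is as follows. First, I would observe that each Pauli $E_a$ is trivially a $(1, 0, 0)$-BE of itself, so in the notation of \cref{prop:LCBE} we have $\alpha = 1$ and $a = 0$ for each summand. Signs of the real coefficients $\widehat{\lambda}_a$ can be handled either by absorbing $\sgn(\widehat{\lambda}_a)$ into the unitary (the operator $\sgn(\widehat{\lambda}_a) E_a$ remains unitary) or equivalently by encoding them as phases in the state-preparation amplitudes; I would take the latter route.

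Second, I would choose the state-preparation normalization $\beta = \pi m$. This choice is consistent because $\|\widehat{\lambda}\|_1 \leq m \leq \pi m = \beta$. Applying \cref{prop:LCBE} then yields an $(\alpha \beta, a + b, 0) = (\pi m, b, 0)$-BE of $\widehat{H}$. Reinterpreting the same circuit as a BE of $\widehat{H}/\Delta$, the $(0^b)$-block equals $\widehat{H}/(\pi m) = (\widehat{H}/\Delta)/(\pi/2)$ since $\Delta = 2m$, so this is precisely a $(\pi/2, b, 0)$-BE of $\widehat{H}/\Delta$, as required.

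Third, I need to show that such a $(\pi m, b, 0)$-state-preparation pair for $y = (\widehat{\lambda}_1, \ldots, \widehat{\lambda}_m)$ can be realized on $b = \lceil \log_2 m \rceil + 1$ qubits. I would construct $(\prep_L, \prep_R)$ so that both unitaries place amplitudes $\sqrt{|\widehat{\lambda}_a|/(\pi m)}$ on the first $m$ computational basis states (with $\prep_R$ additionally carrying the sign of $\widehat{\lambda}_a$), producing $\beta \, c_j^* d_j = \widehat{\lambda}_{j+1}$ for $j = 0, \ldots, m-1$. The remaining amplitude $\sqrt{1 - \|\widehat{\lambda}\|_1/(\pi m)}$ in each preparation is the ``slack'' needed to make the state normalized; I would place this slack on \emph{distinct} dummy indices in $\prep_L$ and $\prep_R$ so that $c_j^* d_j = 0$ for every $j \geq m$, as required by the definition of a state-preparation pair. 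This requires at least two unused basis states, i.e.\ $2^b \geq m + 2$, which is why $b = \lceil \log_2 m \rceil + 1$ is taken rather than $\lceil \log_2 m \rceil$; the extra qubit guarantees sufficient room for the two dummy indices uniformly in $m$ (including when $m$ is a power of two).

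There is no substantive obstacle in this lemma; it is essentially a normalization-bookkeeping exercise on top of \cref{prop:LCBE}. The only mild subtlety is the deliberate inflation of $\beta$ from the tight value $\|\widehat{\lambda}\|_1$ to $\pi m$, which is what forces the $\pi/2$ normalization on $\widehat{H}/\Delta$ and in turn necessitates one additional state-preparation qubit.
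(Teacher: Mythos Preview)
Your argument is correct and achieves the stated bound, but it takes a different route from the paper. The paper first builds the ``natural'' LCU block encoding $\widehat{B}'$ of $\widehat{H}$ with normalization $\|\widehat{\lambda}\|_1$ on $\lceil\log_2 m\rceil$ ancillas, and then \emph{post hoc} shrinks the encoded block by the factor $\frac{2\|\widehat{\lambda}\|_1}{\pi\Delta}$ using a single-qubit rotation $R_y(\theta)$ on a fresh ancilla together with $\ctrl{0}{\widehat{B}'}$; this is a general ``subnormalization conversion'' trick that works for any block encoding and is where the $+1$ qubit comes from in the paper. You instead bake the desired normalization $\beta=\pi m$ directly into the state-preparation pair, parking the leftover amplitude on two distinct dummy indices so that $c_j^* d_j=0$ there; your $+1$ qubit is spent guaranteeing room for those dummy indices. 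Both land on the same $(\pi/2,\lceil\log_2 m\rceil+1,0)$-BE. Your approach is a bit more direct and avoids needing $\ctrl{0}{\widehat{B}'}$, while the paper's rescaling trick is modular and reusable for arbitrary block encodings. One small caveat: your dummy-index argument needs $2^b\geq m+2$, i.e.\ $2m\geq m+2$, which fails at $m=1$; this edge case is trivial to handle separately (or by taking $b=2$ there), but you should note it.
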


\begin{proof}
    Let $\widehat{B}'$ be the $(\|\widehat{\lambda}\|_1, \lceil \log_2 m \rceil, 0)$-BE of $\widehat{H}$ as given by \cref{prop:LCBE}. We aim to convert the subnormalization on the block-encoded matrix according to
    \begin{equation}
        \widehat{B}' = \begin{pmatrix}
            \frac{\widehat{H}}{\|\widehat{\lambda}\|_1} & \cdot\ \\
            \cdot & \cdot\ 
        \end{pmatrix} \mapsto
        \begin{pmatrix}
            \frac{\widehat{H}}{\Delta \pi/2} & \cdot\ \\
            \cdot & \cdot\ 
        \end{pmatrix}.
    \end{equation}
    Define single-qubit the rotation
    \begin{equation}
        R_y(\theta) = \begin{pmatrix}
            \cos\l(\frac{\theta}{2}\r) & -\sin\l(\frac{\theta}{2}\r)\\
            \sin\l(\frac{\theta}{2}\r) & \cos\l(\frac{\theta}{2}\r)
        \end{pmatrix}.
    \end{equation}
    Set $\theta = 2 \arccos\l( \frac{2\|\widehat{\lambda}\|_1}{\pi \Delta} \r)$, which is a real-valued angle since $0 \leq \frac{2\|\widehat{\lambda}\|_1}{\pi \Delta} \leq \frac{1}{\pi}$. Then
    \begin{equation}
        \widehat{B} \coloneqq (R_y(\theta) \otimes \I_{\lceil \log_2 m \rceil}) \cdot \ctrl{0}{\widehat{B}'} = \begin{pmatrix}
            \frac{2\|\widehat{\lambda}\|_1}{\pi \Delta} \widehat{B}' & \cdot\ \\
            \cdot & \cdot\ 
        \end{pmatrix}
    \end{equation}
    has the desired subnormalization in the top-left block, using one more ancilla qubit.
\end{proof}

The block encoding for the residual Hamiltonian immediately follows.

\begin{lemma}\label{lem:unamp_residual}
    Define $\widetilde{R} \coloneqq \widetilde{H} - \widehat{H}$ where $\frac{1}{\Delta} \|H - \widetilde{H}\| \leq \varepsilon$. There exists a $(\pi, \lceil \log_2 m \rceil + 2, 0)$-BE of $\widetilde{R}/\Delta$, using $\O(\log(1/\varepsilon))$ queries to \emph{$\ctrl{}{e^{\pm \i H/\Delta}}$}.
\end{lemma}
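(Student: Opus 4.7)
The plan is to assemble the desired block encoding as a linear combination of block encodings, one for $\widetilde{H}/\Delta$ from the time-evolution queries and one for $\widehat{H}/\Delta$ from its explicit Pauli description. The crucial observation enabling a \emph{zero-error} BE of $\widetilde{R}/\Delta$ is that the $\varepsilon$ in \cref{prop:arcsin_log_unitary} can be absorbed into the definition of $\widetilde{H}$: a $(\pi/2, 2, \varepsilon)$-BE of $H/\Delta$ is, equivalently, a $(\pi/2, 2, 0)$-BE of some $\widetilde{H}/\Delta$ with $\|H - \widetilde{H}\|/\Delta \leq \varepsilon$, which is precisely the approximation quality posited in the lemma statement.

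Concretely, I would proceed as follows. First, apply \cref{prop:arcsin_log_unitary} with precision $\varepsilon$ to obtain, from $\O(\log(1/\varepsilon))$ queries to $\ctrl{}{e^{\pm\i H/\Delta}}$, a $(\pi/2, 2, 0)$-BE $\widetilde{B}$ of $\widetilde{H}/\Delta$. Second, invoke \cref{lem:rescaled_est_H_norm} to produce a $(\pi/2, \lceil\log_2 m\rceil + 1, 0)$-BE $\widehat{B}$ of $\widehat{H}/\Delta$ (this step uses no queries to the time evolution). Third, pad $\widetilde{B}$ with $\lceil\log_2 m\rceil - 1$ idle ancilla qubits, so that $\widetilde{B} \otimes \I$ is a $(\pi/2, \lceil\log_2 m\rceil + 1, 0)$-BE of $\widetilde{H}/\Delta$ using the same number of ancillas as $\widehat{B}$. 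Idle ancillas are clearly harmless because projecting onto $\ket{0}$ on the extra register yields back the original projected block.

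Fourth, apply \cref{prop:LCBE} to the two block encodings with coefficient vector $y = (1, -1) \in \R^2$. A $(2, 1, 0)$-state-preparation pair for $y$ is easily constructed on a single control qubit (e.g., using $H$ and $HZ$), so $\beta = \|y\|_1 = 2$ and $b = 1$. The proposition then yields a block encoding with subnormalization $\alpha\beta = (\pi/2)(2) = \pi$, ancilla count $(\lceil\log_2 m\rceil + 1) + 1 = \lceil\log_2 m\rceil + 2$, and encoding error $\alpha\varepsilon_1 + \beta\varepsilon_2 = 0$, of the operator $\widetilde{H}/\Delta - \widehat{H}/\Delta = \widetilde{R}/\Delta$. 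The total query cost is $\O(\log(1/\varepsilon))$, inherited entirely from $\widetilde{B}$.

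There is no real obstacle here; the only thing requiring a bit of care is the bookkeeping around normalizations, which is why both $\widetilde{B}$ and $\widehat{B}$ are prepared with the same subnormalization $\pi/2$ in advance (this is precisely the role of \cref{lem:rescaled_est_H_norm}). Matching subnormalizations is what keeps the final subnormalization from blowing up beyond $\pi$, which in turn is what will later permit uniform spectral amplification by the factor $1/\eta$ without violating the $\|\,{\cdot}\,\| \leq 1$ constraint of a block-encoded matrix.
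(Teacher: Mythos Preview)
Your proposal is correct and follows essentially the same route as the paper: interpret the output of \cref{prop:arcsin_log_unitary} as an exact $(\pi/2,2,0)$-BE of $\widetilde{H}/\Delta$, combine it with the $(\pi/2,\lceil\log_2 m\rceil+1,0)$-BE of $\widehat{H}/\Delta$ from \cref{lem:rescaled_est_H_norm}, and take their difference via \cref{prop:LCBE} with $y=(1,-1)$. You are in fact slightly more careful than the paper in explicitly padding $\widetilde{B}$ so both inputs to \cref{prop:LCBE} share the same ancilla count.
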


\begin{proof}
    Here, let us interpret $\widetilde{B}$ as a $(\frac{\pi}{2}, 2, 0)$-BE of $\widetilde{H}/\Delta$ given by \cref{prop:arcsin_log_unitary}. Let $\widehat{B}$ be the $(\frac{\pi}{2}, \lceil \log_2 m \rceil + 1, 0)$-BE of $\widehat{H}/\Delta$ given by \cref{lem:rescaled_est_H_norm}, and let $(\prep_L, \prep_R)$ be a $(2, 1, 0)$-state-preparation pair for $y = (1, -1)$. Then by \cref{prop:LCBE}, we can produce a $(\pi, \lceil \log_2 m \rceil + 2, 0)$-BE for $\widetilde{R}/\Delta$ using a single query to $\ctrl{}{\widetilde{B}}$ and $\ctrl{}{\widehat{B}}$ each. In terms of the query complexity, a single call to $\ctrl{}{\widetilde{B}}$ costs $\O(\log(1/\varepsilon))$ queries to $e^{-\i H/\Delta}$, as stated in \cref{prop:arcsin_log_unitary}.
\end{proof}

Observe that this synthesized residual matrix has small operator norm, because
\begin{equation}\label{eq:norm_of_raw_residual}
\begin{split}
    \|\widetilde{R}/\Delta\| &= \frac{1}{\Delta} \|\widetilde{H} - \widehat{H}\|\\
    &\leq \frac{1}{\Delta} \l( \|\widetilde{H} - H\| + \|H - \widehat{H}\| \r)\\
    &\leq \varepsilon + \sum_{a=1}^m \frac{|\lambda_a - \widehat{\lambda}_a|}{2m}\\
    &\leq \varepsilon + \frac{\eta}{2}.
\end{split}
\end{equation}
We need to boost this norm to $\O(1)$ in order to learn with sufficient precision. This can be accomplished by uniform spectral amplification~\cite{low2017hamiltonian}, a matrix-generalization of amplitude amplification. Note that a further generalization called uniform singular value amplification exists for applications beyond Hermitian matrices~\cite{gilyen2019quantum}, however we will not need the strength of that extension here.

\begin{proposition}[{\cite[Theorem 2]{low2017hamiltonian}}]\label{prop:USA}
    Let $B$ be an $(\alpha, q, 0)$-BE for a Hermitian matrix $H$. Choose some $\beta \in [\|H\|, \alpha]$. For any $0 < \varepsilon \leq \O(\beta/\alpha)$, we can construct a $(2\beta, q + 2, 2\beta\varepsilon)$-BE of $H$, using $N = \O(\frac{\alpha}{\beta} \log(1/\varepsilon))$ queries to \emph{$\ctrl{}{B}$} and $\O(Nq)$ additional quantum gates.
\end{proposition}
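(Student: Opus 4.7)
The plan is to reduce this to the quantum singular value transformation (QSVT)~\cite{gilyen2019quantum}. The input block encoding $B$ is equivalently a $(1, q, 0)$-BE of the Hermitian operator $H/\alpha$, whose eigenvalues lie in $[-\beta/\alpha, \beta/\alpha] \subseteq [-1, 1]$ because $\|H\| \leq \beta$. I want to apply a polynomial transformation $P$ such that $P(H/\alpha) \approx H/(2\beta)$ on this spectral window; interpreting the resulting QSVT circuit as a block encoding then yields a $(2\beta, q + 2, \cdot)$-BE of $H$, where the two extra ancillas account for the QSVT signal qubit and the projection ancilla used to embed an odd real-valued polynomial acting on signed eigenvalues.

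The crux is the polynomial construction. I seek an odd polynomial $P$ satisfying the two conditions
\[
    \sup_{|x| \leq \beta/\alpha} \l| P(x) - \frac{\alpha}{2\beta} x \r| \leq \varepsilon \quad \text{and} \quad \sup_{|x| \leq 1} |P(x)| \leq 1.
\]
The target function $(\alpha/(2\beta)) x$ attains magnitude at most $1/2$ on the window, leaving headroom for both the additive $\varepsilon$ error inside and the boundedness requirement outside; the hypothesis $\varepsilon \leq \O(\beta/\alpha)$ is precisely the slack needed to support this headroom argument (otherwise the two constraints collide near the window boundary). Such a $P$ can be obtained by multiplying the identity $x$ by a polynomial approximation of a smooth indicator of $[-\beta/\alpha, \beta/\alpha]$ (built, e.g., from a rescaled polynomial approximation of $\mathrm{erf}$), and standard tools from the polynomial approximation toolbox of~\cite{gilyen2019quantum} yield such an odd $P$ of degree $d = \O((\alpha/\beta) \log(1/\varepsilon))$.

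With $P$ in hand, QSVT applied to $B$ produces a $(1, q + 2, 0)$-BE of $P(H/\alpha)$, consuming $d$ queries to $\ctrl{}{B}$ and $\ctrl{}{B^\dagger}$ together with $\O(dq)$ additional gates for the signal-processing phases. The operator-functional-calculus lift of the scalar approximation bound yields $\|P(H/\alpha) - H/(2\beta)\| \leq \varepsilon$, which rescales to additive block-encoding error $2\beta\varepsilon$ for $H$, giving the claimed $(2\beta, q + 2, 2\beta\varepsilon)$-BE at total cost $N = \O((\alpha/\beta)\log(1/\varepsilon))$ queries and $\O(Nq)$ extra gates.

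The main obstacle is the polynomial-approximation step: the degree $\O((\alpha/\beta)\log(1/\varepsilon))$ is forced by the sharp transition required at $|x| = \beta/\alpha$, and any looser construction would violate either the accuracy requirement inside the spectral window or the normalization requirement outside it. Everything else---the QSVT instantiation, the query counting, and the operator-norm lift---is essentially mechanical once $P$ exists with the stated degree bound.
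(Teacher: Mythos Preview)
The paper does not supply its own proof of this proposition; it is imported verbatim as Theorem~2 of \cite{low2017hamiltonian}, so there is no in-paper argument to compare against. Your QSVT-based reduction is a correct and standard way to establish the result: it is essentially the argument behind the singular-value generalization in \cite[Theorem~30]{gilyen2019quantum}, which the paper itself remarks would also suffice here. The key degree bound $d = \O((\alpha/\beta)\log(1/\varepsilon))$ and the boundedness/accuracy trade-off you identify are exactly the content of the polynomial construction in those references, so your sketch captures the substance of the proof.
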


With this technique, we can block encode the amplified residual Hamiltonian.

\begin{theorem}\label{thm:amp_BE_residual}
    Let $H = \sum_{a=1}^m \lambda_a E_a$ be an unknown $n$-qubit Hamiltonian and $\widehat{H} = \sum_{a=1}^m \widehat{\lambda}_a E_a$ an estimate such that $\| \lambda - \widehat{\lambda} \|_\infty \leq \eta$ for some $\eta > 0$. Let $\Delta = 2m$, $q = \lceil \log_2 m \rceil + 2$, and fix some error parameters $0 < \varepsilon, \varepsilon_{\mathrm{amp}} \leq \O(\eta)$. There exists a $(2, q + 2, \varepsilon + \varepsilon_{\mathrm{amp}})$-BE of
    \begin{equation}
        \frac{R}{\eta\Delta} \equiv \frac{H - \widehat{H}}{\eta\Delta}
    \end{equation}
    which costs $\O(\log(1/(\varepsilon \eta))\log(1/\varepsilon_{\mathrm{amp}})/\eta)$ queries to multi-controlled variants of $e^{\pm \i H/\Delta}$.
\end{theorem}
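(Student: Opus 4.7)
The plan is to combine \cref{lem:unamp_residual}, which block-encodes the unamplified residual $\widetilde{R}/\Delta$, with uniform spectral amplification (\cref{prop:USA}) to boost the block-encoded matrix by a factor of $1/\eta$; the conversion from $\widetilde{R}$ to the true residual $R$ is then absorbed by a triangle inequality at the very end.

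First I would invoke \cref{lem:unamp_residual} with its internal error parameter set to $\Theta(\varepsilon \eta)$. This produces a $(\pi, q, 0)$-BE of $\widetilde{R}/\Delta$ at a cost of $\O(\log(1/(\varepsilon\eta)))$ queries to $\ctrl{}{e^{\pm \i H/\Delta}}$, and by \cref{eq:norm_of_raw_residual} it also guarantees the norm bound $\|\widetilde{R}/\Delta\| \leq \varepsilon\eta + \eta/2 \leq \eta$, where the final inequality uses $\varepsilon \leq 1/2$ (implied by the hypothesis $\varepsilon \leq \O(\eta)$ once $\eta$ is suitably small). Next I would apply \cref{prop:USA} with $\alpha = \pi$, amplification target $\beta = \eta$ (a valid choice since $\|\widetilde{R}/\Delta\| \leq \eta \leq \pi$), and USA error parameter $\tau = \varepsilon_{\mathrm{amp}}/2$. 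This costs $\O(\log(1/\varepsilon_{\mathrm{amp}})/\eta)$ queries to the previous block encoding and yields a $(2\eta, q + 2, \eta\varepsilon_{\mathrm{amp}})$-BE of $\widetilde{R}/\Delta$. The key reinterpretation is to divide the defining block-encoding inequality through by $\eta$: the same unitary is equivalently a $(2, q + 2, \varepsilon_{\mathrm{amp}})$-BE of the rescaled matrix $\widetilde{R}/(\eta\Delta)$.

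To finish, I would observe that $\|R/(\eta\Delta) - \widetilde{R}/(\eta\Delta)\| = \|H - \widetilde{H}\|/(\eta\Delta) \leq \varepsilon$, so a triangle inequality immediately upgrades the previous step into a $(2, q + 2, \varepsilon + \varepsilon_{\mathrm{amp}})$-BE of $R/(\eta\Delta)$, as claimed. Multiplying the two query complexities gives the advertised $\O(\log(1/(\varepsilon\eta))\log(1/\varepsilon_{\mathrm{amp}})/\eta)$ bound. The only delicate step is the parameter bookkeeping — verifying that $\beta = \eta$ is admissible for \cref{prop:USA} and carefully tracking how the subnormalization and error rescale when the amplified block encoding is reinterpreted for the $1/\eta$-rescaled matrix — but once those factors of $\eta$ are handled, the rest of the proof is a direct concatenation of the prior lemmas.
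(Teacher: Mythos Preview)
Your proposal is correct and follows essentially the same approach as the paper. The only cosmetic difference is where the $1/\eta$ rescaling is performed: the paper first reinterprets the $(\pi, q, 0)$-BE of $\widetilde{R}/\Delta$ as a $(\pi/(2\eta), q, 0)$-BE of $T \coloneqq \widetilde{R}/(2\eta\Delta)$ and then applies \cref{prop:USA} with $\beta = 1/2$, whereas you apply \cref{prop:USA} directly with $\beta = \eta$ and rescale afterward; since block encodings and uniform spectral amplification are homogeneous under simultaneous rescaling of $\alpha$ and $\beta$, the two routes are equivalent and yield identical parameters and query counts.
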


\begin{proof}
Let
\begin{equation}
    T \coloneqq \frac{\widetilde{R}}{2\eta \Delta},
\end{equation}
be the ``target'' operator, which is what we want to amplify up to. Note that by \cref{eq:norm_of_raw_residual} we have
\begin{equation}
    \|T\| \leq \frac{\varepsilon + \eta/2}{2\eta},
\end{equation}
which is at most $\frac{1}{2}$ as long as $\varepsilon \leq \eta/2$. Let $B$ be the $(\pi, q, 0)$-BE of $\widetilde{R}/\Delta$, given by \cref{lem:unamp_residual}. Equivalently, we interpret this as a $(\frac{\pi}{2\eta}, q, 0)$-BE for $T$. Choose $\beta = \frac{1}{2}$, which is guaranteed to lie within the interval $[\|T\|, \frac{\pi}{2\eta}]$. Also note that $\varepsilon_{\mathrm{amp}} \leq \O(\eta)$. Then, we apply uniform spectral amplification to produce $B_{\mathrm{amp}}$, which by \cref{prop:USA} is a $(1, q + 2, \varepsilon_{\mathrm{amp}})$-BE of $T$. The circuit for this transformation costs $\O(\frac{\pi}{\eta} \log(1/\varepsilon_{\mathrm{amp}}))$ queries to $\ctrl{}{B}$, or equivalently
\begin{equation}\label{eq:queries_for_amp_res}
    Q = \O\l(\frac{\log(1/\varepsilon) \log(1/\varepsilon_{\mathrm{amp}})}{\eta} \r)
\end{equation}
queries to $\ctrl{}{e^{\pm \i H/\Delta}}$. To get the block-encoding parameters stated in the theorem, observe that $2T = \frac{\widetilde{R}}{\eta\Delta}$ is $2\varepsilon_{\mathrm{amp}}$-close to the amplified matrix. Meanwhile, $\frac{1}{\eta\Delta} \|R - \widetilde{R}\| \leq \varepsilon/\eta$. Rescaling error parameters $\varepsilon_{\mathrm{amp}} \leftarrow \varepsilon_{\mathrm{amp}}/2$ and $\varepsilon \leftarrow \varepsilon\eta$ concludes the proof.
\end{proof}

\subsection{Parameter estimation}

Now we consider the error parameters sufficient for learning the residual Hamiltonian, within the context of the bootstrap algorithm. Recall that we only need to learn $R/\eta$ up to constant error of $\epsilon_0 = \frac{1}{2}$.

Let $T'$ be the matrix encoded by $B_{\mathrm{amp}}$. By \cref{thm:amp_BE_residual}, it has error
\begin{equation}
    \l\| \frac{R}{\eta\Delta} - 2T' \r\| \leq \varepsilon + \varepsilon_{\mathrm{amp}}.
\end{equation}
Define
\begin{align}
    r_a &\coloneqq \frac{1}{2^n} \tr[E_a (R/\eta)],\\
    r_a' &\coloneqq \frac{1}{2^n} \tr[ E_a (2\Delta T') ].
\end{align}
By H\"{o}lder's inequality, we get
\begin{equation}\label{eq:BE_res_coeff_err}
\begin{split}
    |r_a - r_a'| &\leq \l\| \frac{R}{\eta} - 2\Delta T' \r\|\\
    &\leq \Delta (\varepsilon + \varepsilon_{\mathrm{amp}}).
\end{split}
\end{equation}
We want to learn $r_a$ to within constant additive error $\epsilon_0 = \frac{1}{2}$, so it suffices to bound \cref{eq:BE_res_coeff_err} by a small constant as well. Choosing $\varepsilon = \varepsilon_{\mathrm{amp}} = c \epsilon_0/\Delta$ for some sufficiently small constant $c \ll \frac{1}{2}$, we get $|r_a - r_a'| \leq 2c \epsilon_0$. By \cref{thm:amp_BE_residual}, the query complexity for producing the block encoding of such a $T'$ is
\begin{equation}
    Q = \O\l(\frac{\log(\Delta/\eta) \log(\Delta)}{\eta} \r),
\end{equation}
where we have suppressed the constants $c, \epsilon_0$. Then given copies of $\ket{\pchoiref(T')}$, we can run the shadow tomography protocol of \cite{castaneda2023hamiltonian} to learn an estimate $\widehat{r}'_a$ of $r'_a$. By \cref{eq:CW_sample_complexity}, with high probability we get an $\epsilon_0$-accurate estimate by measuring
\begin{equation}
    N = \O\l( \frac{\Delta^2 \log m}{\epsilon_0^2} \r).
\end{equation}
copies of $\ket{\pchoiref(T')}$. Note that, using the decoding operators $O_a$ and $O_\norm$ as defined in \cref{eq:decoding_ops}, our estimate using $\ket{\pchoiref(T')}$ has slightly modified constants in its definition:
\begin{equation}\label{eq:residual_parameter_shadow_estimate}
    \widehat{r}_a' \coloneqq 2\Delta \frac{\Re[\widehat{o}_a]}{\Re[\widehat{o}_\norm]}.
\end{equation}

Now we consider the preparation of $\ket{\pchoiref(T')}$. Conditioned on its successful preparation from a query to $\ctrl{0}{B_{\mathrm{amp}}}$, the total number of queries required is $NQ = \Ot(\Delta^2/\eta)$. Meanwhile, parallel to the claim in \cref{prop:CW_pchoi_prep}, the success probability is above $\frac{1}{2}$, since
\begin{equation}
\begin{split}
    \Pr(\text{Preparing } \ket{\pchoiref(T')}) &= \frac{1}{2} \l\| ( T' \otimes \I_n ) \ket{\Omega} \ket{0} + \ket{\Omega} \ket{1} \r{\|^2}\\
    &= \frac{1}{2} \l( \frac{1}{2^n} \|T'\|_F^2 + 1 \r)\\
    &\geq \frac{1}{2}.
\end{split}
\end{equation}
Therefore by \cref{prop:prob_state_prep}, only a constant number of queries suffices to prepare a copy of the pseudo-Choi state with reference, for a total of $\O(NQ)$ queries and $\O(NQ/\Delta)$ evolution time. The theorem below follows.

\begin{theorem}\label{thm:amp_residual_learning}
    Let $H = \sum_{a=1}^m \lambda_a E_a$ be an unknown $n$-qubit Hamiltonian and $\widehat{H} = \sum_{a=1}^m \widehat{\lambda}_a E_a$ an estimate such that $\|\lambda - \widehat{\lambda}\|_\infty \leq \eta$ for some $\eta \in (0, 1]$. There exists a learning algorithm which, with high probability, produces an estimate $\widehat{r}_a'$ of $(\lambda_a - \widehat{\lambda}_a)/\eta$ up to constant additive error (say, $\epsilon_0 = \frac{1}{2}$) for each $a \in [m]$. The algorithm makes $\Ot(m^2/\eta)$ queries to \emph{$\ctrl{}{e^{\pm\i H/(2m)}}$} and uses $\poly(n, m, 1/\eta)$ quantum and classical computation.
\end{theorem}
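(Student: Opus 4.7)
The plan is to assemble the ingredients already built in this section and the previous one. Fix $\Delta = 2m$, which satisfies $\Delta \geq 2\|H\|$ since $\|H\| \leq m$ under our normalization. First I would invoke \cref{thm:amp_BE_residual} with $\varepsilon = \varepsilon_{\mathrm{amp}} = c\epsilon_0/\Delta$ for a sufficiently small constant $c \ll \tfrac{1}{2}$ (so that both are $\O(\eta)$ as required, since $\eta \leq 1$ and $\epsilon_0 = \tfrac{1}{2}$). This yields a $(2, q+2, \varepsilon + \varepsilon_{\mathrm{amp}})$-BE $B_{\mathrm{amp}}$ of $R/(\eta\Delta)$ at a cost of
\begin{equation}
    Q = \O\l(\frac{\log(1/(\varepsilon\eta))\log(1/\varepsilon_{\mathrm{amp}})}{\eta}\r) = \Ot\l(\frac{1}{\eta}\r)
\end{equation}
queries to $\ctrl{}{e^{\pm\i H/\Delta}}$. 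Let $T'$ denote the matrix actually encoded in the top-left $(q+2)$-ancilla block of $B_{\mathrm{amp}}$, so that $\|R/(\eta\Delta) - 2T'\| \leq \varepsilon + \varepsilon_{\mathrm{amp}}$. By H\"{o}lder's inequality, the induced error on each Pauli coefficient, $|r_a - r_a'| \leq \Delta(\varepsilon + \varepsilon_{\mathrm{amp}}) \leq 2c\epsilon_0$, is well below $\epsilon_0$.

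Next I would prepare the pseudo-Choi state with reference, $\ket{\pchoiref(T')}$, from $B_{\mathrm{amp}}$ via the same recipe used in \cref{prop:CW_pchoi_prep}: apply $B_{\mathrm{amp}}$ to the appropriate initial ancilla-plus-$\ket{\Omega}$ state, measure the flag register, and postselect on the successful outcome. The reference branch in the definition of $\ket{\pchoiref(\cdot)}$ guarantees a success probability at least $\tfrac{1}{2}$, so by \cref{prop:prob_state_prep} only a constant number of queries to $B_{\mathrm{amp}}$ are needed per copy. To estimate the coefficients $r_a'$, I would then feed these copies into the Clifford-shadow-based estimator of \cref{sec:pseudo-choi_learning}, using decoding operators $O_a$ and $O_\norm$ and forming the ratio estimator as in \cref{eq:residual_parameter_shadow_estimate} (with the constant $\pi\Delta/2$ replaced by $2\Delta$ to match the amplified block normalization). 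By the copy-complexity bound of \cref{eq:CW_sample_complexity} with target precision $\epsilon_0$, a total of
\begin{equation}
    N = \O\l(\frac{\Delta^2 \log m}{\epsilon_0^2}\r) = \O(m^2 \log m)
\end{equation}
copies suffice to obtain, with high probability, estimates $\widehat{r}_a'$ satisfying $|\widehat{r}_a' - r_a'| \leq \epsilon_0/2$ simultaneously for all $a \in [m]$.

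Combining the two error sources by a triangle inequality gives $|\widehat{r}_a' - r_a| \leq 2c\epsilon_0 + \epsilon_0/2 \leq \epsilon_0$, which is the claimed constant-precision estimate of $r_a = (\lambda_a - \widehat{\lambda}_a)/\eta$. The total query count to $\ctrl{}{e^{\pm\i H/\Delta}}$ is $N \cdot \O(Q) = \Ot(m^2/\eta)$, and all classical and quantum postprocessing is polynomial by the efficiency of Clifford shadows and of the block-encoding circuit. No single step is truly difficult; the main thing to get right is the bookkeeping of normalizations and error budgets. In particular, one must be careful that: (i) the factor-of-$2$ subnormalization of $T'$ (relative to $R/(\eta\Delta)$) is absorbed into the decoding constants of the shadow estimator, (ii) the choice $\varepsilon, \varepsilon_{\mathrm{amp}} = \Theta(\epsilon_0/\Delta)$ is compatible with the constraint $\leq \O(\eta)$ of \cref{thm:amp_BE_residual} (which holds because $\Delta = 2m$ and $\eta \leq 1$), and (iii) the denominator $\Re[\widehat{o}_\norm]$ in the ratio estimator is bounded away from zero so that the linearization argument of \cite{castaneda2023hamiltonian} yielding \cref{eq:CW_sample_complexity} still applies here. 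Once these are verified, the theorem follows immediately.
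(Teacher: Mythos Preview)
Your proposal is correct and follows essentially the same route as the paper's own argument: invoke \cref{thm:amp_BE_residual} with $\varepsilon = \varepsilon_{\mathrm{amp}} = c\epsilon_0/\Delta$, use H\"{o}lder to bound $|r_a - r_a'|$, prepare $\ket{\pchoiref(T')}$ with success probability $\geq 1/2$, run Clifford shadows with $N = \O(\Delta^2 \log(m)/\epsilon_0^2)$ copies, and combine via the triangle inequality for a total of $\Ot(m^2/\eta)$ queries. One small caveat: your justification in point (ii) that $\varepsilon, \varepsilon_{\mathrm{amp}} \leq \O(\eta)$ ``holds because $\Delta = 2m$ and $\eta \leq 1$'' is not quite right when $\eta \ll 1/m$, but this is harmless (one can simply shrink $\varepsilon_{\mathrm{amp}}$ further, affecting only logarithmic factors), and the paper glosses over the same technicality.
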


\subsection{Structure learning}

To perform structure learning with Heisenberg-limited scaling (up to the $\log(1/\epsilon)$ factor), the algorithm of \cref{sec:structure_learning} must also be placed into the bootstrap framework. We describe how that procedure works in this subsection.

Suppose we have copies of the referenceless pseudo-Choi state $\ket{\pchoi(T')}$, prepared from queries to $B_{\mathrm{amp}}$. (We analyze the complexity of their preparation later.) We run the algorithm of \cref{lem:coupon_collector} to acquire a set containing $S_\gamma(2\Delta T') = \{E_a : |r_a'| > \gamma\}$;~this consumes
\begin{equation}
    N = \Ot\l( \frac{\|r'\|^2}{\gamma^2} \r)
\end{equation}
copies of $\ket{\pchoi(T')}$. To relate this set to $S_{\epsilon_0}(R/\eta)$, we have the bound $|r_a'| \leq |r_a| + \Delta(\varepsilon + \varepsilon_{\mathrm{amp}})$, which implies
\begin{equation}
\begin{split}
    |r_a| &\geq |r_a'| - \Delta(\varepsilon + \varepsilon_{\mathrm{amp}})\\
    &> \gamma - \Delta(\varepsilon + \varepsilon_{\mathrm{amp}})
\end{split}
\end{equation}
for all $a \in S_\gamma(2\Delta T')$. Thus if we choose $\gamma = \epsilon_0 + \Delta(\varepsilon + \varepsilon_{\mathrm{amp}})$, we get $S_{\epsilon_0}(R/\eta) \subseteq S_\gamma(2\Delta T')$.

At each step in the bootstrapping loop, we only need to consider terms in $H$ that we had not observed yet. Suppose the worst case, wherein at the current step we only have the minimally guaranteed set of terms $S_\eta(H)$. In other words, the estimate $\widehat{H}$ is supported only on $S_\eta(H)$, which means that
\begin{equation}
   R/\eta = \sum_{E_a \in S_{\eta}(H)} \l( \frac{\lambda_a - \widehat{\lambda}_a}{\eta} \r) E_a + \sum_{E_a \notin  S_{\eta}(H)} \frac{\lambda_a}{\eta} E_a.
\end{equation}
We focus on the terms in the second sum, which we have yet to identify. Upon learning $S_{\epsilon_0}(R/\eta)$, the new terms we have acquired are at least those which obey $|\lambda_a| > \epsilon_0 \eta = \eta/2$. Appending these newly identified terms with $S_\eta(H)$ gives us $S_{\eta/2}(H)$, as desired. Setting $\eta = 2^{-j}$ and recursing this procedure over $j = 0, 1, \ldots, \lfloor \log_2(1/\epsilon) \rfloor$ allows us to identify $S_\epsilon(H)$ with nearly Heisenberg-limited precision because the query complexity scales nearly linearly with $1/\eta$;~we show this in the sequel.

The argument is parallel to the analysis given in \cref{thm:structure_learning_base}. Take $\varepsilon = \varepsilon_{\mathrm{amp}} = c\epsilon_0/\Delta$ for a small constant $c$, so that $\gamma = \frac{1}{2} + 2c$ is also constant. The number of copies required is then $N = \Ot(\|r'\|^2)$. On the other hand, the success probability of preparing a copy from $B_{\mathrm{amp}}$ is $p = \frac{1}{2^n} \|T'\|_F^2 = \frac{1}{4} \|r'/\Delta\|^2$. Thus by \cref{prop:prob_state_prep}, $\O(N/p) = \Ot(\Delta^2) = \Ot(m^2)$ queries to $B_{\mathrm{amp}}$ suffice. Finally, recall from \cref{thm:amp_BE_residual} that we make $\Ot(1/\eta)$ queries to $\ctrl{}{e^{\pm \i H/\Delta}}$ in order to produce the amplified block encoding. In sum, we get the following result about structure learning from residuals.

\begin{theorem}\label{thm:amp_residual_structure}
    Let $H = \sum_{a=1}^m \lambda_a E_a$ be an unknown $n$-qubit Hamiltonian and $\widehat{H} = \sum_{a=1}^m \widehat{\lambda}_a E_a$ an estimate such that $\|\lambda - \widehat{\lambda}\|_\infty \leq \eta$ for some $\eta \in (0, 1]$. There exists a quantum algorithm which, with high probability, identifies all $E_a$ such that $|\lambda_a| \geq \eta/2$. The algorithm makes $\Ot(m^2/\eta)$ queries to \emph{$\ctrl{}{e^{\pm\i H/(2m)}}$} and uses $\poly(n, m, 1/\eta)$ quantum and classical computation.
\end{theorem}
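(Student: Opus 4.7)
The plan is to combine the amplified block encoding of \cref{thm:amp_BE_residual} with the referenceless pseudo-Choi state preparation (analogous to \cref{lem:refless_pchoi_prep}) and Bell sampling (\cref{lem:coupon_collector}), then convert the resulting learned set back to a statement about the large terms of $H$. The structure of the argument directly mirrors that of \cref{thm:structure_learning_base}, with the residual Hamiltonian $R/\eta$ in place of $H$.

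First I would invoke \cref{thm:amp_BE_residual} with $\Delta = 2m$ and error parameters $\varepsilon = \varepsilon_{\mathrm{amp}} = c\epsilon_0/\Delta$ for a small constant $c \ll 1$, where $\epsilon_0 = 1/2$. This produces a block encoding $B_{\mathrm{amp}}$ of some Hermitian matrix $T'$ satisfying $\|R/(\eta\Delta) - 2T'\| \leq 2c\epsilon_0/\Delta$, at the cost of $Q = \Ot(1/\eta)$ queries to $\ctrl{}{e^{\pm\i H/(2m)}}$. Writing $2\Delta T' = \sum_a r_a' E_a$ and $r_a = (\lambda_a - \widehat{\lambda}_a)/\eta$, H\"{o}lder's inequality yields $|r_a - r_a'| \leq 2c\epsilon_0$.

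Next I would probabilistically prepare $\ket{\pchoi(T')}$ by applying $B_{\mathrm{amp}} \otimes \I_n$ to $\ket{0^{q+2}}\ket{\Omega}$ and postselecting the flag register on $\ket{0^{q+2}}$, exactly as in the proof of \cref{lem:refless_pchoi_prep}. Because the block-encoded matrix here is subnormalized by a factor of $2$ rather than $\pi/2$, the success probability evaluates to $p = \|T'\|_F^2/2^n = \tfrac{1}{4}\|r'/\Delta\|^2$. I would then run \cref{alg:pauli_spectrum_learning} on the prepared copies, Bell-measuring and collecting outcomes to recover a set containing $S_\gamma(2\Delta T')$ with $\gamma = \epsilon_0 + 2c\epsilon_0 = (1+2c)\epsilon_0 = \Theta(1)$. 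By \cref{lem:coupon_collector}, $N = \Ot(\|r'\|^2/\gamma^2) = \Ot(\|r'\|^2)$ copies suffice to succeed with high probability, and by \cref{prop:prob_state_prep} we need $\Ot(N/p) = \Ot(\Delta^2) = \Ot(m^2)$ queries to $B_{\mathrm{amp}}$ to produce them; multiplying by $Q$ yields the total $\Ot(m^2/\eta)$.

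Finally I would convert the learned set back to the claimed structure. By the argument in the paragraph following \cref{thm:amp_BE_residual} (paralleling \cref{lem:big_set_conversion}), the bound $|r_a - r_a'| \leq 2c\epsilon_0$ forces $S_{\epsilon_0}(R/\eta) \subseteq S_\gamma(2\Delta T')$, so every $E_a$ with $|\lambda_a - \widehat{\lambda}_a| \geq \eta\epsilon_0 = \eta/2$ appears in the output; in particular, any previously unseen $E_a$ with $|\lambda_a| \geq \eta/2$ has $\widehat{\lambda}_a = 0$ and therefore lies in $S_{\epsilon_0}(R/\eta)$, giving the advertised guarantee. The only real obstacle is the coupled bookkeeping of error tolerances: we need $\gamma$ to remain $\Theta(1)$ (so Bell sampling pays only a constant overhead), while simultaneously preserving the subset relation $S_{\epsilon_0}(R/\eta) \subseteq S_\gamma(2\Delta T')$ and keeping the amplification cost $\Ot(1/\eta)$. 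The choice $\varepsilon = \varepsilon_{\mathrm{amp}} = c\epsilon_0/\Delta$ with $c$ a small absolute constant threads all three requirements; the rest of the analysis (polynomial bound on the classical and quantum computation, plus the union of $\mathrm{supp}(\widehat{H})$ with the Bell-sampled set as the returned structure) is routine.
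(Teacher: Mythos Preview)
Your proposal is correct and follows essentially the same approach as the paper: invoke \cref{thm:amp_BE_residual} with $\varepsilon=\varepsilon_{\mathrm{amp}}=c\epsilon_0/\Delta$, prepare referenceless pseudo-Choi states of $T'$ with success probability $p=\tfrac{1}{4}\|r'/\Delta\|^2$, Bell-sample $N=\Ot(\|r'\|^2)$ copies via \cref{lem:coupon_collector} at the constant threshold $\gamma=\epsilon_0+2c\epsilon_0$, and combine $N/p=\Ot(m^2)$ with $Q=\Ot(1/\eta)$ to get the stated query count. Your final conversion---that unseen terms with $|\lambda_a|\geq\eta/2$ satisfy $|r_a|\geq\epsilon_0$ and hence land in $S_{\epsilon_0}(R/\eta)\subseteq S_\gamma(2\Delta T')$, and that one unions with $\mathrm{supp}(\widehat{H})$---matches the paper's bookkeeping exactly.
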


\subsection{Heisenberg-limited bootstrap analysis}

Here, we finish the complexity analysis by placing these results into the outer loop described in \cref{alg:DOS_bootstrap}. For completeness, let us state and quickly prove why the bootstrap framework achieves Heisenberg-limited scaling.

\begin{proposition}\label{prop:bootstrap}
    Suppose that the base algorithm $\alg(H, \eta, \zeta)$ uses $\O(\tau \log(1/\zeta) / \eta^{\alpha})$ evolution time under $H$, for some $\tau > 0$ and $\alpha > 1$. Furthermore, suppose that $\alg$ has the property that $\alg(sH + H', \eta, \zeta)$ uses evolution time $\O(s\tau \log(1/\zeta) / \eta^{\alpha})$, for any $s > 0$ and any (known) Hermitian operator $H'$. Then, Algorithm~\ref{alg:DOS_bootstrap} returns an $\epsilon$-accurate estimate of the Hamiltonian coefficients with probability at least $1 - \delta$, using total evolution time $\O(\tau \log(1/\delta)/\epsilon)$.
\end{proposition}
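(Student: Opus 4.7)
The plan is to prove correctness by induction on the round index $j$, bound the total failure probability by a geometric union bound, and then sum the per-round evolution times using the assumed scaling of $\alg$. The key structural fact that makes the bootstrap work is that in every round we only demand constant precision $\epsilon_0 = 1/2$ from $\alg$, so the $1/\eta^\alpha$ factor degenerates to a constant and only the rescaling factor $s_j = 1/\eta_j$ multiplies the cost.

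For the induction, I would show that with probability at least $1 - \sum_{i=0}^{j} \zeta_i$, the estimate after round $j$ satisfies $\|\lambda - \widehat{\lambda}^{(j+1)}\|_\infty \leq \eta_{j+1}$. The base case follows from $\widehat{\lambda}^{(0)} = 0$ together with the normalization $\|\lambda\|_\infty \leq 1 = \eta_0$. For the inductive step, conditioning on the previous rounds succeeding, the vector of coefficients of $R_j/\eta_j$ has $\ell_\infty$-norm at most $1$, so $R_j/\eta_j$ is a valid input to $\alg$ at unit normalization. By the success guarantee of $\alg$, with probability at least $1 - \zeta_j$ the output $\widehat{r}^{(j)}$ obeys $\|r - \widehat{r}^{(j)}\|_\infty \leq 1/2$, where $r = (\lambda - \widehat{\lambda}^{(j)})/\eta_j$. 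The update $\widehat{\lambda}^{(j+1)} = \widehat{\lambda}^{(j)} + \eta_j \widehat{r}^{(j)}$ then satisfies $\|\lambda - \widehat{\lambda}^{(j+1)}\|_\infty = \eta_j \|r - \widehat{r}^{(j)}\|_\infty \leq \eta_j/2 = \eta_{j+1}$. After $T = \lfloor \log_2(1/\epsilon) \rfloor$ rounds this gives final error at most $\eta_{T+1} \leq \epsilon$, and the total failure probability is at most $\sum_{j=0}^T \zeta_j = \sum_{j=0}^T \delta/2^{T+1-j} \leq \delta$ by the geometric choice of the $\zeta_j$.

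For the complexity, the round-$j$ call is to $\alg$ on the Hamiltonian $R_j/\eta_j = (1/\eta_j)H + H'_j$, where $H'_j = -\widehat{H}_j/\eta_j$ is a classically known Hermitian operator. Applying the rescaling hypothesis with $s = 1/\eta_j$ and precision $1/2$, round $j$ consumes evolution time $\O(\tau \log(1/\zeta_j)/\eta_j) = \O(\tau \, 2^j (T + 1 - j + \log(1/\delta)))$ under $H$ itself. Summing gives
\begin{equation}
\ttotal = \O\!\l(\tau \log(1/\delta) \sum_{j=0}^{T} 2^j\r) + \O\!\l(\tau \sum_{j=0}^{T} 2^j (T + 1 - j)\r).
\end{equation}
The first sum is $\O(2^{T+1}) = \O(1/\epsilon)$. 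The second, after substituting $k = T - j$, becomes $2^T \sum_{k=0}^T (k+1)/2^k$, and since $\sum_{k=0}^\infty (k+1)/2^k$ converges, this is also $\O(2^T) = \O(1/\epsilon)$. Combining yields $\ttotal = \O(\tau \log(1/\delta)/\epsilon)$ as claimed.

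The main technical point — more a bookkeeping issue than a genuine obstacle — is the choice $\zeta_j = \delta / 2^{T+1-j}$ that puts progressively tighter failure budgets on the earlier rounds while still summing to $\delta$; one has to check that the extra factor $(T+1-j)$ this introduces in $\log(1/\zeta_j)$ does not spoil the geometric sum, which the computation above confirms. A secondary subtlety is that the rescaling hypothesis must genuinely apply to $R_j/\eta_j$, i.e.\ $\alg$ must be able to absorb the classically-known additive shift $H'_j$ without evolution-time overhead (in our setting this is accomplished by block-encoding $\widehat{H}_j$ as an LCU and subtracting it from the block encoding of $H$, as developed in Section~\ref{sec:residual}), so this step is what links the abstract statement of Proposition~\ref{prop:bootstrap} to the concrete algorithms of Theorems~\ref{thm:amp_residual_learning} and~\ref{thm:amp_residual_structure}.
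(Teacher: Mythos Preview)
Your proof is correct and follows essentially the same approach as the paper: an inductive correctness argument showing each round halves the error, a geometric union bound over the $\zeta_j$, and the per-round cost computation $\O(\tau\,2^j(T+1-j+\log(1/\delta)))$ summed to $\O(\tau\,2^T\log(1/\delta))$. Your treatment is slightly more explicit in verifying that $\sum_{j=0}^T 2^j(T+1-j) = \O(2^T)$ via the substitution $k = T-j$, whereas the paper absorbs this step directly into the final big-$\O$.
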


\begin{proof}
First we analyze the query complexity of the algorithm. At each step $j$ in the loop, the base algorithm only requires a fixed error of $\epsilon_0 = 1/2$, so the dependence on precision $\O(1/\epsilon_0^\alpha)$ is reduced to a constant. Meanwhile, we scale $H$ by $s = 1/\eta_j$. Thus we query the black box for time $|t_j| = \O((\tau / \eta_j) \log(1/\zeta_j))$. With the choices $\eta_j = 2^{-j}$, $\zeta_j = \delta/2^{T+1-j}$, and $T = \lfloor \log_2(1/\epsilon) \rfloor$, the total evolution time is
\begin{equation}
\begin{split}
    \ttotal &= \sum_{j=0}^T |t_j| = \O\l( \tau \sum_{j=0}^T 2^j \log\l( \frac{2^{T+1-j}}{\delta} \r) \r)\\
    &= \O\l(\tau \sum_{j=0}^T 2^j ( T + 1 - j + \log(1/\delta) ) \r)\\
    &= \O(\tau 2^T \log(1/\delta)) = \O((\tau / \epsilon) \log(1/\delta)).
\end{split}
\end{equation}

Next we check the correctness of the algorithm. By a union bound, the overall failure probability is at most
\begin{equation}
    \sum_{j=0}^T \zeta_j = \delta \sum_{j=0}^T 2^{-(T+1)+j} \leq \delta.
\end{equation}
Conditioned on the success at each step, each estimate of the residual $\widehat{\lambda}^{(j)}$ is within $1/2$ of $(\lambda - \lambda^{(j)}) / \eta_j$. Equivalently, this means that
\begin{equation}
    \| \lambda - (\widehat{\lambda}^{(j)} + \eta_j \widehat{r}^{(j)}) \|_\infty \leq \frac{\eta_j}{2} = \frac{1}{2^{j+1}}.
\end{equation}
Thus at the final step, we are left with an estimate with error at most $2^{-(T+1)} \leq \epsilon$.
\end{proof}

Now we piece together the claim of \cref{thm:tr_learning}, up to the time resolution $\tmin$ which we address in \cref{sec:controlization}.

\begin{theorem}
    Let $H = \sum_{a=1}^m \lambda_a E_a$ be an unknown $n$-qubit Hamiltonian but $m$ given to us. Let $\epsilon, \delta \in (0, 1)$. There exists a structure learning algorithm that outputs $\widehat{\lambda} \in \R^m$, along with classical representations of corresponding $E_1, \ldots, E_m$, such that $\Pr(\|\lambda - \widehat{\lambda}\|_\infty \leq \epsilon) \geq 1 - \delta$. The algorithm uses a total evolution time of $\ttotal = \Ot(m\log(1/\epsilon)/\epsilon)$ to time-forward and time-reverse queries, and uses $\poly(n, m, 1/\epsilon, \log(1/\delta))$ quantum and classical computation.
\end{theorem}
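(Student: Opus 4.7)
The plan is to combine the residual-based structure identification of \cref{thm:amp_residual_structure} with the residual-based parameter estimation of \cref{thm:amp_residual_learning} into a single base subroutine, then plug it into \cref{alg:DOS_bootstrap} and invoke \cref{prop:bootstrap}. Specifically, I would define the base algorithm $\alg$ on input $(R_j/\eta_j,\, \epsilon_0 = 1/2,\, \zeta_j)$ to first call the structure-identification procedure of \cref{thm:amp_residual_structure} on the amplified block encoding of the residual, adding any newly discovered Pauli labels $E_a$ to the running list of identified terms, and then call the parameter-estimation procedure of \cref{thm:amp_residual_learning} on this enlarged list to produce estimates $\widehat{r}^{(j)}$ of the scaled residual coefficients $(\lambda_a - \widehat{\lambda}^{(j)}_a)/\eta_j$ to constant precision $\epsilon_0 = 1/2$. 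The update $\widehat{\lambda}^{(j+1)} \gets \widehat{\lambda}^{(j)} + \eta_j \widehat{r}^{(j)}$ then gives $\|\lambda - \widehat{\lambda}^{(j+1)}\|_\infty \leq \eta_{j+1}$ on the identified support, and running for $T+1 = \lfloor \log_2(1/\epsilon)\rfloor + 1$ rounds drives the error down to $\epsilon$.

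Next I would verify correctness by induction on $j$. The inductive hypothesis is that $\widehat{H}_j$ is supported on at least $S_{\eta_j}(H)$ and is $\eta_j$-close to $\lambda$ entry-wise on that support. At round $j$, the amplified residual $R_j/\eta_j$ has coefficient $\lambda_a/\eta_j$ on every yet-unidentified Pauli $E_a$ (since $\widehat{\lambda}^{(j)}_a = 0$ there), so \cref{thm:amp_residual_structure} with its $\eta_j/2$ threshold captures every such term with $|\lambda_a| > \eta_j/2$. \cref{thm:amp_residual_learning} then refines every coefficient to within $\eta_j/2 = \eta_{j+1}$, closing the induction. A union bound over the rounds using the decaying failure budget $\zeta_j = \delta/2^{T+1-j}$, exactly as in \cref{prop:bootstrap}, caps the total failure probability at $\delta$.

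For the complexity, both \cref{thm:amp_residual_learning} and \cref{thm:amp_residual_structure} use $\Ot(m^2/\eta_j)$ queries to $\ctrl{}{e^{\pm \i H/(2m)}}$, each of which runs for evolution time $1/(2m)$; hence the per-round evolution time is $\Ot(m/\eta_j)$. This fits the hypothesis of \cref{prop:bootstrap} with $\tau = \Ot(m)$ and effectively $\alpha = 1$, and summing the geometric series $\sum_{j=0}^T m \cdot 2^j \cdot \log(1/\zeta_j)$ yields $\ttotal = \Ot(m \log(1/\epsilon)/\epsilon)$ with an extra $\log(1/\delta)$ factor absorbed into the stated polynomial in $\log(1/\delta)$. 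The time-resolution bound $\tmin = \Omegat(\epsilon/m^3)$ promised in \cref{thm:tr_learning} is not addressed here; it will follow from the controlization analysis of \cref{sec:controlization} applied at the smallest-$\eta$ round.

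The main obstacle is bookkeeping the interaction between the two subroutines across rounds. Bell sampling on the approximate referenceless pseudo-Choi state can return Pauli labels outside the true support of $H$ due to statistical fluctuations and controlization-induced perturbations; however, \cref{lem:big_set_conversion} guarantees no genuinely zero-weight Pauli exceeds the effective threshold provided the block-encoding error is taken below $\Theta(\epsilon_0/\Delta)$, and any remaining low-weight spurious labels will simply receive $\widehat{\lambda}_a$ of magnitude $\leq \epsilon$ from the parameter-estimation step and can be discarded. One also has to verify that the running size of the identified-term list stays $\O(m)$ so that the LCU block encoding of $\widehat{H}_j$ from \cref{lem:rescaled_est_H_norm} retains its $\|\widehat{\lambda}^{(j)}\|_1 \leq m$ subnormalization; this follows since only a polynomial number of Bell samples are drawn across all rounds and, by the argument above, no label with true coefficient zero survives the filtering at the final step.
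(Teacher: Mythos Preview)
Your proposal is correct and follows essentially the same approach as the paper: build the base routine $\alg$ by concatenating \cref{thm:amp_residual_structure} (Bell sampling on the amplified residual to pick up new terms) with \cref{thm:amp_residual_learning} (shadow estimation of the residual coefficients), then feed this into \cref{alg:DOS_bootstrap} and invoke \cref{prop:bootstrap}. The paper also notes that the $j=0$ step can use the unamplified versions, and makes the median-boosting for the failure budget $\zeta_j$ explicit, but otherwise your outline matches.

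One small imprecision worth flagging: you assert $\tau = \Ot(m)$ and then obtain $\ttotal = \Ot(m\log(1/\epsilon)/\epsilon)$, but the geometric sum you write, $\sum_j m\cdot 2^j \log(1/\zeta_j)$, on its own only produces an $m\log(1/\delta)/\epsilon$ bound. The $\log(1/\epsilon)$ in the theorem statement comes from the hidden $\log(1/\eta_j)$ factor inside the $\Ot(m^2/\eta_j)$ query counts of \cref{thm:amp_residual_learning,thm:amp_residual_structure} (traceable to the $\log(1/(\varepsilon\eta))$ in \cref{thm:amp_BE_residual}). The paper handles this by bounding $\log(1/\eta_j)\leq\log(1/\epsilon)$ uniformly and setting $\tau = m\log(1/\epsilon)$ before applying \cref{prop:bootstrap}. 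You should make this explicit rather than leaving the factor buried in the tilde; otherwise your invocation of \cref{prop:bootstrap} with a $j$-independent $\tau$ is not quite literally justified.
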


\begin{proof}
    We combine \cref{thm:amp_residual_learning,thm:amp_residual_structure} to construct the base algorithm $\alg$. (For the $j = 0$ step, we can use \cref{prop:CW_learning_algorithm,thm:structure_learning_base} without using uniform spectral amplification, at the cost of slightly worse constant factors.) Let $R_j = H - \widehat{H}_j$, $\eta_j = 2^{-j}$, and $\zeta_j = \delta/2^{T+1-j}$. We first identify the terms in $R_j/\eta_j$ which are at least $1/2$ in magnitude, $\widehat{S}_j \supseteq S_{1/2}(R_j/\eta_j)$. By \cref{thm:amp_residual_structure}, this costs $\Ot(m^2 \log(1/\eta_j)/\eta_j)$ queries evolving for time $\frac{1}{2m}$. Then we estimate all $\widehat{r}_j$ corresponding to the Pauli terms in $\widehat{S}_j$, which by \cref{thm:amp_residual_learning} has the same query complexity. Repeating these procedures $\O(\log(1/\zeta_j))$ times and taking the median for each estimate suppresses the failure probability to at most $\zeta_j$.
    
    Thus, $\alg(R_j/\eta_j, 1/2, \zeta_j)$ uses evolution time
    \begin{equation}
        |t_j| = \Ot((m\log(1/\eta_j)/\eta_j) \log(1/\zeta_j)) \leq \Ot((m\log(1/\epsilon)/\eta_j) \log(1/\zeta_j)),
    \end{equation}
    where the $\Ot(\cdot)$ notation here suppresses logarithmic factors not involving $\eta_j$, $\zeta_j$, $\epsilon$, or $\delta$. Inserting this into the result of \cref{prop:bootstrap} with $\tau = m \log(1/\epsilon)$ yields the claim.
\end{proof}

\section{Structure learning without time reversal}\label{sec:time_reversal_free}

To prepare Choi-like states of $H$, we required a block encoding of $H/\Delta$ for some appropriate normalization $\Delta$. The approach described up until now relied heavily on the ability to perform both the time evolution operator and its inverse, in order to make full use of the QSVT toolbox. (We also relied on multi-controlled time evolutions, however the universal controlization scheme described in \cref{sec:controlization} allows us to circumvent that concern.) In the access model for this section, we no longer assume access to time reversal. While some time-reversal techniques exist, they either require knowledge of the Hamiltonian structure~\cite{odake2024higher} or scale exponentially in $n$~\cite{quintino2019reversing}. Absent that information at the start of the learning algorithm, it is unclear how to utilize those techniques in our setting. Even more, it has been shown that there exists unitary learning tasks for which time reversal enables exponential speedups in query complexity~\cite{schuster2023learning,cotler2023information,schuster2024random}. Such results establish that, in general, time reversal is a powerful resource in learning.

The goal of this section is to prove \cref{thm:tf_learning}, up to the time resolution $\tmin$ which will be addressed in \cref{sec:controlization}.

\subsection{Block encoding the matrix logarithm}

In this section we develop an alternative avenue for block encoding the logarithm of $e^{-\i H/\Delta}$, using only forward-time queries. For exposition, we assume throughout that we know the value of $\|H\|$ and so can choose $\Delta = 2\|H\|$;~without that knowledge, all appearances of the norm can be replaced by $m \geq \|H\|$. Our construction is based on the well-known power series for the matrix logarithm.

\begin{proposition}[\cite{hall2015lie}]
    For any $d \times d$ complex matrix $A$, define
    \begin{equation}
        \log A \coloneqq \sum_{k=1}^\infty (-1)^{k+1} \frac{(A - \I)^k}{k}.
    \end{equation}
    This series converges whenever $\|A - \I\| < 1$ and satisfies
    \begin{equation}
        e^{\log A} = A.
    \end{equation}
    Additionally, let $X \in \C^{d \times d}$ obeying $\|X\| < \log 2$. Then $\|e^X - \I\| < 1$ and
    \begin{equation}
        \log e^X = X.
    \end{equation}
\end{proposition}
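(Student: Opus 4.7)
The plan is to establish convergence first, then each of the two functional identities in turn. For convergence of the $\log A$ series when $\|A - \I\| < 1$, submultiplicativity of the operator norm yields $\|(A-\I)^k/k\| \leq \|A-\I\|^k/k$, and the scalar series $\sum_{k=1}^\infty \|A-\I\|^k/k = -\log(1 - \|A - \I\|)$ is finite. Hence the partial sums of $\log A$ form a Cauchy sequence in operator norm and the series converges absolutely.

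Next, for the identity $e^{\log A} = A$, the cleanest route is holomorphic functional calculus. The scalar identity $\exp\!\left(\sum_{k \geq 1}(-1)^{k+1}(z-1)^k/k\right) = z$ holds as an identity of convergent power series on the open disk $|z-1| < 1$. Since $\|A - \I\| < 1$ bounds the spectral radius of $A - \I$, the spectrum of $A$ lies inside that disk, so holomorphic functional calculus can be applied to both sides to conclude the matrix version. Alternatively, one may verify the identity directly by expanding the outer exponential in powers of $\log A$ and then expanding each $(\log A)^j$ in powers of $A - \I$; absolute operator-norm convergence licenses a Fubini rearrangement that reduces the matrix identity to its scalar counterpart coefficient by coefficient.

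For the second statement, I would first bound $\|e^X - \I\| = \|\sum_{k \geq 1} X^k/k!\| \leq \sum_{k \geq 1} \|X\|^k/k! = e^{\|X\|} - 1 < e^{\log 2} - 1 = 1$, so the first part applies and $\log e^X$ is well-defined. Then the companion scalar identity $\log e^w = w$, valid as a convergent power series identity for $|w| < \log 2$, lifts to the matrix setting by the same functional-calculus argument: the spectrum of $X$ lies inside $|w| < \log 2$ by the spectral-radius bound, and both sides agree as holomorphic functions on that disk.

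The only real delicacy is justifying composition of two infinite matrix series, which is where one could get bogged down in a direct approach. Holomorphic functional calculus bypasses this entirely by reducing everything to the scalar case on the spectrum; the power-series route instead requires a careful absolute-convergence plus Fubini argument, which is routine but tedious. Since the paper cites \cite{hall2015lie} for the proposition, a sketch at this level of detail — convergence by comparison, followed by spectral reduction to the scalar identities — should be more than sufficient to stand in for a full proof.
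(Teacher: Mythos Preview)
Your sketch is correct: absolute convergence by comparison, followed by reduction to the scalar identities via holomorphic functional calculus (or equivalently a Fubini argument on the double series), is exactly the standard proof found in the cited reference. The paper itself gives no proof of this proposition---it is stated as a known fact from \cite{hall2015lie}---so your outline already goes beyond what the paper provides.
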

For an integer $K \geq 1$, we define the $K$th partial sum of the series,
\begin{equation}\label{eq:matrix_log_partial_sum}
    L_K(A) \coloneqq \sum_{k=1}^K (-1)^{k+1} \frac{(A - \I)^k}{k}.
\end{equation}
The error of this truncation is easily bounded as follows.
\begin{lemma}\label{lem:matrix_log_error}
    Let $A \in \C^{d \times d}$ such that $\|A - \I\| \leq r < 1$. Then
    \begin{equation}
        \|{\log A} - L_K(A)\| \leq \frac{r^{K+1}}{(K+1)(1-r)}.
    \end{equation}
\end{lemma}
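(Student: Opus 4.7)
The plan is to directly bound the tail of the series \eqref{eq:matrix_log_partial_sum}. Writing $\log A - L_K(A) = \sum_{k=K+1}^\infty (-1)^{k+1} (A-\I)^k / k$, the triangle inequality together with submultiplicativity of the operator norm gives
\begin{equation*}
    \|\log A - L_K(A)\| \leq \sum_{k=K+1}^\infty \frac{\|A-\I\|^k}{k} \leq \sum_{k=K+1}^\infty \frac{r^k}{k}.
\end{equation*}
From here I would simply use the crude bound $1/k \leq 1/(K+1)$ valid for all $k \geq K+1$, pull that factor out, and sum the remaining geometric series to get $\sum_{k \geq K+1} r^k = r^{K+1}/(1-r)$. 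Combining these two steps yields the claimed bound $r^{K+1}/((K+1)(1-r))$.

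The only thing worth checking is that the series actually converges absolutely in operator norm, so that the triangle inequality is justified: this holds because $r < 1$ implies $\sum_k r^k/k < \infty$. There is no real obstacle here; this is a routine tail estimate for the logarithm series, and the bound is essentially the standard one obtained by comparing to a geometric series.
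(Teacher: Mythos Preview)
Your proof is correct and is essentially identical to the paper's own argument: triangle inequality on the tail, bound $\|A-\I\|^k \leq r^k$, use $1/k \leq 1/(K+1)$, and sum the geometric series. The paper's proof is exactly this four-line computation.
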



\begin{proof}
    Compute:
    \begin{equation}
    \begin{split}
        \|{\log A} - L_K(A)\| &= \l\| \sum_{k=K+1}^\infty \frac{(A - \I)^k}{k} \r\|\\
        &\leq \sum_{k=K+1}^\infty \frac{r^k}{k}\\
        &\leq \frac{r^{K+1}}{K+1} \sum_{k=0}^\infty r^k\\
        &= \frac{r^{K+1}}{(K+1)(1-r)}.
    \end{split}
    \end{equation}
\end{proof}


It remains to block encode $\i L_K(e^{-\i H/\Delta})$, where $\Delta$ is such that $\|H/\Delta\| < \log 2$. It turns out we will need a slightly stronger subnormalization of $\|H/\Delta\| \leq \frac{1}{2}$, as was the case with the $\arcsin$ polynomial approximation. Let us fix a $\Delta \geq 2\|H\|$ as usual. To simplify notation, we shall denote $U \equiv U(1/\Delta)$ throughout this section.

The follow lemma expresses the truncated power series for $\log U$ as a linear combination of powers of $U$.

\begin{lemma}\label{lem:matrix_log_LCU}
    The operator $L_K(U)$ can be represented as
    \begin{equation}
        L_K(U) = \sum_{j=0}^K c_j U^j,
    \end{equation}
    where the coefficients $c_j \in \R$ are given by
    \begin{equation}\label{eq:matrix_log_coeff}
    \begin{split}
        c_0 &= -\sum_{k=1}^K \frac{1}{k},\\
        c_j &= \frac{(-1)^{j+1}}{j} \binom{K}{j}, \quad j = 1, \ldots, K.
    \end{split}
    \end{equation}
    The $\ell_1$-norm of the coefficients, $\Lambda \coloneqq \sum_{j=0}^K |c_j|$, obeys the following bounds:
    \begin{equation}
        \Omega(2^K/K) \leq \Lambda \leq \O(2^K).
    \end{equation}
\end{lemma}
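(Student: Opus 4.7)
The plan is to expand $L_K(U)$ via the binomial theorem, collect coefficients of each power $U^j$, and then simplify the resulting inner sum using standard identities for binomial coefficients.

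First, I would expand
\begin{equation}
    (U - \I)^k = \sum_{j=0}^k \binom{k}{j} (-1)^{k-j} U^j,
\end{equation}
substitute into the definition \cref{eq:matrix_log_partial_sum}, and exchange the order of summation. After noting $(-1)^{k+1}(-1)^{k-j} = -(-1)^j$, this gives
\begin{equation}
    L_K(U) = -\sum_{j=0}^K (-1)^j \left( \sum_{k=\max(j,1)}^K \frac{\binom{k}{j}}{k} \right) U^j,
\end{equation}
which immediately identifies $c_0 = -\sum_{k=1}^K 1/k$ as claimed.

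For $j \geq 1$, the task reduces to evaluating $\sum_{k=j}^K \binom{k}{j}/k$ in closed form. The key identity is the ``absorption'' identity $\binom{k}{j}/k = \binom{k-1}{j-1}/j$. Applying it and re-indexing gives $\sum_{k=j}^K \binom{k}{j}/k = \frac{1}{j} \sum_{m=j-1}^{K-1} \binom{m}{j-1}$, and the hockey-stick identity collapses this telescoping sum to $\binom{K}{j}/j$. Hence $c_j = (-1)^{j+1}\binom{K}{j}/j$ as claimed.

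For the $\ell_1$-norm bounds, I would just compute $\Lambda = \sum_{k=1}^K 1/k + \sum_{j=1}^K \binom{K}{j}/j$. The upper bound follows from $1/j \leq 1$ together with $\sum_{j=1}^K \binom{K}{j} = 2^K - 1$ and $\sum_{k=1}^K 1/k \leq 1 + \log K$, giving $\Lambda \leq 2^K + \log K = \O(2^K)$. The lower bound follows from $1/j \geq 1/K$, which yields $\Lambda \geq \sum_{j=1}^K \binom{K}{j}/j \geq (2^K - 1)/K = \Omega(2^K/K)$.

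None of the steps present a real obstacle: the derivation of the $c_j$'s is a matter of recognizing the absorption and hockey-stick identities, and the norm bounds are elementary. If anything, the only subtle point is keeping signs straight in the binomial expansion and verifying that the $j=0$ contribution matches the stated formula separately from the $j \geq 1$ contributions.
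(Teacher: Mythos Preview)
Your proposal is correct and follows essentially the same approach as the paper: binomial expansion, swap summation order, absorption identity $\binom{k}{j}/k = \binom{k-1}{j-1}/j$, then hockey-stick, with the same elementary $1/j \leq 1$ and $1/j \geq 1/K$ bounds for $\Lambda$. The only cosmetic difference is that the paper retains the $|c_0|$ term in the lower bound, but this does not affect the asymptotics.
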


\begin{proof}
    Since all terms in \cref{eq:matrix_log_partial_sum} are just powers of $U$, we can treat $L_K(U)$ as a commutative polynomial of degree $K$. By binomial expansion of $(U - \I_n)^k$, we get
    \begin{equation}
    \begin{split}
        L_K(U) &= \sum_{k=1}^K \frac{(-1)^{k+1}}{k} (U - \I_n)^k\\
        &= \sum_{k=1}^K \frac{(-1)^{k+1}}{k} \sum_{j=0}^k \binom{k}{j} (-1)^{k-j} U^j\\
        &= \sum_{j=0}^K \sum_{k=1}^K \frac{(-1)^{j+1}}{k} \binom{k}{j} U^j,
    \end{split}
    \end{equation}
    where the rearrangement in the final line is valid because $\binom{k}{j} = 0$ whenever $k < j$. Thus the coefficients of the polynomial are
    \begin{equation}
        c_j = (-1)^{j+1} \sum_{k=1}^K \frac{1}{k} \binom{k}{j}.
    \end{equation}
    The value at $j = 0$ is clear. For $j \geq 1$, first we rewrite each summand as:
    \begin{equation}
    \begin{split}
        \frac{1}{k} \binom{k}{j} &= \frac{(k - 1)!}{j!(k - j)!}\\
        &= \frac{1}{j} \frac{(k - 1)!}{(j - 1)!(k - j)!}\\
        &= \frac{1}{j} \binom{k - 1}{j - 1}.
    \end{split}
    \end{equation}
    To evaluate the sum over $k$, we use the hockey-stick identity, which states that
    \begin{equation}
        \sum_{k'=j'}^{K'} \binom{k'}{j'} = \binom{K' + 1}{j' + 1}.
    \end{equation}
    Inserting the appropriate values of $j', k', K'$ for our context, we get
    \begin{equation}
       \sum_{k=1}^K \binom{k - 1}{j - 1} = \sum_{k=0}^{K-1} \binom{k}{j - 1} = \sum_{k=j-1}^{K-1} \binom{k}{j - 1} = \binom{K}{j},
    \end{equation}
    from which \cref{eq:matrix_log_coeff} follows. To get bounds on the scaling of $\Lambda$, the upper bound can be seen by
    \begin{equation}\label{eq:l1_norm_bound}
    \begin{split}
       \Lambda &= \sum_{j=0}^K |c_j| = |c_0| + \sum_{j=1}^K \frac{1}{j} \binom{K}{j}\\
       &\leq \log K + 1 + \sum_{j=1}^K \binom{K}{j}\\
       &= \log K + 2^K.
    \end{split}
    \end{equation}
    Meanwhile for the lower bound, we have
    \begin{equation}
    \begin{split}
        \Lambda &\geq |c_0| + \sum_{j=1}^K \frac{1}{K} \binom{K}{j}\\
        &\geq \log(K + 1) + \frac{1}{K}(2^K - 1),
    \end{split}
    \end{equation}
    which finishes the claim.
\end{proof}

\begin{remark}
    Ideally, we would prefer to have constructed the LCU for $\log U$ using block encodings of $(U - \I_n)^k$, wherein the coefficients satisfy $\sum_{k=1}^K |(-1)^{k+1}/k| = \O(\log K)$. However, each such block encoding is subnormalized by a factor of $2^k$, which would then have to be boosted, say, using singular value amplification~\cite{gilyen2019quantum} which again requires $U^\dagger$.
\end{remark}

From this result, we immediately get an $(\O(1/\gamma), \O(\log\log(1/\gamma), \O(\gamma))$-BE of $H/\Delta$ for some $\gamma > 0$ of our choosing, provided that $\|H/\Delta\| \leq \frac{1}{2}$.

\begin{theorem}\label{thm:matrix_log_H}
    Let $H$ be an $n$-qubit Hamiltonian and $\Delta \geq 2\|H\|$. Fix some integer $K \geq 1$ and denote $U = e^{-\i H/\Delta}$. From access to \emph{$\ctrl{}{U^j}$} for $j = 0, 1, \ldots, K$, we can construct a $(\Lambda, \lceil \log_2(K + 1) \rceil, 2^{-(K + 1)})$-BE of $H/\Delta$, where $\Lambda = \O(2^K)$. This construction costs $\O(K^2 / \Delta)$ evolution time to the Hamiltonian.
\end{theorem}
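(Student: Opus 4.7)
The plan is to combine the truncation bound of \cref{lem:matrix_log_error} with the LCU representation of \cref{lem:matrix_log_LCU}, and then invoke the linear-combination-of-block-encodings framework (\cref{prop:LCBE}). The only substantive verification needed is that $\|U - \I\|$ is small enough to apply the log series in the regime where the error decays like $2^{-K}$.

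First I would pin down the truncation error. Since $H/\Delta$ is Hermitian with $\|H/\Delta\| \leq 1/2$, the unitary $U = e^{-\i H/\Delta}$ has eigenvalues $e^{-\i\theta_k}$ with $|\theta_k| \leq 1/2$, so
\begin{equation}
    \|U - \I\| = \max_{|\theta| \leq 1/2} |e^{-\i\theta} - 1| = 2\sin(1/4) < 1/2.
\end{equation}
Applying \cref{lem:matrix_log_error} with $r = 1/2$ yields $\|\log U - L_K(U)\| \leq (1/2)^K/(K+1) \leq 2^{-(K+1)}$ for $K \geq 1$. Because $\log U = -\i H/\Delta$, multiplying through by $\i$ gives $\|H/\Delta - \i L_K(U)\| \leq 2^{-(K+1)}$; this is the error parameter that will appear in the final block encoding.

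Next I would express $\i L_K(U)$ as an LCU and block-encode it. By \cref{lem:matrix_log_LCU}, $L_K(U) = \sum_{j=0}^K c_j U^j$ with $\sum_j |c_j| = \O(2^K) \eqqcolon \Lambda$; absorbing the global $\i$ into each coefficient gives $\i L_K(U) = \sum_{j=0}^K (\i c_j) U^j$, still with $\ell_1$-norm $\Lambda$. Each $U^j$ is unitary and is therefore a trivial $(1,0,0)$-BE of itself. Using $b = \lceil \log_2(K+1) \rceil$ ancilla qubits, I prepare an exact $(\Lambda, b, 0)$-state-preparation pair encoding the coefficients $\i c_j / \Lambda$, and let $\sel = \sum_{j=0}^K \op{j}{j} \otimes \ctrl{}{U^j}$ extended by identity on the unused ancilla basis states. \Cref{prop:LCBE} then produces a $(\Lambda, b, 0)$-BE of $\i L_K(U)$, which combined with the Step-1 spectral bound via the triangle inequality gives a $(\Lambda, \lceil \log_2(K+1)\rceil, 2^{-(K+1)})$-BE of $H/\Delta$.

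Finally I would count evolution time. A single invocation of $\sel$ applies each $\ctrl{}{U^j}$ once, and $U^j = e^{-\i H j/\Delta}$ consumes $j/\Delta$ units of forward-time evolution. Summing,
\begin{equation}
    \sum_{j=0}^K \frac{j}{\Delta} = \frac{K(K+1)}{2\Delta} = \O\!\l(\frac{K^2}{\Delta}\r),
\end{equation}
matching the stated cost. The main (mild) obstacle is Step 1: the technical overview asserts $\|e^X - \I\| \leq 1/2$ when $\|X\| \leq 1/2$, which is not immediate from the crude bound $e^{\|X\|}-1$; the unitary spectral argument above is what makes it sharp. Everything else is a direct assembly of the stated lemmas, and no control beyond the given $\ctrl{}{U^j}$ oracles (to be realized approximately via controlization in \cref{sec:controlization}) is required.
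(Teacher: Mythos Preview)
Your proposal is correct and follows essentially the same route as the paper: bound $\|U-\I\|<1/2$ via the spectrum of $H/\Delta$, apply \cref{lem:matrix_log_error} with $r=1/2$ to get the $2^{-(K+1)}$ truncation error, realize $\i L_K(U)$ as an LCU via \cref{lem:matrix_log_LCU} and \cref{prop:LCBE} on $\lceil\log_2(K+1)\rceil$ ancillas, and sum $\sum_j j/\Delta$ for the evolution time. The only cosmetic slip is in the definition of $\sel$: per \cref{prop:LCBE} it should read $\sel=\sum_{j=0}^K\op{j}{j}\otimes U^j$ (extended by identity), since each $U^j$ is already the $(1,0,0)$-BE being selected; the controlled access $\ctrl{}{U^j}$ is precisely what is used to build this multiplexer, but writing $\op{j}{j}\otimes\ctrl{}{U^j}$ double-counts the control.
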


\begin{proof}
    \cref{lem:matrix_log_LCU} describes the LCU representation for $\i L_K(U)$ involving only $U^j$ for $j = 0, 1, \ldots, K$, with $\ell_1$-norm $\Lambda \leq \log K + 2^K$. \cref{prop:LCBE} gives the prescription for realizing the LCU circuit, using $\lceil \log_2(K + 1) \rceil$ control qubits. Finally, observe that if $\|H/\Delta\| \leq \frac{1}{2}$, then
    \begin{equation}
        \|e^{-\i H/\Delta} - \I_n\| \leq |e^{-\i/2} - 1| < \frac{1}{2}.
    \end{equation}
    Therefore if we take $r = 1/2$ in \cref{lem:matrix_log_error}, we get the error bound
    \begin{equation}
        \|H/\Delta - \i L_K(U)\| \leq \frac{2^{-K}}{K + 1} \leq 2^{-(K + 1)}.
    \end{equation}
    Note that the $j = 0$ term is merely the identity;~the evolution time from the remaining $K$ queries is $\sum_{j=1}^K j/\Delta = \O(K^2/\Delta)$.
\end{proof}

This block encoding of $H/\Delta$ comes with a small subnormalization of $\O(\gamma \log(1/\gamma))$. Unfortunately, without access to $U^\dagger$ it is difficult to perform spectral amplification to boost this up to $\Theta(1)$. Instead, we will resort to a more rudimentary approach, which is simply to prepare the (referenceless) pseudo-Choi states with low probability.

\subsection{Structure learning}

The following lemma bounds how much evolution time suffices to prepare $N$ copies of the referenceless pseudo-Choi state, $\ket{\pchoi(\i L_K(U))}$.

\begin{lemma}\label{lem:preparing_copies_of_pchoi}
    Let $H = \sum_{a=1}^m \lambda_a E_a$ be an $n$-qubit Hamiltonian, $\Delta \geq 2\|H\|$, and $U = e^{-\i H/\Delta}$. For any integers $N, K \geq 1$, with high probability we can prepare $N$ copies of $\ket{\pchoi(\i L_K(U))}$ from
    \begin{equation}
        Q = \O\l( \frac{4^K K N}{\frac{1}{2^n} \|\i L_K(U)\|_F^2} \r)
    \end{equation}
    queries and $\O(KQ/\Delta)$ evolution time to \emph{$\ctrl{}{U}$} (and no queries to its inverse).
\end{lemma}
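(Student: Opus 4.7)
The plan is to use the LCU structure from \cref{lem:matrix_log_LCU} as an \emph{exact} $(\Lambda, a, 0)$-block encoding $B$ of $\i L_K(U)$, where $a = \lceil \log_2(K+1) \rceil$ and $\Lambda = \sum_{j=0}^K |c_j| = \O(2^K)$. The approximation error $2^{-(K+1)}$ appearing in \cref{thm:matrix_log_H} only measures the distance between $\i L_K(U)$ and $H/\Delta$; for the present lemma we aim at the pseudo-Choi state of $\i L_K(U)$ itself, which \cref{prop:LCBE} realizes without any further error (the signs of the $c_j$ can be absorbed into a classically computable state-preparation pair on $a$ qubits).

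To produce one copy, I would apply $B \otimes \I_n$ to $\ket{0^a}\ket{\Omega}$ and measure the first $a$ qubits. The outcome $0^a$ leaves the remaining $2n$ qubits in a state proportional to $(\i L_K(U) \otimes \I_n)\ket{\Omega}$, which after normalization is exactly $\ket{\pchoi(\i L_K(U))}$; any other outcome constitutes a failed attempt and the state is discarded. Using the identity $\|(A \otimes \I_n)\ket{\Omega}\|^2 = \frac{1}{2^n}\|A\|_F^2$, the success probability is
\begin{equation}
    p = \l\| \l(\tfrac{\i L_K(U)}{\Lambda} \otimes \I_n\r)\ket{\Omega} \r{\|^2} = \frac{\|\i L_K(U)\|_F^2}{\Lambda^2 \cdot 2^n} = \Omega\l( \frac{\frac{1}{2^n}\|\i L_K(U)\|_F^2}{4^K} \r),
\end{equation}
where the last step uses $\Lambda = \O(2^K)$.

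To obtain $N$ copies with high probability, \cref{prop:prob_state_prep} bounds the required number of preparation attempts by $Q' = \Theta(N/p) = \O(4^K N / (\frac{1}{2^n}\|\i L_K(U)\|_F^2))$. Each call to $B$ invokes $\sel = \sum_{j=0}^K \op{j}{j} \otimes U^j$, which per \cref{thm:matrix_log_H} uses $\Theta(K)$ queries to $\ctrl{}{U^j}$ (one for each nontrivial $j$) at a total cost of $\sum_{j=1}^K j/\Delta = \O(K^2/\Delta)$ evolution time. Tallying across all $Q'$ attempts yields $Q = \Theta(K Q') = \O(4^K K N / (\frac{1}{2^n}\|\i L_K(U)\|_F^2))$ queries to the controlled time evolution and $\O(K^2 Q'/\Delta) = \O(K Q / \Delta)$ total evolution time, matching the claim.

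The only substantive point beyond this bookkeeping is the observation that the LCU realizes $\i L_K(U)$ \emph{exactly}, so no $\varepsilon$-type truncation propagates into the prepared pseudo-Choi state; everything else reduces to the Bernoulli-trial bound of \cref{prop:prob_state_prep} applied to a single-shot success probability $p$.
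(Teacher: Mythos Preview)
Your proposal is correct and follows essentially the same approach as the paper: build the exact $(\Lambda, \lceil \log_2(K+1) \rceil, 0)$-block encoding of $\i L_K(U)$ via the LCU of \cref{lem:matrix_log_LCU}, compute the success probability $p = \frac{1}{2^n \Lambda^2}\|\i L_K(U)\|_F^2$, invoke \cref{prop:prob_state_prep} to get $\O(N/p)$ attempts, and multiply by the $\Theta(K)$ queries and $\O(K^2/\Delta)$ evolution time per attempt. Your explicit remark that the block encoding realizes $\i L_K(U)$ exactly (with the $2^{-(K+1)}$ error only pertaining to the distance from $H/\Delta$) is a correct and useful clarification that the paper leaves implicit.
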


\begin{proof}
    Let
    \begin{equation}
        B = \begin{pmatrix}
        \i L_K(U)/\Lambda & \cdot\ \\
        \cdot & \cdot\ 
    \end{pmatrix}
    \end{equation}
    be the block encoding described in \cref{thm:matrix_log_H}. Recall that each instance of this unitary requires $\O(K^2/\Delta)$ evolution time to construct. As is standard, we apply $B \otimes \I_n$ to the state $\ket{0^q} \ket{\Omega}$ with $q = \lceil \log_2(K + 1) \rceil$ and measure the $q$-qubit register. If we observe $\ket{0^q}$, then the remaining $2n$-qubit register contains $\ket{\pchoi(\i L_K(U))}$ as desired. This occurs with probability
    \begin{equation}
    \begin{split}
        \Pr(0^q) &= \l\|\l( \frac{\i L_K(U)}{\Lambda} \otimes \I_n \r) \ket{\Omega}\r{\|^2}\\
        &= \frac{1}{2^n \Lambda^2} \|\i L_K(U)\|^2_F.
    \end{split}
    \end{equation}
    By \cref{prop:prob_state_prep}, we need $\O(N/\Pr(0^q))$ queries to $B$ to prepare $N$ copies with high probability. Each query to $B$ costs $K + 1$ queries and $\O(K^2/\Delta)$ evolution time to $U$. Use the bound $\Lambda^2 \leq \O(4^K)$ from \cref{eq:l1_norm_bound} to conclude the statement.
\end{proof}

With this result, we are equipped to determine the complexity of structure learning arbitrary Hamiltonians, using our algorithm with only forward-time queries.

\begin{theorem}\label{thm:positive-time_SL}
    Fix $\epsilon > 0$. Let $S_\epsilon(H) = \{E_a : |\lambda_a| > \epsilon\}$ be the set of terms in $H = \sum_{a=1}^m \lambda_a E_a$ which have weight greater than $\epsilon$. Let $\Delta \geq 2\|H\|$. From only positive-time queries to $e^{-\i H t}$, $t \geq 0$, we can learn all the elements of $S_\epsilon$ using
    \begin{equation}
        \Ot\l( \frac{\Delta^4}{\epsilon^4} \r) \text{ queries and } \Ot\l( \frac{\Delta^3}{\epsilon^4} \r) \text{ evolution time,}
    \end{equation}
    with high probability.
\end{theorem}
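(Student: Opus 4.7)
The plan is to mirror the structure-learning argument of Theorem~\ref{thm:structure_learning_base}, but use the time-reversal-free block encoding from Theorem~\ref{thm:matrix_log_H} in place of the QSVT-based construction. Concretely, I would fix a truncation $K$, prepare copies of the referenceless pseudo-Choi state $\ket{\pchoi(\i L_K(U))}$ via Lemma~\ref{lem:preparing_copies_of_pchoi}, Bell-sample them using Algorithm~\ref{alg:pauli_spectrum_learning}, and then combine the coupon-collector bound (Lemma~\ref{lem:coupon_collector}) with the approximation-to-exact conversion (Lemma~\ref{lem:big_set_conversion}) to conclude that with high probability the sampled set contains $S_\epsilon(H)$.

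For the parameter choice, I would take $K = \lceil \log_2(\Delta/\epsilon) \rceil + \O(1)$. Then Theorem~\ref{thm:matrix_log_H} yields $\|H/\Delta - \i L_K(U)\| \leq 2^{-(K+1)} = \O(\epsilon/\Delta)$, so $\widetilde{H} \coloneqq \Delta \cdot \i L_K(U)$ is Hermitian and satisfies $\|H - \widetilde{H}\| \leq \O(\epsilon)$. Since the referenceless pseudo-Choi state is invariant under nonzero scalar rescaling, $\ket{\pchoi(\i L_K(U))} = \ket{\pchoi(\widetilde{H})}$, and Bell sampling this state draws Pauli labels with probabilities $|\widetilde{\lambda}_a|^2/\|\widetilde{\lambda}\|^2$. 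Lemma~\ref{lem:big_set_conversion} then guarantees $S_\epsilon(H) \subseteq S_{\O(\epsilon)}(\widetilde{H})$ while ruling out spurious terms outside the support of $H$, so it suffices to identify $S_{\O(\epsilon)}(\widetilde{H})$.

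For the complexity count, Lemma~\ref{lem:coupon_collector} requires $N = \Ot(\|\widetilde{\lambda}\|^2/\epsilon^2)$ copies, and from Lemma~\ref{lem:preparing_copies_of_pchoi} each copy succeeds with probability $p = \Theta(\|\widetilde{\lambda}\|^2/(\Delta\Lambda)^2)$, since $\|\i L_K(U)\|_F^2/2^n = \|\widetilde{\lambda}/\Delta\|^2$ by Parseval's identity. Crucially, the $\|\widetilde{\lambda}\|^2$ factor cancels in the ratio $N/p = \Ot((\Delta\Lambda)^2/\epsilon^2)$, which controls the query complexity. Invoking the bound $\Lambda = \O(2^K) = \Ot(\Delta/\epsilon)$ from Lemma~\ref{lem:matrix_log_LCU} and observing that each block-encoding call uses $K+1 = \Ot(1)$ queries to $U$ and $\O(K^2/\Delta)$ evolution time, the totals come out to $\Ot(\Delta^4/\epsilon^4)$ queries and $\Ot(\Delta^3/\epsilon^4)$ evolution time, as claimed.

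The main obstacle is the exponential scaling $\Lambda = \Theta(2^K)$ of the LCU coefficient norm in Lemma~\ref{lem:matrix_log_LCU}, which directly inflates the state-preparation failure probability and therefore the query complexity. Without time reversal we cannot apply uniform spectral amplification (as in Theorem~\ref{thm:amp_BE_residual}) to fix this subnormalization, so the factor of $\Lambda^2$ is unavoidable in this approach; the best recourse is to keep $K$ only logarithmic in $\Delta/\epsilon$, which Lemma~\ref{lem:matrix_log_error} permits because the truncation error decays exponentially in $K$. This trade-off is precisely why the time-forward complexity scales as $\Delta^4/\epsilon^4$ rather than the $\Delta/\epsilon^2$ achievable under the time-reversal access model.
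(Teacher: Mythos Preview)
Your proposal is correct and mirrors the paper's proof essentially step for step: fix $K = \lceil \log_2(\Delta/\epsilon) \rceil$, Bell-sample $\ket{\pchoi(\i L_K(U))}$, and combine the coupon-collector bound with the set-inclusion lemma, noting that the $\|\widetilde{\lambda}\|^2$ factors cancel between the copy count and the preparation probability. One small inaccuracy: $\widetilde{H} = \Delta \cdot \i L_K(U)$ is \emph{not} in general Hermitian (the truncated series does not satisfy $L_K(U) = -L_K(U^\dagger)$), so the Pauli coefficients $\widetilde{\lambda}_a$ may be complex; the paper flags this explicitly, but since both the Bell-sampling distribution and the H\"older argument behind Lemma~\ref{lem:big_set_conversion} go through unchanged for complex coefficients, your argument is unaffected.
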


\begin{proof}
    For notation, write $H' = \i\Delta L_K(U)$ which has the Pauli decomposition $H' = \sum_{a=1}^{m'} \lambda'_a E_a$ for some $m' \geq m$. We remark that $\lambda'_a \in \C$ because $\i L_K(U)$ is not guaranteed to be Hermitian, although this will not affect our analysis. Let $\epsilon' > 0$ to be determined later and define the set
    \begin{equation}
        S_{\epsilon'}(H') = \{ E_a : |\lambda'_a| > \epsilon' \}.
    \end{equation}
    We apply \cref{lem:coupon_collector} to the state $\ket{\pchoi(H')} = \ket{\pchoi(\i L_K(U))}$, allowing us to learn all the elements of $S_{\epsilon'}(H')$ with high probability. This requires
    \begin{equation}
        N = \O\l(\frac{\frac{1}{2^n} \|\i\Delta L_K(U)\|_F^2 \log m}{\epsilon'^2}\r)
    \end{equation}
    copies of the referenceless pseudo-Choi state. Combined with \cref{lem:preparing_copies_of_pchoi}, this costs $Q = \Ot(4^K K \Delta^2/\epsilon'^2)$ queries and $\Ot(KQ/\Delta)$ evolution time.

    Next, we need to determine how close $H'$ should be to $H$ to guarantee $S_\epsilon(H) \subseteq S_{\epsilon'}(H')$. Recall that if $\|H - H'\| \leq \gamma \Delta$, then the error in each coefficient is also bounded by $|\lambda_a - \lambda_a'| \leq \gamma \Delta$. Then as long as $|\lambda_a'| > \epsilon - \gamma \Delta$, we are ensured that $|\lambda_a| > \epsilon$. Set $\gamma\Delta = \epsilon/2$, which implies the choices $\epsilon' = \epsilon/2$ and
    \begin{equation}\label{eq:K_bound}
    \begin{split}
        K &= \l\lceil \log_2\l( \frac{1}{2\gamma} \r) \r\rceil\\
        &= \lceil \log_2( \Delta/\epsilon ) \rceil.
    \end{split}
    \end{equation}
    Thus $4^K K = \Ot(\Delta^2/\epsilon^2)$ and so the claim follows.
\end{proof}

\subsection{Parameter estimation}

Once we have identified the set $S_\epsilon(H)$, any learning algorithm appropriate for arbitrary Pauli terms can be deployed to learn the parameters to precision $\epsilon$. The most straightforward algorithm to use is that of \cite{caro2024learning}, which uses shadow tomography on copies of the proper Choi state $(U(t) \otimes \I_n) \ket{\Omega}$. After a total evolution time of $\Ot(\|H\|^3/\epsilon^4)$, the algorithm uses a polynomial interpolation postprocessing step to output $\epsilon$-accurate estimates to $\lambda_a$ for each $E_a \in S_\epsilon(H)$.

However, the shadow tomography step in \cite{caro2024learning} requires a large quantum memory of $\Ot(\|H\|^2/\epsilon^2)$ qubits. This quantum memory is used to perform a gentle measurement on multiple coherently stored copies of the Choi states. The size of this register can be reduced to $2n$ by recent advances in shadow tomography~\cite{king2024triply}, namely the concept of a so-called mimicking state. This mimicking state serves the same effective purpose as the many coherent copies, but would only be the size of the system it is mimicking. However, constructing such a mimicking state takes $2^{\O(n)}$ quantum (and classical) gates, so this approach would not be computationally efficient.

Instead, let us apply the pseudo-Choi learning algorithm of \cite{castaneda2023hamiltonian} directly to our construction. The preparation of the pseudo-Choi state with reference, from a query to $\ctrl{0}{B}$, still succeeds with probability at least $\frac{1}{2}$. However, the state we produce has small norm in the encoded Hamiltonian:
\begin{equation}
    \l| \pchoiref\l(\frac{\i L_K(U)}{\Lambda}\r) \r\rangle = \frac{(\frac{\i L_K(U)}{\Lambda} \otimes \I_n) \ket{\Omega} \ket{0} + \ket{\Omega} \ket{1}}{\norm(\frac{\i L_K(U)}{\Lambda})}.
\end{equation}
In this case, analogous to \cref{eq:pchoi_shadow_estimator} we define the estimator from classical shadows as
\begin{equation}
    \widehat{\lambda}_a \coloneqq \Lambda \Delta \frac{\Re[\widehat{o}_a]}{\Re[\widehat{o}_\norm]}.
\end{equation}
Then \cref{eq:CW_sample_complexity} gets the modification $\Delta \leftarrow \Lambda\Delta$, leaving us with a new shadow tomography copy complexity of
\begin{equation}\label{eq:copy_complexity_Lambda2}
    \Ot\l( \frac{\Lambda^2 \Delta^2}{\epsilon^2} \r) = \Ot\l( \frac{\Delta^2}{\gamma^2 \epsilon^2} \r).
\end{equation}
This learns $\lambda_a'$ to error $\epsilon$. Since $\lambda_a'$ is $(\gamma\Delta)$-close to $\lambda_a$, as usual we take $\gamma = c\epsilon/\Delta$ for sufficiently small $c > 0$ and improve our sampling error to $(1 - c)\epsilon/2$ as well. This gets us a query complexity
\begin{equation}
    Q = \Ot\l( \frac{\Delta^4}{\epsilon^4} \r)
\end{equation}
queries to $\ctrl{}{e^{-\i (H/\Delta) j}}$, for $0 \leq j \leq \O(\log(\Delta/\epsilon))$. Combined with structure identification which has the same costs (\cref{thm:positive-time_SL}), we get \cref{thm:tf_learning}, up the time resolution claim which we establish in \cref{sec:controlization}.

\begin{theorem}
    Let $H = \sum_{a=1}^m \lambda_a E_a$ be an unknown $n$-qubit Hamiltonian. For any $\Delta \geq 2\|H\|$ and $\epsilon \in (0, 1)$, there exists a structure learning algorithm which, with high probability, learns all $\lambda_a$ to additive error $\epsilon$. The algorithm requires no prior knowledge about the identity of the terms $E_a$, makes $\Ot(\Delta^4/\epsilon^4)$ queries to $e^{-\i H/\Delta}$ (and no queries to any form of $e^{+\i H/\Delta}$), and uses $\poly(n, m, 1/\epsilon)$ quantum and classical computation.
\end{theorem}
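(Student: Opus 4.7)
The plan is to assemble this theorem by combining structure identification, already established in \cref{thm:positive-time_SL}, with a parameter estimation subroutine built on the same matrix-logarithm block encoding of \cref{thm:matrix_log_H}. The first stage produces, with high probability, a set $\widehat{S} \supseteq S_\epsilon(H)$ of candidate Pauli terms using $\Ot(\Delta^4/\epsilon^4)$ queries and $\Ot(\Delta^3/\epsilon^4)$ evolution time to forward-time queries only. By \cref{lem:big_set_conversion}, $|\widehat{S}| \leq m$ in any case, so the second stage need only estimate a polynomial number of real parameters.

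For the parameter estimation stage, I would prepare the pseudo-Choi state \emph{with reference} from the block encoding $B$ of $H/\Delta$ given by \cref{thm:matrix_log_H}, exactly as in \cref{prop:CW_pchoi_prep}:~apply $\ctrl{0}{B} \otimes \I_n$ to $\ket{0}^{\otimes q} \ket{\Omega}$ then adjoin a reference qubit in superposition, and post-select on the ancilla flag, which succeeds with probability at least $1/2$ independently of $\Lambda$. The resulting state is $\ket{\pchoiref(\i L_K(U)/\Lambda)}$, encoding the coefficients $\lambda'_a$ of $H' = \i \Delta L_K(U)$. Running Clifford classical shadows and forming the decoding estimator analogous to \cref{eq:pchoi_shadow_estimator} but with the renormalization $\Delta \mapsto \Lambda \Delta$, the analysis leading to \cref{eq:copy_complexity_Lambda2} shows that $N = \Ot(\Lambda^2 \Delta^2/\epsilon^2)$ copies suffice to learn all $\lambda'_a$ to additive error $\epsilon/2$ with high probability.

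To relate $\lambda'_a$ back to $\lambda_a$, observe from \cref{lem:matrix_log_error} that choosing $K = \lceil \log_2(\Delta/\epsilon) \rceil$ ensures $\|H - H'\| \leq \gamma \Delta$ with $\gamma = c\epsilon/\Delta$ for a small constant $c$. By H\"{o}lder's inequality (as in \cref{eq:block_error_coeff}), this gives $|\lambda_a - \lambda'_a| \leq c\epsilon$, so the triangle inequality yields $|\widehat{\lambda}_a - \lambda_a| \leq \epsilon$. With this choice of $K$, \cref{lem:matrix_log_LCU} gives $\Lambda = \O(2^K) = \O(\Delta/\epsilon)$, so the shadow tomography copy complexity becomes $N = \Ot(\Delta^4/\epsilon^4)$. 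Each pseudo-Choi copy is produced from $\O(1)$ invocations of $B$ (by \cref{prop:prob_state_prep} applied with success probability $\geq 1/2$), and each $B$ costs $\O(K)$ queries and $\O(K^2/\Delta)$ evolution time to $\ctrl{}{U^j}$ with $0 \leq j \leq K$. Multiplying, the parameter estimation stage also costs $\Ot(\Delta^4/\epsilon^4)$ queries and $\Ot(\Delta^3/\epsilon^4)$ evolution time, matching the structure identification stage.

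The only nontrivial step is tracking the subnormalization $\Lambda$ through the shadow tomography analysis:~unlike the QSVT-based construction, our matrix-log block encoding has a constant subnormalization that is polynomial rather than constant in $\epsilon/\Delta$, and this quadratically inflates the copy complexity, producing the $1/\epsilon^4$ scaling. All ancilla, computational, and time-resolution claims follow from composing the two stages, where the time-resolution overhead from approximating $\ctrl{}{U^j}$ by controlization is addressed separately in \cref{sec:controlization} and does not affect $\ttotal$ or $\Nexp$. A union bound over the two stages of the algorithm yields the required overall success probability.
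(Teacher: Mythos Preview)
Your proposal is correct and follows essentially the same approach as the paper: run the structure identification of \cref{thm:positive-time_SL}, then use the matrix-logarithm block encoding of \cref{thm:matrix_log_H} to prepare $\ket{\pchoiref(\i L_K(U)/\Lambda)}$ with success probability $\geq 1/2$, apply Clifford shadows with the rescaled estimator (normalization $\Delta \mapsto \Lambda\Delta$), and choose $K = \lceil \log_2(\Delta/\epsilon) \rceil$ so that $\Lambda = \O(\Delta/\epsilon)$ yields the $\Ot(\Delta^4/\epsilon^4)$ query complexity for both stages. One minor inaccuracy: \cref{lem:big_set_conversion} bounds the size of $S_{\epsilon+\varepsilon}(\widetilde{H})$, not of the sampled set $\widehat{S}$ (which may contain superfluous small-weight terms observed during Bell sampling); however, $|\widehat{S}|$ is still polynomially bounded by the number of samples, so the logarithmic dependence of the shadow-tomography stage on the number of observables is unaffected and your conclusion stands.
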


\section{Approximate controlization of time evolutions}\label{sec:controlization}

In this section we describe how to approximately implement multi-controlled versions of $U(t) = e^{-\i H t}$, given only the ability to apply it for arbitrary $t$. While there are no-go results for this task in general~\cite{araujo2014quantum,gavorova2024topological}, our fractional-query access model (since we can tune $t$) makes this task possible. A number of approaches have been developed, suitable for different scenarios~\cite{janzing2002quantum,zhou2011adding,friis2014implementing,nakayama2015quantum,dong2019controlled,chowdhury2024controlization}. We take the one described in \cite{odake2023universal}, which is universally applicable to any time evolution operator. We also describe how to generalize their technique to multi-controlled operations, at no additional query cost.

For any $k$-bit string $b$ and $n$-qubit operator $O$, define
\begin{equation}
    \ctrl{b}{O} \coloneqq \op{b}{b} \otimes O + (\I_k - \op{b}{b}) \otimes \I_n.
\end{equation}
Our goal is to approximately implement $\ctrl{b}{e^{-\i H_0 t}}$ for any $b \in \{0, 1\}^k$, where $H_0$ is the traceless part of $H$. This is sufficient for our purposes, since we do not care about learning the identity component of the Hamiltonian. First, we will review the $k = 1$ case. Afterwards, we build up the generalization to $k \geq 1$ control qubits. Finally, we show how our learning algorithms should be modified to account for controlization.

\subsection{A single control qubit}

The main idea of controlization from \cite{odake2023universal} relies on the fact that twirling by the Pauli operators $\Pauli{n} = \{\I, X, Y, Z\}^{\otimes n}$ yields
\begin{equation}
    \frac{1}{4^n} \sum_{P \in \Pauli{n}} PHP = \frac{\tr H}{2^n} \I_n.
\end{equation}
For $b \in \{0, 1\}$, let $\overline{b} = b \oplus 1$. Appending a control qubit and replacing each Pauli gate with its controlled form (conditioned on $\overline{b}$), we get
\begin{equation}
\begin{split}
    \frac{1}{4^n} \sum_{P \in \Pauli{n}} \ctrl{\overline{b}}{P} (\I_1 \otimes H) \ctrl{\overline{b}}{P} &= \frac{1}{4^n} \sum_{P \in \Pauli{n}} \begin{pmatrix}
        P & 0\\
        0 & \I_n
    \end{pmatrix}
    \begin{pmatrix}
        H & 0\\
        0 & H
    \end{pmatrix}
    \begin{pmatrix}
        P & 0\\
        0 & \I_n
    \end{pmatrix}\\
    &= \begin{pmatrix}
        \tr(H)\I_n/2^n & 0\\
        0 & H
    \end{pmatrix}\\
    &= \begin{pmatrix}
        0 & 0\\
        0 & H_0
    \end{pmatrix} + \frac{\tr H}{2^n} \I_{n + 1},
\end{split}
\end{equation}
where $H_0 = H - \tr(H)\I_n/2^n$ is the traceless part of $H$. Denote this $(n + 1)$-qubit operator by $H^{\Pauli{n},1}$. It generates the unitary
\begin{equation}
    \exp(-\i H^{\Pauli{n},1} t) = e^{\i \alpha} \begin{pmatrix}
        \I_n & 0\\
        0 & e^{-\i H_0 t}
    \end{pmatrix} = e^{\i \alpha} \ctrl{b}{e^{-\i H_0 t}},
\end{equation}
where $\alpha = -t \tr(H)/2^n$. Thus up to this global phase, we have a formalism for the desired controlled unitary. This can realized by a product formula over the $4^n$ terms of $H^{\Pauli{n},1}$, each of which only requires the uncontrolled time evolution:
\begin{equation}\label{eq:ctrl_product_formula_term}
    \exp\l(-\frac{\i t}{4^n} \, \ctrl{\overline{b}}{P} (\I_1 \otimes H) \ctrl{\overline{b}}{P} \r) = \ctrl{\overline{b}}{P} (\I_1 \otimes e^{-\i H t / 4^n}) \ctrl{\overline{b}}{P}.
\end{equation}
Taking the product over all $P \in \Pauli{n}$ yields a product-formula approximation to $\ctrl{b}{e^{-\i H_0 t}}$. However, this naive implementation clearly requires an exponential amount of resources, as well as an exponentially small time resolution.

To circumvent this issue, one can instead appeal to the randomized product formula qDRIFT~\cite{campbell2019random}. In this approach the gate sequence is constructed by randomly sampling terms, whereby the distribution of terms is proportional to their operator norms. In this case, the distribution reduces to uniformly drawing elements from $\Pauli{n}$. Consider a qDRIFT sequence constructed from $N$ such random draws. That is, we have the circuit $V_1 \cdots V_N$ where each $V_i$ is of the form
\begin{equation}
    V = \ctrl{\overline{b}}{P} (\I_1 \otimes U(t/N)) \ctrl{\overline{b}}{P}.
\end{equation}
This random construction can be viewed as the mixed-unitary channel $\E[\mathcal{V}_1 \circ \cdots \circ \mathcal{V}_N]$, where $\mathcal{V}_i(\cdot) = V_i (\cdot) V_i^\dagger$. Also, let $\mathcal{C}_b(t)$ be the quantum channel corresponding to the target unitary $\ctrl{b}{e^{-\i H_0 t}}$. The improved analysis of \cite{chen2021concentration} guarantees that the qDRIFT channel converges to the desired controlled unitary as
\begin{equation}
    \frac{1}{2} \| \mathcal{C}_b(t) - \E[\mathcal{V}_1 \circ \cdots \circ \mathcal{V}_N] \|_\diamond \leq \frac{t^2 \|H\|^2}{N},
\end{equation}
where $\| \cdot \|_\diamond$ is the diamond norm. Equivalently, building qDRIFT sequences of length $N = \O(t^2\|H\|^2/\gamma)$ suffices to achieve error $\gamma$.

\subsection{Multiple control qubits}

To generalize to multiple controls, first note that in the $k = 1$ setting, we have
\begin{equation}
    \ctrl{\overline{b}}{P} = \ctrl{b}{P} (\I_1 \otimes P).
\end{equation}
This follows because $P^2 = \I_n$. Now if we let $k \geq 1$, then for any $b \in \{0, 1\}^k$, the multi-controlled gates that we need to implement are
\begin{equation}
\begin{split}
    \ctrl{b}{P}(\I_k \otimes P) &= (\I_k - \op{b}{b}) \otimes P + \op{b}{b} \otimes \I_n\\
    &= \begin{pmatrix}
        P & \\
        & \ddots \\
        & & P\\
        & & & \I_n
    \end{pmatrix},
\end{split}
\end{equation}
where there are $2^k - 1$ blocks of $P$ above. Then, twirling over these gates gives us
\begin{equation}
\begin{split}
    H^{\Pauli{n},k} &= \frac{1}{4^n} \sum_{P \in \Pauli{n}} \ctrl{b}{P} (\I_k \otimes P) (\I_k \otimes H) (\I_k \otimes P) \ctrl{b}{P}\\
    &= \begin{pmatrix}
        0 & \\
        & \ddots \\
        & & 0\\
        & & & H_0
    \end{pmatrix} + \frac{\tr H}{2^n} \I_{n+k}.
\end{split}
\end{equation}
Simulating this twirled Hamiltonian with the qDRIFT protocol then yields an approximation to $\ctrl{b}{e^{-\i H_0 t}}$ with the same performance guarantees as in the single-control setting.

\begin{theorem}\label{thm:controlization}
    Let $H$ be a Hamiltonian and suppose we have access to $U(t) = e^{-\i H t}$. Let \emph{$\mathcal{C}_b(t)(\cdot) = \ctrl{b}{U(t)} (\cdot) \ctrl{b}{U(t)}^\dagger$} for any $b \in \{0,1\}^k$. For a $\gamma > 0$, there exists a mixed-unitary channel $\mathcal{E}(t)$ such that
    \begin{equation}
        \frac{1}{2} \|\mathcal{C}_b(t) - \mathcal{E}(t)\|_\diamond \leq \gamma,
    \end{equation}
    implemented from $N$ queries to $U(t/N)$ and $\O(N)$ Pauli and controlled-Pauli gates, where
    \begin{equation}
        N = \O\l(\frac{t^2 \|H\|^2}{\gamma}\r).
    \end{equation}
\end{theorem}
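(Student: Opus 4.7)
The plan is to exponentiate the twirled multi-controlled Hamiltonian $H^{\Pauli{n},k}$ via the qDRIFT randomized product formula and then invoke the tight diamond-norm convergence bound of \cite{chen2021concentration} to read off the claimed query scaling. The algebraic identity computed just before the theorem already establishes that $\exp(-\i H^{\Pauli{n},k} t)$ equals $\ctrl{b}{e^{-\i H_0 t}}$ up to a global phase, and global phases disappear under channel conjugation; consistent with the surrounding discussion, this is what we take $\mathcal{C}_b(t)$ to mean (controlization automatically strips off the identity component).

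First I would cast $H^{\Pauli{n},k} = \frac{1}{4^n}\sum_{P \in \Pauli{n}} h_P$, where $h_P = \ctrl{b}{P}(\I_k \otimes P)(\I_k \otimes H)(\I_k \otimes P)\ctrl{b}{P}$, into the canonical qDRIFT form. Each $h_P$ is unitarily conjugate to $\I_k \otimes H$ and therefore has operator norm exactly $\|H\|$. Rescaling to unit-norm terms, the induced qDRIFT sampling distribution is uniform over $\Pauli{n}$ and the total $\ell_1$-weight is $\lambda = \|H\|$. This is the reason the $\lambda^2$ appearing in the qDRIFT error becomes $\|H\|^2$ rather than anything weighted or dimension-dependent.

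Next, I would define the mixed-unitary channel $\mathcal{E}(t) = \E_{P_1,\ldots,P_N}[\mathcal{V}_{P_N} \circ \cdots \circ \mathcal{V}_{P_1}]$, where $\mathcal{V}_P(\cdot) = V_P(\cdot) V_P^\dagger$ and $V_P = \ctrl{b}{P}(\I_k \otimes P)(\I_k \otimes U(t/N))(\I_k \otimes P)\ctrl{b}{P}$. The key observation is that $V_P = \exp(-\i h_P t/N)$ because conjugation commutes with exponentiation; hence each qDRIFT step costs exactly one query to $U(t/N)$ together with $\O(1)$ Pauli and multi-controlled-Pauli gates, yielding $N$ queries and $\O(N)$ gates overall. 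Applying the concentration bound of \cite{chen2021concentration} to this sequence gives $\tfrac{1}{2}\|\mathcal{C}_b(t) - \mathcal{E}(t)\|_\diamond \leq t^2 \|H\|^2/N$, and choosing $N = \lceil c\, t^2\|H\|^2/\gamma \rceil$ for an appropriate constant $c$ completes the proof.

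The main obstacle, if any, is to be sure that the qDRIFT diamond-norm bound transfers verbatim from the standard uncontrolled setting to the $(n+k)$-qubit twirl with multi-controlled Pauli gates. Fortunately, that bound only depends on the sampling $\ell_1$-weight and the operator norms of the (normalized) Hermitian summands, and both are invariant under the unitary conjugation by $\ctrl{b}{P}(\I_k \otimes P)$; no additional structural assumption is needed, so the $k=1$ argument of \cite{odake2023universal} lifts seamlessly to arbitrary $k$.
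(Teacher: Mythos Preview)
Your proposal is correct and follows essentially the same route as the paper: it casts $H^{\Pauli{n},k}$ as a uniform mixture of unitarily conjugated copies of $\I_k\otimes H$, applies qDRIFT with uniform sampling over $\Pauli{n}$ so that the $\ell_1$-weight is $\|H\|$, observes that each step costs one query to $U(t/N)$ plus $\O(1)$ (controlled-)Pauli gates, and invokes the diamond-norm bound of \cite{chen2021concentration} to obtain $N=\O(t^2\|H\|^2/\gamma)$. Your remark that the bound is invariant under unitary conjugation, so that the $k=1$ analysis lifts directly to arbitrary $k$, is exactly how the paper handles the multi-control case as well.
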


\begin{remark}[On the trace of the Hamiltonian]\label{rem:ctrl_traceless}
    This method of controlization isolates the traceless part of the Hamiltonian, so without further modifications it can only approximate $\ctrl{b}{e^{-\i H_0 t}}$. For our purposes this is actually favorable, because we can now always safely assume that the Hamiltonian to be learned is traceless, removing any potential complications with a nonzero identity component in $H$.
\end{remark}

\subsection{Error in pseudo-Choi state preparation}

All the algorithms presented in this work (even the time-reversal-free approach) require controlled variants of the time evolution operator. Using the controlization scheme of this section, all instances of $\ctrl{}{e^{-\i H t}}$ appearing in this paper can be approximated by the qDRIFT circuit involving $N$ applications of $e^{-\i H t/N}$. Let us briefly discuss how this approximation affects the learning guarantees.

To begin, observe that each query to $\ctrl{}{U(t)}$ runs for time $|t| = 1/\Delta$ when given access to time reversal, and $t = j/\Delta$ for $j \leq \O(\log(\Delta/\epsilon))$ without time reversal. Thus to streamline exposition, suppose $|t| = \Ot(1/\Delta)$. According to \cref{thm:controlization}, each query costs
\begin{equation}
    N = \Ot\l( \frac{\|H\|^2}{\Delta^2 \gamma} \r) = \Ot(1/\gamma)
\end{equation}
queries to the uncontrolled evolution $U(t/N)$, where the second bound follows from $\|H\| \leq 2\Delta$. Clearly, the total evolution time is the same. However, the time resolution is shortened to $t_\mathrm{min} = t/N = \Omega(\gamma/\Delta)$, so it behooves us to analyze what $\gamma > 0$ suffices. The controlization also introduces additional quantum gates, but this overhead is at most $\poly(n, \Delta, 1/\gamma)$ so we do not closely analyze it here.

\begin{theorem}\label{thm:controlization_time_resolution}
    In order to replace controlled time evolutions with approximate controlizations in any pseudo-Choi state learning algorithm, we can make $N$ queries to $U(t/N)$ per instance of \emph{$\ctrl{}{U(t)}$}. Letting $\epsilon > 0$ be the target $\ell_\infty$ error and $\Delta$ the choice of Hamiltonian normalization, it suffices to take for:
    \begin{itemize}
        \item Parameter estimation with time reversal:~$N = \Ot(\Delta)$;

        \item Parameter estimation without time reversal:~$N = \Ot(\Delta/\epsilon)$;

        \item Structure learning with time reversal:~$N = \Ot(\Delta^2/\epsilon)$;

        \item Structure learning without time reversal:~$N = \Ot(\Delta^4/\epsilon^4)$.
    \end{itemize}
    The time resolution becomes $\tmin = \Omega(1/(N\Delta))$, and the total evolution time and number of experiments are unaffected.
\end{theorem}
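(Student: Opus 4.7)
The plan is to handle the four cases by a common template:~for each algorithm, count the number of controlized time-evolution queries $Q_1$ that appear inside a single block-encoding circuit, use subadditivity of the diamond norm under composition to conclude that the resulting mixed-unitary block encoding is within $\O(Q_1\gamma)$ in diamond norm of the ideal (controlled-unitary) block encoding, then propagate this channel error through post-selection to the prepared pseudo-Choi state, and finally through the downstream estimator (classical shadows in the parameter-estimation case, Bell sampling in the structure-learning case) to the recovered Pauli coefficients. Setting this end-to-end error equal to a tolerable amount, solving for $\gamma$, and invoking $N=\Ot(1/\gamma)$ from \cref{thm:controlization} yields the stated bounds. Throughout, $Q_1=\polylog$ is swept into the $\Ot$.

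For the parameter-estimation cases I would reason as follows. With time reversal, the bootstrap of \cref{alg:DOS_bootstrap} means the base algorithm only ever needs to estimate the coefficients of the amplified residual $R/\eta$ at a \emph{constant} precision $\epsilon_0=1/2$. By Hölder's inequality as in \cref{eq:block_error_coeff}, a coefficient bias of $\O(\epsilon_0)$ on $R/\eta$ is implied by an operator-norm error of $\O(1/\Delta)$ on the block-encoded matrix (since the coefficients are rescaled by $\Delta$). Tracking the amplification through \cref{thm:amp_BE_residual}, $Q_1\gamma = \O(1/\Delta)$ suffices, giving $\gamma = \Ot(1/\Delta)$ and hence $N=\Ot(\Delta)$. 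Without time reversal there is no bootstrap (uniform spectral amplification is unavailable), so the base algorithm must learn directly at precision $\epsilon$, forcing the coefficient bias down to $\O(\epsilon)$ and $\gamma = \Ot(\epsilon/\Delta)$, i.e.~$N=\Ot(\Delta/\epsilon)$.

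For the structure-learning cases the analysis goes through Bell sampling on the referenceless pseudo-Choi state, which is more delicate. The controlized block encoding at diamond-norm error $\O(Q_1\gamma)$ yields a mixed state at trace distance $\O(Q_1\gamma)$ from $\ket{\pchoi(\widetilde H)}\!\bra{\pchoi(\widetilde H)}$, and thus the Bell-outcome distribution $\Pr(E_a)=|\widetilde\lambda_a|^2/\|\widetilde\lambda\|^2$ is perturbed by the same amount in total variation. Because the coefficients appear \emph{squared} in this distribution, passing back to an $\ell_\infty$-perturbation of the $\widetilde\lambda_a$ costs a square root, so the effective operator-norm perturbation of $\widetilde H$ scales as $\Delta\sqrt{Q_1\gamma}$ (as noted in the technical overview). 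Demanding that \cref{lem:big_set_conversion} still preserves $S_\epsilon(H)$ — i.e., that $\Delta\sqrt{Q_1\gamma}\lesssim\epsilon$ — gives $\gamma=\Ot(\epsilon^2/\Delta^2)$ and $N=\Ot(\Delta^2/\epsilon)$ in the time-reversal model. Without time reversal the matrix-logarithm block encoding of \cref{thm:matrix_log_H} carries a subnormalization $\Lambda=\Ot(\Delta/\epsilon)$, which effectively replaces $\Delta$ by $\Lambda\Delta$ in the above coefficient-to-operator conversion (the recovered coefficients are $\Lambda\Delta$ times the block-encoded entries), producing $\gamma=\Ot(\epsilon^4/\Delta^4)$ and $N=\Ot(\Delta^4/\epsilon^4)$.

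The main obstacle I anticipate is justifying the square-root loss in the structure-learning step cleanly;~this is where the analysis genuinely differs from parameter estimation, because coefficients are read off as probabilities rather than as expectation values. Once that conversion is in place, the statements about $\tmin$, $\ttotal$, and $\Nexp$ follow immediately:~each original controlized query of length $|t|=\Ot(1/\Delta)$ is replaced by $N$ qDRIFT fragments of length $|t|/N$, so $\tmin=\Omega(1/(N\Delta))$, while the total evolution time and the number of prepare--evolve--measure experiments are invariant under this substitution. The additional gate overhead is polynomial in $n,\Delta,1/\gamma$, as noted after \cref{thm:controlization}, and does not affect the asymptotic query metrics.
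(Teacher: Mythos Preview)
Your template matches the paper's:~telescope diamond-norm errors over the $Q_1$ controlized calls inside one block encoding, propagate through post-selection, and push the resulting bias through the downstream estimator. The paper formalizes these two propagation steps as separate lemmas (one for decoding-operator expectation values, one for Bell-sampling thresholds) rather than reasoning about an ``operator-norm error on the block-encoded matrix''; the latter phrasing is imprecise because after controlization the block encoding is a mixed-unitary channel, so there is no well-defined encoded matrix to speak of. The paper instead bounds $|{\tr(O_j\sigma)}-\tr(O_j\widetilde\sigma)|\leq 6Q_1\gamma$ and, for structure learning, shows $|\lambda_P|\geq w_P-\alpha\sqrt{Q_1\gamma}$ with $\alpha$ the block-encoding normalization. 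These give the same end-to-end scaling as your sketch but make the post-selection and channel-vs-unitary issues explicit.

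There is a genuine gap in your handling of the two time-reversal cases. Your blanket assumption ``$Q_1=\polylog$'' is false there:~the amplified-residual block encoding of \cref{thm:amp_BE_residual} makes $Q_1=\Ot(1/\eta)$ controlled queries (the cost of uniform spectral amplification), and this is precisely what the paper tracks. For structure learning with time reversal the paper does \emph{not} demand threshold $\epsilon$ as you do; it works inside the bootstrap at the constant threshold $1/2$, takes $\alpha=2\Delta$ and $Q_1=\Ot(1/\eta)$, requires $\alpha\sqrt{Q_1\gamma}=\O(1)$, and obtains $\gamma=\Thetat(\eta/\Delta^2)$ so $N=\Ot(\Delta^2/\eta)\leq\Ot(\Delta^2/\epsilon)$. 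Your own derivation, ``$\Delta\sqrt{Q_1\gamma}\lesssim\epsilon$ gives $\gamma=\Ot(\epsilon^2/\Delta^2)$,'' would yield $N=\Ot(\Delta^2/\epsilon^2)$, not $\Ot(\Delta^2/\epsilon)$;~the stated answer does not follow from your reasoning (there is an arithmetic slip, and more importantly the bootstrap is what actually produces the single factor of $1/\epsilon$). The same issue bites in the time-reversal parameter-estimation case: ``$Q_1\gamma=\O(1/\Delta)$ hence $\gamma=\Ot(1/\Delta)$'' is only valid if $Q_1$ is polylog, which it is not once you have amplified. Your time-forward analyses, where $Q_1=\O(\log(\Delta/\epsilon))$ genuinely is polylogarithmic, are fine and agree with the paper.
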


We will prove this statement in two parts:~first the effect of controlization on the parameter estimation stage, and then the effect on the term identification stage.

\begin{lemma}\label{lem:controlization_parameter_effect}
    Let $B$ be an $(\alpha, q, 0)$-BE that prepares any pseudo-Choi state with reference, and $\mathcal{B}$ its corresponding channel. Suppose $B$ makes $Q_1$ queries to controlled time evolutions. Let $\widetilde{\mathcal{B}}$ be the channel obtained by replacing all such queries with controlized approximations, each with diamond error $\gamma$. Let $\rho_0$ be any initial $(2n + 1 + q)$-qubit state, and define
    \begin{equation}
        \sigma \coloneqq \frac{(\bra{0^q} \otimes \I_{2n+1}) \mathcal{B}(\rho_0) (\ket{0^q} \otimes \I_{2n+1})}{\tr[(\op{0^q}{0^q} \otimes \I_{2n+1}) \mathcal{B}(\rho_0)]},
    \end{equation}
    Analogously, define $\widetilde{\sigma}$ with respect to $\widetilde{\mathcal{B}}$. Then for any decoding operator $O_j \in \{O_1,\ldots, O_m, O_\norm\}$, we have
    \begin{equation}
        |{\tr(O_j \sigma)} - \tr(O_j \widetilde{\sigma})| \leq 6 Q_1 \gamma.
    \end{equation}
\end{lemma}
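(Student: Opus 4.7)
The plan is to combine a standard hybrid argument for the diamond norm with the stability of conditional states under bounded channel error, exploiting the built-in success-probability lower bound of the reference pseudo-Choi construction.

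First, I would establish the intermediate bound $\|\mathcal{B} - \widetilde{\mathcal{B}}\|_\diamond \leq Q_1 \gamma$ via a telescoping sum. Writing $\mathcal{B}$ as a composition of at most $Q_1$ controlled-evolution channels interleaved with fixed ancilla-free unitaries (preparation, state-preparation pair, rotations, etc.), and replacing one controlized block at a time, the diamond-norm triangle inequality together with the fact that surrounding channels are unitary (hence diamond-norm isometries) gives additivity of the per-query error $\gamma$. Hence $\|\mathcal{B}(\rho_0) - \widetilde{\mathcal{B}}(\rho_0)\|_1 \leq Q_1 \gamma$ for every input state $\rho_0$.

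Next, I would analyze the post-selection. Let $\Pi = \ket{0^q}\!\bra{0^q} \otimes \I_{2n+1}$, $\tau = \Pi \mathcal{B}(\rho_0) \Pi$, $\widetilde{\tau} = \Pi \widetilde{\mathcal{B}}(\rho_0) \Pi$, $p = \tr \tau$, $\widetilde{p} = \tr \widetilde{\tau}$, so that $\sigma = \tau/p$ and $\widetilde{\sigma} = \widetilde{\tau}/\widetilde{p}$. Since projection can only decrease the trace norm, $\|\tau - \widetilde{\tau}\|_1 \leq Q_1 \gamma$ and $|p - \widetilde{p}| \leq Q_1 \gamma$. The reference-qubit branch in $\ket{\pchoiref(\cdot)}$ contributes $\ket{\Omega}\ket{1}$ independently of the block-encoded matrix, so the exact success probability satisfies $p \geq 1/2$; this is the crucial structural fact that prevents post-selection from inflating the error. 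Splitting
\begin{equation}
\tr(O_j \sigma) - \tr(O_j \widetilde{\sigma}) = \frac{\tr(O_j(\tau - \widetilde{\tau}))}{p} + \tr(O_j \widetilde{\tau}) \left( \frac{1}{p} - \frac{1}{\widetilde{p}} \right),
\end{equation}
and applying Hölder's inequality together with the observations $\|O_a\|_\infty = \|O_\norm\|_\infty = 1$ (each decoding operator is a rank-one outer product of unit vectors, or a projector) and $|\tr(O_j \widetilde{\tau})| \leq \|O_j\|_\infty \|\widetilde{\tau}\|_1 = \widetilde{p}$, yields
\begin{equation}
|\tr(O_j \sigma) - \tr(O_j \widetilde{\sigma})| \leq \frac{\|\tau - \widetilde{\tau}\|_1}{p} + \frac{|p - \widetilde{p}|}{p} \leq \frac{2 Q_1 \gamma}{p} \leq 4 Q_1 \gamma \leq 6 Q_1 \gamma.
\end{equation}

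The only potential obstacle is the post-selection division, which is tamed entirely by the $p \geq 1/2$ guarantee inherited from the reference branch of $\ket{\pchoiref(\cdot)}$; the rest is bookkeeping with the diamond norm and Hölder's inequality. The slack between $4 Q_1 \gamma$ and $6 Q_1 \gamma$ in the stated bound is convenient in case one prefers to reason with $\widetilde{p}$ in the denominator (which is only guaranteed to be $\geq 1/2 - Q_1 \gamma$), in which case the same derivation with $\widetilde{p}$ in place of $p$ gives $2 Q_1 \gamma / \widetilde{p}$ and still fits under $6 Q_1 \gamma$ whenever $Q_1 \gamma \leq 1/6$, while larger $Q_1 \gamma$ makes the bound trivial from $|\tr(O_j \cdot)| \leq 1$.
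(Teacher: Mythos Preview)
Your proof is correct and follows essentially the same route as the paper: a telescoping/hybrid argument gives $\|\mathcal{B}-\widetilde{\mathcal{B}}\|_\diamond \leq Q_1\gamma$, then H\"older's inequality together with the $\geq 1/2$ success probability controls the post-selected expectations. The only real difference is in the splitting of the second step: the paper bounds $\|\rho/a - \widetilde{\rho}/b\|_1$ by $\frac{1}{a}\|\rho-\widetilde{\rho}\|_1 + |1/a - 1/b|$ and then uses both $a,b\geq 1/2$ to get $2Q_1\gamma + 4Q_1\gamma = 6Q_1\gamma$, whereas you keep the factor $\tr(O_j\widetilde{\tau})$ explicit and cancel the $\widetilde{p}$ in the denominator, obtaining $2Q_1\gamma/p \leq 4Q_1\gamma$ using only the exact-channel bound $p\geq 1/2$. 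Your version is marginally sharper and avoids having to argue that the controlized success probability $\widetilde{p}$ is also at least $1/2$, which the paper asserts without separate justification.
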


\begin{proof}
    Let $\mathcal{C}_j$ denote a (multi-)controlled instance of $\mathcal{U}$ appearing in $\mathcal{B}$. Let $\mathcal{E}_j$ be the mixed-unitary channel guaranteed by \cref{thm:controlization} to approximate $\mathcal{C}_j$ to within diamond distance $\gamma$. By a telescoping inequality, we have
    \begin{equation}\label{eq:telescoped_ineq}
        \|\mathcal{B} - \widetilde{\mathcal{B}}\|_\diamond \leq \sum_{j=1}^{Q_1} \|\mathcal{C}_j - \mathcal{E}_j\|_\diamond \leq Q_1\gamma.
    \end{equation}
    For convenience, we will define $\rho \coloneqq \mathcal{B}(\rho_0)$ and $\widetilde{\rho} \coloneqq \widetilde{\mathcal{B}}(\rho_0)$ To incorporate the conditional state preparation into the observable estimation analysis, we use the fact that in expectation,
    \begin{equation}
       \tr(O_j \sigma) = \tr\l[ \l(\frac{\op{0^q}{0^q}}{\Pr(0^q; \rho)} \otimes O_j \r) \rho \r],
    \end{equation}
    where $\Pr(0^q; \rho) = \tr[(\op{0^q}{0^q} \otimes \I_{2n+1}) \rho]$. Analogously, we also have
    \begin{equation}
        \tr(O_j \widetilde{\sigma}) = \tr\l[ \l(\frac{\op{0^q}{0^q}}{\Pr(0^q; \widetilde{\rho})} \otimes O_j \r) \widetilde{\rho} \r].
    \end{equation}
    To simplify notation, let $a = \Pr(0^q; \rho)$ and $b = \Pr(0^q; \widetilde{\rho})$. By H\"{o}lder's inequality with $\|\op{0^q}{0^q} \otimes O_j\| \leq 1$, we get
    \begin{equation}
    \begin{split}
        |{\tr(O_j \sigma)} - \tr(O_j \widetilde{\sigma})| &= \tr\l[ \l( \op{0^q}{0^q} \otimes O_j \r) \l( \frac{\rho}{a} - \frac{\widetilde{\rho}}{b} \r) \r]\\
        &\leq \l\| \frac{\rho}{a} - \frac{\widetilde{\rho}}{b} \r{\|_1} = \frac{1}{a} \l\| \rho - \frac{a}{b} \widetilde{\rho} \r{\|_1}\\
        &\leq \frac{1}{a} \l( \|\rho - \widetilde{\rho}\|_1 + \l\| \widetilde{\rho} - \frac{a}{b} \widetilde{\rho} \r{\|_1} \r)\\
        &= \frac{1}{a} \|\rho - \widetilde{\rho}\|_1 + \l| \frac{1}{a} - \frac{1}{b} \r|.
    \end{split}
    \end{equation}
    Recall that preparing pseudo-Choi states with reference always succeeds with probability $a, b \geq \frac{1}{2}$. By definition of the diamond norm, the first term is bounded by
    \begin{equation}\label{eq:first_diamond_error_term}
    \begin{split}
        \frac{1}{a} \|\rho - \widetilde{\rho}\|_1 &\leq 2 \|\mathcal{B} - \widetilde{\mathcal{B}}\|_\diamond\\
        &\leq 2 Q_1\gamma.
    \end{split}
    \end{equation}
    For the second term, we apply another round of H\"{o}lder's inequality:
    \begin{equation}\label{eq:second_diamond_error_term}
    \begin{split}
        \l| \frac{1}{a} - \frac{1}{b} \r| &= \frac{|a - b|}{ab}\\
        &\leq 4 \l| {\tr[(\op{0^q}{0^q} \otimes \I_{2n+1}) (\rho - \widetilde{\rho})]} \r|\\
        &\leq 4 \|\rho - \widetilde{\rho}\|_1\\
        &\leq 4 Q_1\gamma.
    \end{split}
    \end{equation}
    Combining \cref{eq:first_diamond_error_term,eq:second_diamond_error_term} yields the desired bound.
\end{proof}

Next we show how controlization affects the magnitude of the encoded Pauli coefficients of $H$.

\begin{lemma}\label{lem:controlization_term_effect}
    Let $B$ be an $(\alpha, q, 0)$-BE that prepares the referenceless pseudo-Choi state of any $H$. Fix $\rho_0 = \ket{0^q}\op{\Omega}{\Omega}\bra{0^q}$, and let $Q_1, \mathcal{B}, \widetilde{\mathcal{B}}, \gamma$ be analogously as in \cref{lem:controlization_parameter_effect}. For any $P \in \Pauli{n}$, define $\lambda_P \coloneqq \frac{1}{2^n} \tr(PH)$ and $w_P \coloneqq \alpha \sqrt{\tr[(\op{0^q}{0^q} \otimes \op{P}{P}) \widetilde{\mathcal{B}}(\rho_0) ]}$. These quantities obey
    \begin{equation}
        |\lambda_P| \geq w_P - \alpha\sqrt{Q_1 \gamma}.
    \end{equation}
\end{lemma}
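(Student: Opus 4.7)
The plan is to compare the exact block-encoding channel $\mathcal{B}$ and its controlized approximation $\widetilde{\mathcal{B}}$ on the fixed initial state $\rho_0$, measured against the (sub-normalized) POVM element $M_P = \op{0^q}{0^q} \otimes \op{P}{P}$, and then convert an $L^2$-type inequality into the claimed $L^1$-type bound via the standard inequality $\sqrt{a^2+b^2} \leq a+b$ for $a,b \geq 0$.

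First, I would evaluate $\tr(M_P\,\mathcal{B}(\rho_0))$ in closed form. Since $B$ is an exact $(\alpha, q, 0)$-BE of $H$, one has $(\bra{0^q} \otimes \I_n)\,B\,\ket{0^q}\ket{\Omega} = (H/\alpha \otimes \I_n)\ket{\Omega}$, so
\begin{equation}
    \tr(M_P\,\mathcal{B}(\rho_0)) = |\melem{P}{H/\alpha \otimes \I_n}{\Omega}|^2 = \l|\tfrac{1}{2^n}\tr(P^\dagger H)/\alpha\r|^2 = \l(\tfrac{|\lambda_P|}{\alpha}\r)^{\!2},
\end{equation}
using the Fourier identity $\ev{(A \otimes \I_n)}{\Omega} = \frac{1}{2^n}\tr A$ applied to $A = P^\dagger(H/\alpha)$. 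Thus $\alpha\sqrt{\tr(M_P\,\mathcal{B}(\rho_0))} = |\lambda_P|$, which identifies the exact analogue of $w_P$ with $|\lambda_P|$.

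Next, I would reuse the telescoping argument from the proof of \cref{lem:controlization_parameter_effect}: replacing each of the $Q_1$ controlled time evolutions in $\mathcal{B}$ by its controlized approximation at diamond distance $\gamma$ yields $\|\mathcal{B} - \widetilde{\mathcal{B}}\|_\diamond \leq Q_1 \gamma$, and consequently $\|\mathcal{B}(\rho_0) - \widetilde{\mathcal{B}}(\rho_0)\|_1 \leq Q_1 \gamma$. Since $\|M_P\| \leq 1$, H\"{o}lder's inequality gives
\begin{equation}
    \l| \tr(M_P\,\mathcal{B}(\rho_0)) - \tr(M_P\,\widetilde{\mathcal{B}}(\rho_0)) \r| \leq Q_1 \gamma,
\end{equation}
i.e.\ $|(w_P/\alpha)^2 - (|\lambda_P|/\alpha)^2| \leq Q_1 \gamma$, which rearranges to $w_P^2 \leq |\lambda_P|^2 + \alpha^2 Q_1 \gamma$.

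Finally, taking square roots and applying $\sqrt{a^2 + b^2} \leq a + b$ (valid for $a,b \geq 0$) with $a = |\lambda_P|$ and $b = \alpha\sqrt{Q_1\gamma}$ gives $w_P \leq |\lambda_P| + \alpha\sqrt{Q_1\gamma}$, which is the claimed bound. There is no genuine obstacle here: the two steps that required care in \cref{lem:controlization_parameter_effect} (dealing with the post-selection normalization and with the decoding operator $O_\norm$) do not arise, because $w_P$ is defined directly in terms of the unnormalized joint probability. The only subtle point is the square-root conversion, which is the reason the error enters as $\sqrt{Q_1\gamma}$ rather than $Q_1\gamma$.
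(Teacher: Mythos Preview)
Your proposal is correct and follows essentially the same approach as the paper: compute $\tr(M_P\,\mathcal{B}(\rho_0)) = (|\lambda_P|/\alpha)^2$, apply the telescoping diamond-norm bound $\|\mathcal{B}-\widetilde{\mathcal{B}}\|_\diamond \leq Q_1\gamma$ together with H\"older's inequality, and convert via $\sqrt{x+y}\leq\sqrt{x}+\sqrt{y}$. The only cosmetic difference is that the paper factors the identity $|\lambda_P|^2 = \alpha^2 \Pr(P;\sigma)\Pr(0^q;\rho)$ through the two conditional probabilities before arriving at the same expression you compute directly.
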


\begin{proof}
    We start by reviewing the properties of the target state $\sigma = \op{\pchoi(H)}{\pchoi(H)}$ from our analysis in \cref{sec:structure_learning}. Recall the basis states $\ket{P} \equiv (P \otimes \I_n) \ket{\Omega}$ for $P \in \Pauli{n}$, which define the distribution of Bell measurements:
    \begin{equation}
        \Pr(P; \sigma) = \ev{\sigma}{P} = \frac{|\lambda_P|^2}{\ell^2},
    \end{equation}
    where $\lambda_P = \frac{1}{2^n} \tr(PH)$ and $\ell = \|\lambda\|$. We also have the probability of preparing $\sigma$ from measuring the ancilla register of $\rho = \mathcal{B}(\rho_0)$,
    \begin{equation}
        \Pr(0^q; \rho) = \frac{\|(H \otimes \I_n) \ket{\Omega}\|^2}{\alpha^2} = \frac{\ell^2}{\alpha^2}.
    \end{equation}
    Multiplying these two quantities allows us to express $|\lambda_P|^2$ as
    \begin{equation}
    \begin{split}
        |\lambda_P|^2 &= \alpha^2 \Pr(P; \sigma) \Pr(0^q; \rho)\\
        &= \alpha^2 \tr\l[(\op{0^q}{0^q} \otimes \op{P}{P}) \rho \r].
    \end{split}
    \end{equation}
    This observation motivates our definition for $w_P = \alpha \sqrt{\tr[(\op{0^q}{0^q} \otimes \op{P}{P}) \widetilde{\mathcal{B}}(\rho_0) ]}$ as the magnitude of the Pauli coefficients encoded in the controlized form of the block encoding. The error of this approximation is bounded by
    \begin{equation}
        |w_P^2 - |\lambda_P|^2| \leq \alpha^2 \|\mathcal{B}(\rho_0) - \widetilde{\mathcal{B}}(\rho_0)\|_1 \leq \alpha^2 \|\mathcal{B} - \widetilde{\mathcal{B}}\|_\diamond \leq \alpha^2 Q_1 \gamma,
    \end{equation}
    where we applied the same telescoping argument from \cref{eq:telescoped_ineq} in the final inequality. Then we use $\sqrt{x + y} \leq \sqrt{x} + \sqrt{y}$ for $x, y \geq 0$ to get
    \begin{equation}
    \begin{split}
        w_P &\leq \sqrt{|\lambda_P|^2 + \alpha^2 Q_1 \gamma}\\
        &\leq |\lambda_P| + \alpha\sqrt{Q_1 \gamma},
    \end{split}
    \end{equation}
    which implies the claim.
\end{proof}

With these two lemmas in hand, we are ready to prove \cref{thm:controlization_time_resolution}.

\begin{proof}[Proof (of \cref{thm:controlization_time_resolution}).]
    \textbf{Parameter estimation.} Recall from \cref{eq:CW_sample_complexity} and the surrounding discussion that, in order to get $\epsilon$ precision in $\widehat{\lambda}_a$, we need $\Omega(\epsilon/\Delta)$ precision in the decoding operators $\langle O_j \rangle$. In the bootstrap framework (time-reversal), we only need a constant precision $\epsilon_0 = \frac{1}{2}$, so we can tolerate a systematic error of $|{\tr(O_j \sigma)} - \tr(O_j \widetilde{\sigma})| \leq \O(1/\Delta)$. Therefore by \cref{lem:controlization_parameter_effect}, we should choose $\gamma = \Theta(1/(Q_1\Delta))$ with a sufficiently small constant factor. Recall that $Q_1 = \Ot(1/\eta)$, dominated by the cost of spectral amplification on the residual Hamiltonian (\cref{thm:amp_BE_residual}). Since $\eta \geq \epsilon$, we get $t_{\mathrm{min}} = \Omega(\gamma/\Delta) = \Omegat(\epsilon/\Delta^2)$.
    
    Without the bootstrap framework (time-forward), our error tolerance is $\epsilon$-dependent, $|{\tr(O_j \sigma)} - \tr(O_j \widetilde{\sigma})| \leq \O(\epsilon/\Delta)$. However, since the number of queries in this setting is only $Q_1 = \O(\log(\Delta/\epsilon)) = \Ot(1)$, we end up with the same asymptotic time resolution, $t_{\mathrm{min}} = \Omegat(\epsilon/\Delta^2)$. Note that this applies to the originally proposed algorithm of \cite{castaneda2023hamiltonian}, since it also has a query complexity of $Q_1 = \O(\log(\Delta/\epsilon))$.

    \textbf{Structure learning.} The Bell sampling protocol collects Pauli terms which have large coefficients, and the result of \cref{lem:controlization_term_effect} tells us how to appropriately adjust the threshold for what we mean by ``large'' under controlization. In the bootstrap framework, recall that we learn the residual Hamiltonian, so make the notation change $\lambda_P \to r_P$ from \cref{lem:controlization_term_effect}. Also, from \cref{thm:amp_BE_residual} we have that $\alpha = 2\Delta$ and $Q_1 = \Ot(1/\eta)$. The algorithm only needs to find $P$ such that $|r_P| > \frac{1}{2}$, so we run the protocol with the threshold $w_P > \frac{1}{2} + \alpha\sqrt{Q_1 \gamma}$. This holds with $\gamma = \Theta(1/(\alpha^2 Q_1))$ with a sufficiently small constant, which implies $\tmin = \Omegat(\eta/\Delta^3) = \Omegat(\epsilon/m^3)$.

    Without the bootstrap framework, we have instead $\alpha = \Lambda\Delta = \O(\Delta^2/\epsilon)$ and a desired threshold $|\lambda_P| > \epsilon$. We can achieve this by taking $w_P > \epsilon + \alpha \sqrt{Q_1 \gamma}$, where $Q_1 = \O(\log(\epsilon/\Delta))$ (\cref{thm:matrix_log_H}). Hence it suffices to take $\gamma = \Theta(\epsilon^2/(\alpha^2 Q_1)) = \Thetat(\epsilon^4/\Delta^4)$, which results in a time resolution of $\tmin = \Omegat(\epsilon^4/\Delta^5)$.
\end{proof}

\section*{Acknowledgments}
\addcontentsline{toc}{section}{Acknowledgments}

AZ thanks Andrew Baczewski, Matthias Caro, John Kallaugher, Danial Motlagh, and Ojas Parekh for helpful discussions, as well as multiple anonymous reviewers for their thorough feedback. This work was supported by the Laboratory Directed Research and Development program at Sandia National Laboratories, under the Gil Herrera Fellowship in Quantum Information Science. Sandia National Laboratories is a multimission laboratory managed and operated by National Technology and Engineering Solutions of Sandia, LLC, a wholly owned subsidiary of Honeywell International, Inc., for the U.S.~Department of Energy’s National Nuclear Security Administration under contract DE-NA-0003525. AZ also acknowledges support from the U.S.~Department of Energy, Office of Science, Office of Advanced Scientific Computing Research, Accelerated Research in Quantum Computing.

\printbibliography
\addcontentsline{toc}{section}{References}

\appendix

\section{Copies to queries reduction}\label{sec:bernoulli_bound}

Let $p \in (0, 1]$ be the probability of preparing a state $\ket{\psi}$, say from querying a block encoding $B$ and measuring some of its ancilla qubits. As a series of Bernoulli trials, it is a standard result that $Q = \Theta(N/p)$ queries is necessary and sufficient to successfully prepare $N$ copies of the state. In this appendix we prove the statement for completeness, adapting the argument of \cite[Appendix B.6]{castaneda2023hamiltonian} to the appropriate level of generality for our purposes.

\begin{lemma}\label{lem:bernoulli_bound}
    Fix integers $Q \geq N \geq 1$.  Let $X_1, \ldots, X_Q$ be i.i.d.~Bernoulli random variables with $\Pr(X_i) = p \in (0, 1]$. Define $X = X_1 + \cdots + X_Q$. In order to observe the event $X \geq N$ with probability at least $1 - e^{-\Omega(N)}$,
    \begin{equation}
        Q = \Theta(N/p)
    \end{equation}
    trials are necessary and sufficient.
\end{lemma}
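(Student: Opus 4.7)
The plan is to establish the two directions of $Q = \Theta(N/p)$ via the multiplicative Chernoff bound applied to the sum $X = \sum_{i=1}^Q X_i$, which has mean $\mu = Qp$. Both the sufficiency and necessity statements follow from choosing the constant prefactor appropriately and comparing $\mu$ to the target threshold $N$.

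For the sufficiency direction, I would take $Q = \lceil 2N/p \rceil$, so that $\mu = Qp \geq 2N$. The lower-tail multiplicative Chernoff bound states that $\Pr(X \leq (1-\delta)\mu) \leq \exp(-\delta^2 \mu/2)$ for any $\delta \in (0,1)$. Applied with $\delta = 1/2$, this yields $\Pr(X < N) \leq \Pr(X \leq \mu/2) \leq \exp(-\mu/8) \leq \exp(-N/4)$, which is $1 - e^{-\Omega(N)}$ as claimed.

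For the necessity direction, I would show the contrapositive:~if $Q \leq N/(2p)$, then $\mu \leq N/2$, and the upper-tail multiplicative Chernoff bound $\Pr(X \geq (1+\delta)\mu) \leq \exp(-\delta^2 \mu/(2+\delta))$ applied with $\delta = 1$ gives $\Pr(X \geq N) \leq \Pr(X \geq 2\mu) \leq \exp(-\mu/3) \leq \exp(-N/6)$. Thus the event $X \geq N$ occurs with probability at most $e^{-\Omega(N)}$, which is far from the required $1 - e^{-\Omega(N)}$. Hence any $Q$ that achieves the target confidence must satisfy $Q = \Omega(N/p)$.

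There is no real obstacle here; the main care is just bookkeeping the constants so that both tail estimates give $e^{-\Omega(N)}$ with the same asymptotic dependence. One small subtlety worth noting is that the lemma implicitly requires $N$ to be at least a sufficiently large constant to absorb the $\Omega(N)$ into a meaningful confidence bound, but this is automatic since the statement is about asymptotic scaling in $N$.
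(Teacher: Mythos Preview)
Your approach is essentially the same as the paper's (both directions via multiplicative Chernoff on $X$ with mean $\mu = Qp$), and your sufficiency argument is clean and correct.

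There is one slip in the necessity direction. From $Q \leq N/(2p)$ you have $\mu \leq N/2$, and with $\delta = 1$ Chernoff gives $\Pr(X \geq 2\mu) \leq \exp(-\mu/3)$. But then you write $\exp(-\mu/3) \leq \exp(-N/6)$, which would require $\mu \geq N/2$; since you only have $\mu \leq N/2$, the inequality goes the wrong way (and indeed for very small $\mu$ the bound $\exp(-\mu/3)$ is close to $1$, so it is useless as stated). The fix is to apply Chernoff with $\delta = N/\mu - 1 \geq 1$ directly: then $\delta^2\mu/(2+\delta) \geq \delta\mu/3 = (N-\mu)/3 \geq N/6$, giving $\Pr(X \geq N) \leq \exp(-N/6)$ as you intended. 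Alternatively, Markov's inequality already gives $\Pr(X \geq N) \leq \mu/N \leq 1/2$, which suffices to rule out the target confidence $1 - e^{-\Omega(N)}$ for large $N$.

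As a point of comparison, the paper's own necessity argument is terser: it simply observes that the Chernoff parameter $t = 1 - (N-1)/(Qp)$ must be positive, forcing $Qp > N-1$. Your contrapositive argument via the upper tail (once corrected) is arguably a more direct justification.
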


\begin{proof}
    We invoke Chernoff's inequality to bound the tail probability for the event $X < N$. For any $t \in (0, 1)$, the multiplicative Chernoff bound states that
    \begin{equation}
        \Pr(X \leq (1 - t) \E[X]) \leq \exp\l( -\frac{t^2 \E[X]}{2} \r).
    \end{equation}
    Since $\E[X] = Qp$, we can set $t = 1 - \frac{N - 1}{Qp}$ to get
    \begin{equation}
        \Pr(X \leq N - 1) \leq \exp\l( -\frac{Qp}{2} \l( 1 - \frac{N}{Qp} \r{)^2} \r).
    \end{equation}
    This bounds the failure probability of the entire process, so set the rhs to be equal to $e^{-cN}$ for some constant $c$. Taking logs and rearranging yields
    \begin{equation}
    \begin{split}
        \log(e^{cN}) &= \frac{Qp}{2} - N + \frac{N^2}{2Qp}\\
        &\geq \frac{Qp}{2} - N.
    \end{split}
    \end{equation}
    Solving for $Q$, we find
    \begin{equation}
        Q \leq \frac{2(c + 1)N}{p} = \O(N/p)
    \end{equation}
    is sufficient. Conversely, observe that for $t > 0$ to hold, we must have $Qp > N - 1$, so $Q = \Omega(N/p)$ is also necessary.
    
    Finally, we remark that $t < 1$ only when $N \geq 2$, so let us consider the $N = 1$ case separately. It is clear that $\Pr(X \leq 0) = \Pr(X = 0) = (1 - p)^Q$. If $p = 1$ then the result is trivial;~meanwhile for each $p \in (0, 1)$, there exists some $c' > 1$ such that $1 - p \geq e^{-c'p}$. Combined with $1 - p \leq e^{-p}$, we get
    \begin{equation}
        \Pr(X = 0) = e^{-\Theta(p) Q} = e^{-c},
    \end{equation}
    which implies that $Q = \Theta(1/p)$ when $N = 1$ as well. Note that even when $N = 1$ we can choose $c = \log(1/\delta)$ to get some small failure probability $\delta$.
\end{proof}

\section{The galactic algorithm}\label{sec:galactic_algorithm}

Here we prove \cref{cor:galactic_algorithm}. Let us restate it below.

\begin{corollary}[Galactic structure learning without time reversal]
    There exists a quantum algorithm that solves \cref{prob:1} under the discrete control access model alone. For any real constant $p \geq 1$, the algorithm queries $e^{-\i H t}$ for a total evolution time of $\ttotal = \Ot((2^p \|H\|)^{1+2/p}/\epsilon^{2+2/p})$, requires a time resolution of $t \geq \tmin = \Omegat(\epsilon^4/(2^p \|H\|)^5)$, runs $\Nexp = \Ot((2^p \|H\|)^{2+2/p}/\epsilon^{2+2/p})$ experiments, and uses $\poly(n, m, 1/\epsilon)$ quantum and classical computation. The discrete control operations act on an ancillary computational register of $\nanc = n + \O(\log\log(\|H\|/\epsilon))$ qubits.
\end{corollary}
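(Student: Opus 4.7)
The plan is to repeat the proof of \cref{thm:tf_learning} almost verbatim, changing exactly one knob: in place of the canonical subnormalization $\Delta = 2\|H\|$, take $\Delta = 2^p \|H\|$ for the given constant $p \geq 1$. This shrinks $\|H/\Delta\|$ to $2^{-p}$, so $\|e^{-\i H/\Delta} - \I_n\| \leq 2^{-p}$, and \cref{lem:matrix_log_error} now applies with $r = 2^{-p}$. The geometric decay is therefore much faster: truncating the matrix-logarithm series at
\begin{equation}
    K = \lceil \log_2(1/\gamma)/p \rceil
\end{equation}
terms already gives operator-norm error $\gamma$. Feeding this $K$ into \cref{lem:matrix_log_LCU}, the $\ell_1$-norm of the LCU coefficients becomes
\begin{equation}
    \Lambda = \O(2^K) = \O((1/\gamma)^{1/p}),
\end{equation}
a polynomial-in-$\gamma^{-1}$ reduction relative to the $\Lambda = \O(1/\gamma)$ powering \cref{thm:tf_learning}. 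The block-encoding construction of \cref{thm:matrix_log_H} is otherwise unchanged; one just propagates this smaller $\Lambda$.

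From here I would follow \cref{sec:time_reversal_free} essentially line by line. Setting $\gamma = \Theta(\epsilon/\Delta)$ in \cref{eq:copy_complexity_Lambda2}, the shadow-tomography copy complexity becomes $\Nexp = \Ot(\Lambda^2 \Delta^2/\epsilon^2) = \Ot(\Delta^{2+2/p}/\epsilon^{2+2/p}) = \Ot((2^p\|H\|)^{2+2/p}/\epsilon^{2+2/p})$. Each block-encoding query uses $\O(K^2/\Delta) = \Ot(1/\Delta)$ evolution time, yielding $\ttotal = \Ot(\Nexp/\Delta) = \Ot((2^p\|H\|)^{1+2/p}/\epsilon^{2+2/p})$. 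The Bell-sampling structure-identification step of \cref{thm:positive-time_SL} inherits the same $\Lambda$-dependence and produces matching asymptotics, so the two stages combine without loss. The ancilla register has size $n + \lceil \log_2(K+1) \rceil$, and since $K = \O(\log(\|H\|/\epsilon)/p + 1)$, this stays at $n + \O(\log\log(\|H\|/\epsilon))$ for any constant $p$.

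Finally, the time-resolution bound comes from replaying the analysis of \cref{thm:controlization_time_resolution} with $\alpha = \Lambda\Delta$ and $Q_1 = \O(K)$. Tracking $\gamma_{\mathrm{ctrl}} = \Theta(\epsilon^2/(\alpha^2 Q_1))$ through the controlization step gives $\tmin = \Omegat(\epsilon^{2+2/p}/\Delta^{3+2/p})$ for the structure-identification stage; since this quantity is at least $\Omegat(\epsilon^4/\Delta^5) = \Omegat(\epsilon^4/(2^p\|H\|)^5)$ whenever $\epsilon \leq \Delta$, the uniform bound stated in the corollary is valid (if loose) for all $p \geq 1$. The real obstacle is just careful bookkeeping:~every $\O(1)$ constant in the proof of \cref{thm:tf_learning} is replaced by something exponential in $p$ through $\Delta = 2^p\|H\|$, so one must verify that the polylogarithmic overhead from $K = \Theta(\log(\Delta/\epsilon)/p)$ and from the controlization sequence always absorbs into the $\Ot$ rather than eating the improved exponents. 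This is precisely why the result is a ``galactic'' curiosity:~for any $p$ meaningfully larger than one, the $2^p$ constants swamp the polynomial gain in the $\|H\|/\epsilon$ dependence.
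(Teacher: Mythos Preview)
Your proposal is correct and follows essentially the same approach as the paper: the single modification is replacing $\Delta = 2\|H\|$ by $\Delta = 2^p\|H\|$, which shrinks $r$ to $2^{-p}$ in \cref{lem:matrix_log_error}, reduces the truncation level to $K = \Theta(\log(1/\gamma)/p)$, and hence shrinks $\Lambda$ to $\O((1/\gamma)^{1/p})$, after which the same cost accounting from \cref{sec:time_reversal_free} gives the stated complexities. Your treatment is in fact slightly more detailed than the paper's in two places---you explicitly track the ancilla count and you observe that the true time resolution $\Omegat(\epsilon^{2+2/p}/\Delta^{3+2/p})$ dominates the uniform bound $\Omegat(\epsilon^4/\Delta^5)$ stated in the corollary.
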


\begin{proof}
    The key modification to make is shrinking the bound $r$ from \cref{lem:matrix_log_error}. Let $p \geq 1$ be some real constant and for simplicity suppose $\Delta = 2^p \|H\|$. Then $\|e^{-\i H/\Delta} - \I_n\| \leq 2^{-p} \equiv r$, which by \cref{lem:matrix_log_error} means that
    \begin{equation}
    \begin{split}
        \|H/\Delta - L_K(U)\| &\leq \frac{2^{-p(K + 1)}}{(K + 1)(1 - 2^{-p})}\\
        &\leq C \cdot 2^{-p(K + 1)}),
    \end{split}
    \end{equation}
    where $C^{-1} = 2(1 - 2^{-p})$ is a constant. To make this error at most $\gamma$, we can take $K = \lceil \log_2(C/\gamma)/p \rceil - 1$. Taking $\gamma = \Theta(\epsilon/\Delta)$, we get
    \begin{equation}
    \begin{split}
        \Lambda^2 = \O(4^K) &= \O((C/\gamma)^{2/p})\\
        &= \O((\Delta/\epsilon)^{2/p})\\
        &= \O((2^p \|H\|/\epsilon)^{2/p}).
    \end{split}
    \end{equation}
    Then recall from \cref{thm:positive-time_SL,eq:copy_complexity_Lambda2} that the query complexity for either parameter or structure learning is $\O(\Lambda^2 \Delta^2/\epsilon^2)$, and the time resolution from \cref{thm:controlization_time_resolution} is $\Omegat(\epsilon^4/\Delta^5)$.
\end{proof}

\begin{remark}
    One may be concerned that $K = 0$ for large $p$, but this is not an issue:~$K \geq 1$ always holds because $\lim_{p\to\infty}\lceil \log(2^p)/p \rceil = 2$. On the other hand, this means that for very large $p$, we are essentially just constructing a block encoding of $\i(U - \I_n) = H/\Delta + \O(4^{-p})$, which is why this approximation is so accurate for just two terms of the matrix logarithm. Of course, the trade-off is that $1/\Delta$ suppresses the estimation of the parameters of $H$ by a factor of $2^{-p}$ as well, hence the blow-up in this constant.
\end{remark}

\end{document}